\newcommand\tr{\operatorname{tr}}
\newcommand{\be}{\begin{equation}}
\newcommand{\ee}{\end{equation}}
\newcommand{\bea}{\begin{eqnarray}\displaystyle}
\newcommand{\eea}{\end{eqnarray}}
\def\one{{\hbox{ 1\kern-.8mm l}}}
\def\zero{{\hbox{ 0\kern-1.5mm 0}}}
\def\mC{ \mathbb{C}} 
\def\Mult{ {\rm{Mult}} } 
\def\sha{ {\rm shallow} }
\def\dep{ { \rm deep } }
\def\Dim{ {\rm{Dim}}}
\def\Vmn{ V_N^{ \otimes m } \otimes \overline{V}_N^{~ \!\!\otimes n } }
\def\barVN{ \overline{V}_N } 
\def\barZ{ \overline{Z}}
\newtheorem{theorem}{Theorem}
\newtheorem{lemma}[theorem]{Lemma}
\newtheorem{example}[theorem]{Example}
\newtheorem{proposition}[theorem]{Proposition}
\newtheorem{definition}[theorem]{Definition}
\newtheorem{observation}[theorem]{Observation}
\newtheorem{comment}[theorem]{Comment}
\def\BDB{\hbox{\rm BDB}}
\def\CBDB{ \hbox{\rm CBDB} } 
\def\End{ {\rm End} } 
\def\BRT{{\rm BRT}} 
\def\bfUnmod{ { \rm \bf Unmod}} 
\def\bfMod{ { \rm \bf Mod}} 
\def\Unmod{ { \rm Unmod} } 
\def\Mod{ { \rm Mod}} 
\def\Excl{ {\rm Excl } } 
\def\cA{{\cal A}}  
\def\cG{{\cal G}}
  \def\cR{{\cal R}}
 \def\cZ{{\cal Z}}
\title{ \hfill { \normalsize QMUL-PH-25-18} \\[1em]   {\Large { \bf Simple harmonic oscillators from non-semisimple walled Brauer algebras } } }
\author{
	Sanjaye Ramgoolam$^{1}$\thanks{\href{mailto:s.ramgoolam@qmul.ac.uk}{s.ramgoolam@qmul.ac.uk}}, 
	Micha{\l} Studzi{\'n}ski$^{2}$\thanks{\href{mailto:michal.studzinski@ug.edu.pl}{michal.studzinski@ug.edu.pl}}\\[1em]
	\normalsize $^{1}$Centre for Theoretical Physics, Department of Physics and Astronomy,\\
	\normalsize  Queen Mary University of London, UK\\
	\normalsize $^{2}$International Centre for Theory of Quantum Technologies, \\ 
	\normalsize  University of Gdańsk, Poland
}
\begin{document}

\date{} 

\maketitle
\vspace{-2.0em} 
\begin{center} 
{ \bf ABSTRACT  } 
\end{center} 

\vspace{-0.5em} 

\normalsize{ 
Walled Brauer algebras $B_N ( m , n ) $ illuminate the combinatorics of mixed tensor representations of $U(N)$, with $m$ copies of the fundamental and $n$ copies of the anti-fundamental representation.  They lie at the intersection of  research in representation theory, AdS/CFT and quantum information theory. They have been used to study of correlators in multi-matrix models 
 motivated by  brane-anti-brane physics  in  AdS/CFT. They have  been applied in computing and optimising fidelities of port-based quantum  teleportation. There is a large $N$ regime, specifically $ N \ge  (m+n)$ where the algebras are semi-simple and their representation theory more tractable.  There are known combinatorial formulae for  dimensions of irreducible representations and associated reduction multiplicities.  The large $N$ regime has a stability property whereby these formulae are independent of $N$.   In this paper we initiate a systematic study of the combinatorics in the non-semisimple regime of $ N = m +n - l $, with positive $l$. We introduce restricted Bratteli diagrams (RBD)  which are useful as an instrument to process  known  data from the large $N$ regime to calculate representation theory data in the non-semisimple regime. We identify within the non-semisimple regime,  a region of $(m,n)$-stability, where $ \min ( m, n ) \ge ( 2l -3)  $ and the RBD  take a stable form depending on $l$ only and not the choice of $ m,n$ within the region. In this  regime, several aspects of the combinatorics of the RBD  are controlled by a  universal partition function for an infinite  tower of simple harmonic oscillators closely related, but not identical, to the partition function of  2D non-chiral free  scalar field theory. 
 } 
     
\newpage 

\tableofcontents

\section{Introduction}
\label{sec:intro}
Walled Brauer algebras have been  actively studied in connection with the representation  theory of unitary groups $U(N)$ or general linear groups $GL(N)$ in mixed tensor spaces, starting from  \cite{KOIKE198957}\cite{BENKART1994529}. These algebras and their $q$-deformations have applications in knot theory \cite{VGTuraev_1990}. They also have a wide range of applications in theoretical physics spanning quantum field theory, string theory and quantum information theory. In the context of gauge-string duality, they have  been used to find orthogonal bases of operators in  matrix quantum mechanics  \cite{KR,EHS}, with motivations coming from brane-anti-brane systems in the AdS/CFT correspondence \cite{malda,witten,gkp}. They have been used to study the one-loop dilatation operator in the quarter-BPS sector of $ N=4$ super-Yang-Mills theory \cite{KimuraQuarter} and the map to quarter-BPS geometries in the AdS dual \cite{KimuraLin}.
 Recently the Walled Brauer algebras has found applications to rapidly developing quantum information science in the context of so called mixed Schur-Weyl duality \cite{WBA2024,grinko2023gelfandtsetlinbasispartiallytransposed}. Its efficient implementation via quantum circuits~\cite{nguyen2023mixedschurtransformefficient} has led to  applications in quantum transmission protocols~\cite{grinko2024efficientquantumcircuitsportbased,fei2023efficientquantumalgorithmportbased,PRXQuantum.5.030354}, particularly in scenarios exhibiting underlying symmetries~\cite{grinko2025phd}. Among other applications in quantum information theory,  we mention here quantum teleportation~\cite{studzinski2017port,StudzinskiIEEE22}, higher-order quantum operations~\cite{Quintino2022deterministic}, quantum sampling problems~\cite{marcinska24} or symmetry reduction in semi-definite programs~\cite{Grinko2024Lin}.

\vskip.2cm 

 The representation  theory of mixed tensor space $ \Vmn$, where $V_N$ is the fundamental representation  of $U(N)$ and $ \barVN $  is the complex conjugate of the fundamental representation and $m,n$ are positive integers, is of interest in matrix theory and gauge-string duality as well as quantum information theory. 
Unitary matrices act as $ U^{ \otimes m } \otimes \overline{U}^{ \otimes n } $ and the decomposition into irreducible representations of $U(N)$  is related  by the mixed-tensor generalisation of Schur-Weyl duality to the representation theory of the walled Brauer algebra,  $B_N ( m , n )$. In the matrix theory context, $U(N)$ is a gauge symmetry and polynomial gauge invariant functions of a complex matrix $Z$  with degree $m$ in $Z $ and degree  $n$ in $ \barZ$ are constructed by composing $ Z^{ \otimes m } \otimes \barZ^{ \otimes n } $ with Brauer algebra elements and taking a trace \cite{KR}. In the quantum teleportation context, $N$ is the dimension of a Hilbert space $H$; Alice and Bob share  $m$ entangled pairs in $(H \otimes H)^{ \otimes m } $ and they use this as a resource to teleport   states in $ H^{\otimes n} $. The fidelity of quantum teleportation is expressed in terms of an appropriate trace of an element in $B_N ( m , n ) $~\cite{ishizaka_asymptotic_2008, studzinski2017port,mozrzymas2021optimal}.  

\vskip.2cm 

The representation theory of $U(N)$ in mixed tensor space, and of the dual algebra $B_N ( m , n  ) $ has a well-understood large $N$ regime, namely where $ N \ge  (m+n)$ and the algebra is semi-simple. There is  a more subtle regime, namely $ N < ( m+n)$  where $B_N ( m , n ) $ is non-semi-simple. The irreducible 
representations of $B_N ( m , n ) $ in the large $N$ regime are labelled by triples 
$ ( k , \gamma_+ , \gamma_- ) = \gamma $, where $  0 \le k \le \min ( m,n) $, $\gamma_+ $ is a partition of $ (m-k) $ and $ \gamma_- $ is a partition of $ (n-k)$. We refer to these triples 
as Brauer representation triples (abbreviated \BRT)  and the set of these triples for  fixed $(m,n)$ is independent of $N$ and is denoted $\BRT (m,n)$. For any $N$, there is a map $ \Gamma : ( \gamma, N ) \rightarrow \Gamma ( \gamma , N ) $ where $ \Gamma ( \gamma , N ) $ is a  mixed Young diagram with exactly $N $ rows, and with positive rows determined $ \gamma_+ $ and negative rows determined by $ \gamma_- $. 
 The explicit map is given in equation \eqref{mixedYD}. These row lengths determine the highest weight of the $U(N)$ irrep, and we will use $\Gamma ( \gamma , N ) $ to refer to the mixed Young diagram or the corresponding highest weight.  The decomposition of mixed tensor space, $ V_{ N }^{ \otimes m } \otimes \overline {V}_N^{~ \otimes n } $  into  irreducible representations $ V_{ \gamma }^{U(N)}  $  of $U(N)$ with highest weight $ \Gamma ( \gamma , N) $ has, in the semi-simple regime,  multiplicities  $ \Mult (V_{ \gamma }^{U(N)}  ) $  which are  equal to  dimensions $ d_{ m,n} ( \gamma ) $ of Brauer algebras $B_N ( m,n) $. Importantly these multiplicities/dimensions $ d_{ m,n} ( \gamma ) $  
are independent of $N$, which is referred to as a large $N$ stability property,
\begin{align}  
 & \hbox{ large  $N$-stability  for  }  N\ge (m+n)  : \cr 
 & ~~~~~ \Mult (V_{ \gamma }^{U(N)}  ) = d_{ m,n} ( \gamma )     \, . 
\end{align} 

The formula for $ d_{ m,n} ( \gamma ) $  is  \eqref{eq:dimWBA}.
In the non-semi-simple regime,  the multiplicities are  dimensions of irreducible representations of a semi-simple quotient $ \widehat B_N ( m,n)$ of $ B_N ( m,n)$. These multiplicities may in general have $N$-dependence, 
\begin{align} 
   \hbox{ Semi-simple}  & \hbox { quotient $ \widehat B_N ( m,n)$ for } ~~ N < (m+n) :  \cr 
 & \Mult ( V_{ \gamma }^{U(N)}   ) = \widehat{ d }_{ m,n ,  N } ( \gamma )   \, . 
\end{align} 
 At any given $ ( m,n,N)$ in the non-semi-simple regime, some subset of the Brauer triples for the specified $(m,n)$  will have $ \widehat{d}_{ m,n , N } ( \gamma )  = d_{ m,n } ( \gamma ) $, i,e. dimensions unmodified from the large $N$ regime. The complement will have 
 a modified dimension  $ \widehat{d}_{ m,n , N } ( \gamma ) $ which is smaller than $ d_{ m,n} ( \gamma )$.

This is reviewed with more technical detail in Section \ref{Sec:l=2} and key points are summarised  along with the terminology of this paper in Section \ref{secterm}. Finding general  formulae 
for  $ \widehat {d }_{  m , n, N } $ is an interesting open problem, which is one technical motivation for this paper. The main results in this paper are :   

\begin{enumerate}

\item For the non-semi-simple regime with general $m,n$ and  $ N = m+n-l$ with 
	 $ 1 \leq l \leq 4 $, we calculate the modified multiplicities/dimensions 
	 $ \widehat{ d }_{  m , n, N } $. The calculations are explained  and the results presented in section \ref{sec:DimExamples}.

\item As a tool for calculating these multiplicities, we introduce the notion of restricted Bratteli diagrams (RBD). These are graphs where the vertices are organised in layers labelled by a depth $d$ ranging over $ 0 \le d \le l-1 $. Their vertices are of two types, which we refer to as red and green nodes. The green nodes at $ d=0$ are associated with irreps of $U(N)$ appearing 
in the decomposition of $ \Vmn$  which have modified multiplicities compared to the large $N$ regime. The red nodes appear at depths $  1 \le d \le l $ 
and admit paths connecting them to green nodes at $ d=0$. Further description and examples of the RBD are given Sections \ref{sec:introRBDB}\ref{sec:DimExamples} and key properties are summarised in Section \ref{secterm}.

\item We show that the restricted Bratteli diagrams for $(m,n, l ) $ are independent of $(m,n)$ in the range $ m,n \ge ( 2l-3)$. We refer to this property as degree-stability, or $(m,n)$-stability. The degree-stability terminology is based on the analogy to the independence of $N$ in the representation theory of $ B_N ( m,n) $ for large enough $N$.  It is also motivated by the matrix theory application to gauge invariant polynomials \cite{KR}.  A consequence is that the 
separation of the set of irreps of $ B_N ( m, n )$ into those with modified and unmodified dimensions is independent of $(m,n)$ when $ m , n \ge (2l-3)$. This result is  in Section \ref{mnstability}. 

\item We show that the counting of red and green nodes in the restricted Bratteli diagram for 
$B_N( m , n)$ as a function of depth $d$, in the $(m,n)$-stable regime, is expressible in terms of an integer sequence and associated  generating function  $ Z_{ \rm univ } ( x ) $ (equation \eqref{genfunOsc})  which has a simple interpretation in terms of an infinite family of harmonic oscillators. This sequence is, somewhat surprisingly to us, already recorded in OEIS as A000714. The  connections to the representation theory of non-semisimple walled Brauer algebras are, as far as we know, novel. The links between this partition function and the counting of red and green nodes,  in the restricted Bratteli diagrams in $(m,n)$-stable regime, are developed in Sections \ref{sec:derSHO} and  \ref{sec:greennodes}. The key results  are in equations 
\eqref{ResDeepUniv} \eqref{RedsOscSum} \eqref{GreensUniv} \eqref{GreensUniv1}.  The generating function $ Z_{ \rm univ } ( x ) $ is close in form  to the partition function of a scalar field in two dimensions, which has been discussed in connection with low-dimensional large $N$ gauge-string duality (e.g. equation 
(3.3) in \cite{Douglas1993}), but it has the additional factors $ { 1 \over ( 1 - x) ( 1- x^2) } $. 
	   
\end{enumerate}

\vskip.2cm 

The paper is organised as follows. Section \ref{secterm} gives a summary of the key background and new concepts introduced in this paper, with the associated notation and terminology. We invite the reader to proceed from this introduction to the sections of interest, and to use Section \ref{secterm}  as a reference as needed. In section \ref{Sec:l=2} we give a technical description of the motivations we have outlined above. The Brauer algebras $B_N ( m , n ) $ are represented using  a map  $\rho_{ N  , m , n } $ to 
linear operators acting on mixed tensor space $ \Vmn $. The vector space of these linear operators is $ \End ( \Vmn) $.   The  image of this map is the commutant of, i.e.  the sub-algebra of $\End ( \Vmn) $ which commutes with, the operators representing $U(N)$ in the mixed tensor space. In the  regime of $ N < ( m+n)$, the map has a non-trivial kernel.  There is an analogous relation between the group algebra $ \mC ( S_{ m+n } ) $ of the symmetric group of permutations of $(m+n) $ distinct objects and the action of $U(N)$ on $V_N^{ \otimes (m+n)} $. 
In this case there is a map $ \rho_{ N , m+n } $ which also has a kernel for $ N < (m+n)$. 
In section \ref{sec:relkern} we explain that there is a very useful relation between the Kernels of these maps, given by an operation of partial transposition which has been useful both in  the matrix theory context \cite{KR,KRHol} and the quantum teleportation context \cite{Moz1,Yongzhang2021permutation}. 
Section \ref{sec:introRBDB} defines the restricted Bratteli diagrams (RBD), which have two types of nodes labelled red and green in our convention. Section \ref{sec:DimExamples} use the RBD to calculate the modified dimensions of irreducible representations of $B_{N } ( m  , n ) $, with $ N = m+n -l$ for examples with $ l \in \{ 1, 2, 3, 4 \} $, with general $(m,n)$. In section  \ref{Sec:countingReds} we study the distribution of red nodes in the RBD. We show
 that the RBD has $l$ layers, which we label with a depth variable $d$ in the range  $  0 \le  d \le (l-1) $. We establish the $(m,n)$-stability result and we give a counting formula for 
 $ \cR ( l, d ) $, the number of red nodes as a function of $ l $ and $d$ in the $(m,n)$ stable regime. In section \ref{sec:derSHO}  we establish relations between the counting of red nodes in  the $(m,n)$-stable regime and the oscillator partition function $ \cZ_{ \rm univ } (x) $ in equation \eqref{Zuniv}. In section \ref{sec:greennodes}, we prove  the equation \eqref{GreensUniv}  relating the counting of green nodes at $ d =0$ to $Z_{ \rm univ} ( x ) $ and give a simple argument to extend this to general $d$ in  \eqref{GreensUniv1}. 

\vskip.2cm

We hope  that this paper will be of interest  to mathematicians and theoretical physicists, particularly researchers working in AdS/CFT  and related models of gauge-string duality as well as researchers in quantum information theory. While the mathematical results are rigorous and the derivations complete, they are presented in informal physics style. The mathematica code which is used to construct the  RBD,  which motivated several of the mathematical results we prove and which can serve as useful source of examples for the reader, is described in the Appendix A and is available alongside the arxiv version~\cite{RamgoolamStudzinski2025WBAcode}. In Appendix B, we perform checks of the modified dimensions $ \widehat{d}_{ m,n,N} $ calculated in section \ref{sec:DimExamples} by directly verifying, for small values of $m,n$, the identity for dimensions of mixed tensor space which follows from its decomposition into irreducible representations of $ \widehat{B}_{ m,n, N } $. Appendix C contains some additional figures, restricted Bratteli diagrams, which are useful for the calculations of modified dimensions in section \ref{sec:DimExamples}.

\section{Key concepts and terminology }\label{secterm}

\vskip.2cm 
\noindent 
{\bf Irreducible representations} will be abbreviated as irreps. 

\vskip.2cm 
\noindent 
{\bf Partitions and Young diagrams} \\ 
A partition $\mu $  of a positive integer $n$ is a sequence of positive integers 
$ [ r_1 (\mu )    , r_2 ( \mu )  , \cdots , r_{ h} ( \mu )   ] $, or more briefly
 $ [ r_1     , r_2   , \cdots , r_{ h}   ] $, summing up to $n$ and listed according to $r_i \ge r_{ i+1} $.  We write $ \mu \vdash n $. We associate a Young diagram to $ \mu $, where $r_i $ are the row lengths.

\vskip.2cm 
\noindent 
{\bf Mixed tensor space }\\ $V_N$ is the fundamental or defining  representation of the unitary group  $U(N)$, while $ \overline{ V}_N$ is the anti-fundamental or complex conjugate representation of $V_N$. Mixed tensor space parameterised by $m,n,N$ is the tensor product $\Vmn $. Unitary group elements $U \in U(N)$ act as $ U^{ \otimes m } \otimes \overline{U}^{ \otimes n } $ on the tensor product.

\vskip.2cm 
\noindent 
{\bf Walled Brauer algebra $B_{N} ( m , n ) $ } \\
A diagram algebra with basis given by a set of diagrams. The diagrams have two rows, each with $(m+n)$ nodes. Each node has one incident line. It is useful to visualise a wall separating the first $m$ nodes from the subsequent $n$ nodes, and the rules of construction of the diagrams employ the separating wall between the $m$ and  the $n$ nodes on each row. The complete definition and examples are given at the start of section \ref{Sec:l=2}. 

\vskip.2cm 
\noindent 
{\bf Mixed tensor space as a representation of $B_N ( m , n) $ } \\ 
There is a homomorphism from $ \rho_{ N ,  m , n }  : B_N ( m , n )  \rightarrow \End ( \Vmn  ) $, defined by associating the lines of the Brauer diagrams to Kronecker delta-functions in tensor indices. This is described in section \ref{Sec:l=2}. The kernel of the homomorphism is studied in section~\ref{sec:relkern}.  

\vskip.2cm 
\noindent 
{\bf Walled Brauer diagram space $ B ( m , n ) $ } \\ The vector space with diagrams as basis, underlying the algebras $B_N ( m, n ) $ for all $N$. We use this notion in  section \ref{sec:relkern}.

\vskip.2cm
\noindent  
{\bf Partial transposition  $P^t_{ m,n} $ } \\ 
It is a map, which squares to one, and takes elements of $ \mC ( S_{ m+n} ) $ to $ B ( m , n ) $ and vice-versa. See equation \eqref{eq:mapPt}.

\vskip.2cm
\noindent  
{\bf The semisimple regime of parameters, also called the large $N$ regime. } \\ 
In the regime $ N \ge  ( m +n ) $, the sub-algebra of the $ \End ( \Vmn  ) $ commuting with $ U^{ \otimes m } \otimes \overline{U}^{ \otimes n } $ is  
$\rho_{ N , m , n } ( B_N ( m , n ) )  $. 

\vskip.2cm 
\noindent
{\bf The non-semisimple regime of parameters } \\ 
In the regime $ N  < (m+n)$,  the algebra $B_{N} ( m , n ) $ is non-semisimple. The homomorphism 
$ \rho_{ N , m , n } $  has a non-trivial kernel  $I_N ( m , n ) $. The quotient $ \widehat B_{N} ( m , n ) = B_{N} ( m , n )/I_N ( m , n ) $ is semisimple and is isomorphic to the commutant of $ U^{ \otimes m } \otimes \bar U^{ \otimes n } $ in $ \End ( V_N^{ \otimes m } \otimes \bar V_N^{ \otimes n } ) $, denoted by $\cA^N_{ m,n}  $. 

\vskip.2cm
\noindent  
{\bf Symmetric group algebra and tensor space } \\ 
For $U(N)$ acting on $ V_{N}^{ \otimes m +n } $, the commutant is  the image under a map $ \rho_{ N, m+n } $ from the group algebra $ \mC ( S_{ m+n} ) $ of the symmetric group $  S_{ m+n} $  of permutations of $\{ 1, 2, \cdots , m+n \} $ to  $ \End ( V_N^{ \otimes (n+n)} )  $.  Section~\ref{sec:relkern} gives an isomorphism between the kernels of $\rho_{ N , m+n }  $  and $ \rho_{ N, m,n} $.

\vskip.2cm
\noindent 
{\bf Brauer representation triples $  ( k , \gamma_+ , \gamma_- )  $  for the walled-Brauer pair $( m , n ) $ } \\ 
$k$ is an integer obeying $  0 \le  k \le \min ( m  , n ) $, $ \gamma_+ $ is a Young diagram with $m-k $ boxes and $ \gamma_- $ is  a Young diagram with $(n-k) $ boxes. Also abbreviated as 
Brauer triples. We will frequently use the abbreviation $ \gamma = ( k  , \gamma_+ , \gamma_- ) $. See the discussion around equation~\eqref{eq:WBAlabelling} in section~\ref{Sec:l=2} for further details.

\vskip.2cm 
\noindent 
{\bf Brauer  triples and the representations of $B_N ( m,n)$ in the large $N$ regime  } \\ 
 In the large $N$  regime $ N \ge  ( m +n ) $, the irreducible representations of $B_N ( m , n ) $ are in 1-1 correspondence with triples $ ( k , \gamma_+ , \gamma_- )$. This is described further in section~\ref{Sec:l=2}.
 
 \vskip.2cm 
 \noindent 
 {\bf First column lengths (Heights)  }\\ The first column lengths of $ \gamma_{\pm }\in \{\gamma_+,\gamma_-\} $ are denoted 
 $c_1 ( \gamma_{ \pm} ) $. We also refer to these as the heights of $ \gamma_{ \pm } $, and we define $ {\rm ht } ( \gamma ) = c_1 ( \gamma_{ + } ) + c_1 ( \gamma_{ - } )$.

\vskip.2cm 
\noindent 
{\bf Semisimple quotient } \\ 
$ \widehat {B}_{ N } ( m,n)   := B_N ( m , n ) / I_N ( m  , n ) $. $I_N (m,n) $ is the kernel of the map $ \rho_{ N , m , n } $. The quotient is the image in $ \End ( V_N^{ \otimes m } \otimes \bar V_N^{ \otimes n } ) $ of $ \rho_{ N , m , n } $ and is isomorphic to the sub-algebra of 
 $ \End ( \Vmn  )$ which commutes with $U^{ \otimes m } \otimes \overline {U}^{ \otimes n } $. 
 This commutant algebra is  denoted  as   $\mathcal{A}^N_{m,n}$, which is the algebra of partially transposed permutation operators in the terminology  of the quantum information literature.  See further discussion around \eqref{AquotB}. 

\vskip.2cm 
\noindent 
{\bf Irreducible representations of $ \widehat B_N ( m  , n ) = \cA^N_{ m,n}  $ and admissible Brauer  triples  for $B_N ( m , n ) $ }.\\
 The irreps of $ \widehat B_N ( m  , n ) $ are labelled by admissible triples which obey $ c_1 ( \gamma_+ ) + c_1 ( \gamma_- ) \le N $.

\vskip.2cm
\noindent  
{\bf  $\bfUnmod ( N , m , n )      $: Unmodified set of Brauer triples for $ ( N , m , n ) $.  }\\
This is the set of  admissible Brauer triples $ \gamma = ( k , \gamma_+ , \gamma_- ) $ for $ ( N , m , n ) $ which have dimensions, as representations of $  \widehat B_N ( m , n )= \cA^N_{ m,n} $, which are identical to the stable range dimension formula $ \dim_{ m , n } ( \gamma ) $.

\vskip.2cm 
\noindent 
{\bf  $\bfMod ( N , m , n )      $: Modified set of Brauer triples for $ ( N , m , n ) $.  } \\
This is the set of  admissible Brauer triples $ \gamma = ( k , \gamma_+ , \gamma_- ) $ for $ ( N , m , n ) $ which have dimensions, as representations of $  \widehat B_N ( m , n ) $, which are  
  $ \widehat{d}_{ m,n,N } ( \gamma ) = d_{ m,n} ( \gamma ) - \delta_{ m,n,N} $ with positive
  $\delta_{ m,n,N}$.

\vskip.2cm
\noindent  
{\bf Mixed Young diagram for $ B_N ( m , n )  $  } \\ 
Admissible Brauer representation triples  $ ( k , \gamma_+ , \gamma_- )$ for $ B_N ( m , n ) $ can be presented as mixed Young diagrams with $N$ rows. See the explanations around  \eqref{mixedYD}.

\vskip.2cm 
\noindent 
{\bf Bratteli diagram  for  walled-Brauer pair $ ( m , n ) $ as a layered graph} \\ 
Can be used to compute dimensions of irreps of the Brauer algebra $B_N (m,n)$ in stable large $N$ regime. A more detailed description is after \eqref{eq:dimWBA}.

\vskip.2cm 
\noindent 
{\bf Bratteli moves } \\ 
 As $L$ is increased from $0$ to $m$, a Bratteli move is the addition of a box to $ \gamma_+$ which produces a  valid Young diagram (weakly increasing row lengths). As $L $ is increased from $ (  m+1 ) $ to $(m+n)$, a Bratteli move is either the addition of a box to $ \gamma_-$  or the removal of a box from $ \gamma_+ $.

\vskip.2cm 
\noindent 
{\bf Large-$N$ stability or Rank-stability } $N$ is the rank of $U(N)$, hence the terminology rank-stability. With a view to applications in quantum information where $N$  is called $d$ this may also be called qdit-stability. This refers to the fact that representation theory data for $B_N ( m , n ) $ is indepdendent of $N$ for $ N \ge  (m+n)$. Examples of such representation theory data are dimensions of irreps labelled by Brauer triples $ (k, \gamma_+  , \gamma_- )$ (e.g. \eqref{eq:dimWBA}), decomposition multiplicities of irreps of $ B_N ( m , n ) $ into irreps of the sub-algebra $ \mC ( S_{ m) } \otimes \mC ( S_{ n) }  $. These formulae for the  $N$-stable regime are available in \cite{BENKART1994529} \cite{KOIKE198957}.

 \vskip.2cm 
 \noindent 
 {\bf  Young diagrams with first columns removed, $ ( \gamma_+ \setminus c_1 )$  and $ ( \gamma_ - \setminus c_1 )$  }  \\ 
The respective numbers of boxes are denoted as $ | \gamma_+ \setminus c_1 | $ and  $ | \gamma_- \setminus c_1 | $. This notation is used in  sections~\ref{Sec:countingReds},~\ref{sec:derSHO}, and~\ref{sec:greennodes}.

\vskip.2cm 
\noindent 
 {\bf  Young diagrams with first two columns removed, $ ( \gamma_+ \setminus  \{ c_1  , c_2 \} )$  and $ ( \gamma_ - \setminus \{ c_1 , c_2 \} )$ } \\ 
 The respective numbers of boxes are 
 $ | \gamma_+ \setminus  \{ c_1  , c_2 \}  |  $ and $  | \gamma_ - \setminus \{ c_1 , c_2 \} |  $. This notation is  used in section~\ref{mnstability}, where we study the stability region of RBDB$_N(m,n)$-diagrams.

\vskip.2cm 
\noindent
{\bf Coloured Bratteli diagrams (CBD)  for $B_N(m,n)$ } \\
Defined for the non-semisimple regime $ N < (m+n)$. See Definition \ref{CBDBNmn}.

\vskip.2cm 
\noindent
{\bf Level and depth in Bratteli diagrams } \\
The Bratteli diagram for the walled-Brauer pair $ ( m ,n ) $  has layers labelled by levels $  0 \le L \le (m+n)$. 
We define depth $d$ by the equation $d = m+n - L $. The depths for $\BDB (m,n) $ thus range from $0$ to $(m+n)$. The depths for $ \BDB (m , n ) $ in the $(m,n)$-stable regime ($m,n \ge (2l-3)$) range over $ 0 \le d \le (l-1) $. This notion is used in combinatorial considerations when counting number of red and green nodes among section~\ref{Sec:countingReds} and~\ref{sec:greennodes}.

\vskip.2cm 
\noindent 
{\bf Restricted Bratteli diagram (RBD) for  $B_N(m,n)$ } \\
A restriction of the coloured Bratteli diagrams defined by keeping at $ d=0$  only the nodes for 
Brauer representations of $ B_N (m,n)$ which have modified dimensions, and only the red  nodes $  d \ge 1 $  which link to the green nodes at $d=0$ along with any intermediate green nodes that appear in the paths from the red nodes to the greens at $ d=0$. 
See further description  in  \ref{DefRBDB}. The RBD are useful in calculating the modified dimensions $ \widehat{d}_{ m,n, N }  $ using as input the dimensions $ d_{ m,n } $ from the stable large $N$ regime. These calculations are in  section \ref{sec:DimExamples}. 

\vskip.2cm 
\noindent 
{\bf Admissible Brauer representation triples for $B_N(m,n)$  : green nodes } \\
The nodes associated with  admissible Brauer triples, with $ c_1 ( \gamma_+ ) + c_1  ( \gamma_- ) \le N $,  are coloured green in the CBD or the RBD of  $B_N(m,n)$. The problem of counting green nodes in the RBD is addressed in Section~\ref{sec:greennodes}.

\vskip.2cm 
\noindent 
{\bf Excluded  Brauer representation triples for $B_N(m,n)$ : red nodes } \\
The nodes associated with excluded triples,  $ c_1 ( \gamma_+ ) + c_1  ( \gamma_- ) >  N $
are coloured red in the CBD or RBD  $B_N(m,n)$  or  $B_N(m,n)$. The problem of counting red nodes is addressed in section~\ref{Sec:countingReds}. In section~\ref{sec:derSHO} we connect the counting problem with simple harmonic oscillators.

\vskip.2cm 
\noindent 
{\bf Degree-stability or  $(m,n)$-stability } \\ 
 For $ N  = (m+n-l)$, when $ m , n \ge (2l-3)$, the restricted Bratteli  diagram of $B_N ( m , n ) $ is independent of $m,n$. This stability property is explained with examples ad proved in section \ref{mnstability}.

\section{Technical background and motivation}
\label{Sec:l=2}
The walled Brauer algebra $B_{N}(m,n)$, where $m,n\geq 0$, and $N \in \mathbb{C}$, was introduced and studied  in~\cite{VGTuraev_1990,KOIKE198957,BENKART1994529,BEN96} and has been the subject of subsequent developments in representation theory, see e.g. for recent mathematical literature \cite{bulgakova:tel-02554375,Cox1}.  The abstract algebra $B_{N}(m,n)$ is composed of formal combinations of diagrams. Each diagram has two rows with $m+n$ nodes, associated with a vertical wall between the first $m$ and the last $n$ nodes. These nodes are connected up in pairs in such a way that:
\begin{enumerate}
	\item if the two nodes are in the same row, they must lie on different sides of the wall,
	\item if the two nodes are in different rows, they must lie on the same side of the wall.
\end{enumerate}
Notice that the dimension of $B_{N}(m,n)$ agrees with the number of elements in $S_{m+n}$ which is $(m+n)!$.
We illustrate the above construction with the notion of composition of such diagrams in Figure~\ref{fig:WBA}. 
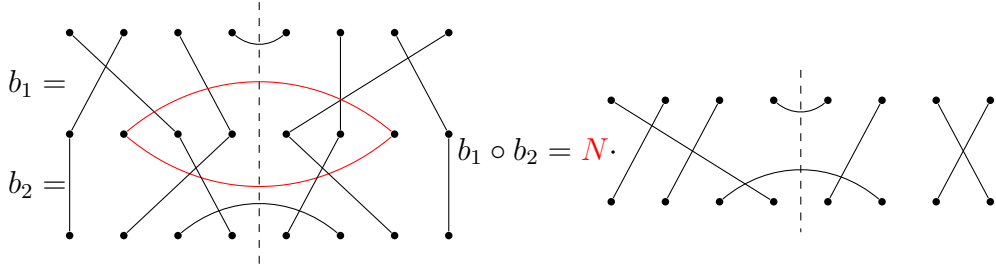
\begin{figure}[h!]
	\centering
	\begin{tikzpicture}[scale=0.9, every node/.style={inner sep=2pt}]
	\def\n{8}
	\def\sep{0.8}
	\def\h{1.5}
	
	\foreach \i in {1,...,8} {
		\node[circle, fill=black, inner sep=1pt] (s\i) at ({\i*\sep}, {2*\h}) {};
		\node[circle, fill=black, inner sep=1pt] (sp\i) at ({\i*\sep}, {\h}) {};
	}
	
	\foreach \i in {1,...,8} {
		\node[circle, fill=black, inner sep=1pt] (pp\i) at ({\i*\sep}, {0}) {};
	}
	
	\node at (0.5*\sep, 1.5*\h) {$
		b_1 = \ $};
	\node at (0.5*\sep, 0.5*\h) {$
		b_2 = \ $};
	
	\draw[dashed] (4.5*\sep, {2.3*\h}) -- (4.5*\sep, {-0.3*\h});
	
	\draw (s1) -- (sp3);
	\draw (s2) -- (sp1);
	\draw (s3) -- (sp4);
	\draw (s4) to[bend right=40] (s5);
	\draw (s6) -- (sp6);
	\draw (s7) -- (sp8);
	\draw (s8) -- (sp5);
	\draw[red] (sp2) to[bend left=40] (sp7);
	
	\draw (sp1) -- (pp1);
	\draw[red] (sp2) to[bend right=40] (sp7);
	\draw (sp3) -- (pp4); 
	\draw (sp4) -- (pp2);
	\draw (pp3) to[bend left=40] (pp6);
	\draw (sp5) -- (pp7);
	\draw (sp6) -- (pp5);
	\draw (sp8) -- (pp8);
	
	\def\compShift{-1.0}
	
	\foreach \i in {1,...,8} {
		\node[circle, fill=black, inner sep=1pt] (c\i) at ({(\i+10)*\sep}, {2*\h+ \compShift}) {};
		\node[circle, fill=black, inner sep=1pt] (cp\i) at ({(\i+10)*\sep}, {\h+ \compShift}) {};
	}
	
	\draw[dashed] (14.5*\sep, {2.3*\h+ \compShift}) -- (14.5*\sep, {0.7*\h+ \compShift});
	\node at (10.5*\sep, 1.5*\h+ \compShift) {$
		b_1 \circ b_2 =\textcolor{red}{N} \cdot \qquad \quad $};
	
	\draw (c1) -- (cp4);
	\draw (c2) -- (cp1);
	\draw (c3)-- (cp2);
	\draw (c4) to[bend right=40] (c5);
	\draw (c6) -- (cp5);
	\draw (c7) -- (cp8);
	\draw (c8) -- (cp7);
	\draw (cp3) to[bend left=40] (cp6);
	
	\end{tikzpicture}
	\caption{Example of graphical composition of two diagrams \(b_1,b_2 \in B_{N}(4,4)\). Identifying a closed loop (in red) results in multiplying the diagram by a scalar \(N \in \mathbb{C}\). We see that the composition \(b_1 \circ b_2\) remains within \(B_{N}(4,4)\).}
	\label{fig:WBA}
\end{figure}

When one introduces representation space $V_N^{\otimes m}\otimes \overline{V}_N^{\otimes n}$ every diagram from $B_{N}(m,n)$ can be viewed as a partially transposed permutation operator, where transposition is applied with respect to last $n$ systems, and when $N$ is the dimension of the mentioned representation space. In fact we can relate abstract diagrams from $B_{N}(m,n)$ with partially transposed permutation operators represented as diagrams, see Figure~\ref{S_to_WBA}. 
\begin{figure}[h!]
	\centering
	\[
	\left(
	\begin{tikzpicture}[scale=0.9, baseline={(current bounding box.center)}]
	\foreach \i in {1,...,6} {
		\node[circle, fill=black, inner sep=1pt] (t\i) at (\i,1) {};
		\node[circle, fill=black, inner sep=1pt] (b\i) at (\i,0) {};
		
		\node[above=2pt] at (t\i) {\tiny $\i$};
		\node[below=2pt] at (b\i) {\tiny $\i$};
	}
	
	\draw (t1) -- (b1);
	\draw (t2) -- (b6);
	\draw (t6) -- (b5);
	\draw (t5) -- (b4);
	\draw (t4) -- (b3);
	\draw (t3) -- (b2);
	\end{tikzpicture}
		\right)^{T_4 \circ T_5 \circ T_6}\quad =
	\quad
	\begin{tikzpicture}[scale=0.9, baseline={(current bounding box.center)}]
	\foreach \i in {1,...,6} {
		\node[circle, fill=black, inner sep=1pt] (t\i) at (\i,1) {};
		\node[circle, fill=black, inner sep=1pt] (b\i) at (\i,0) {};
		
		\node[above=2pt] at (t\i) {\tiny $\i$};
		\node[below=2pt] at (b\i) {\tiny $\i$};
	}
	
	\draw[dashed] (3.5,1.3) -- (3.5,-0.3);
	
	\draw (t1) -- (b1);
	\draw (t2) to[bend right=25] (t6);
	\draw (t3) -- (b2);
	\draw (t4) -- (b5);
	\draw (t5) -- (b6);
	\draw (b3) to[bend left=40] (b4);
	\end{tikzpicture}
	\]
	\caption{The diagram represent relation between diagrammatic representation of a cycle $\sigma=(26543)$ partially transposed with respect to systems $4,5,6$ and an element of the Walled Brauer algebra $B_N(3,3)$.}
	\label{S_to_WBA}
\end{figure}
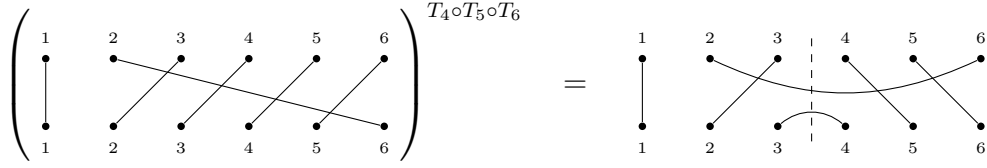

Collection of all such objects, with their linear combinations, gives the algebra of partially transposed permutation operators \( \mathcal{A}_{m,n}^{N} \).  In fact the algebra \( \mathcal{A}_{m,n}^{N} \) is equal to the commutant  sub-algebra in $ \End ( \Vmn ) $ of 
$ (U^{\otimes m}\otimes \overline{U}^{\otimes n})$
 and has been studied in many contexts~\cite{KOIKE198957,BENKART1994529,Halverson1996CharactersOT,PhysRevA.63.042111,Moz1}. 

As we pointed out in the introduction, the diagrammatic walled Brauer algebra $B_N(m,n)$ is well understood, while its matrix representation $\mathcal{A}_{m,n}^N$ still needs research.  The main reason for that is the following. The algebra $B_N(m,n)$ is non-semisimple for all $N< m+n$~\cite{Cox1}, so in principle it is not clear how to translate results from the diagrammatic algebra to its matrix representation $\mathcal{A}_{m,n}^N$ which is always semisimple.   More formally, in the general there is the following mapping $\rho_{ N, m , n }: B_N(m,n) \rightarrow \mathrm{End}(V_N^{\otimes m}\otimes \overline{V}_N^{\otimes n})$ which has for $N<m+n$ a non-trivial kernel $\mathrm{ker}(\rho_{ N, m , n })$. This kernel is in fact an ideal $I_N(m,n)\equiv \mathrm{ker}(\rho_{ N, m , n })$. By defining the following quotient
\begin{equation}\label{AquotB} 
\widehat{ B}_{ N } ( m,n):= B_N(m,n)/ I_N(m,n)
\end{equation}
it is clear that we have relation
\begin{equation}
\rho_{N,m,n}(B_{ N } ( m,n))=\widehat B_{ N } ( m,n) \cong \mathcal{A}_{m,n}^N
\end{equation}
The basis and generators for the ideal $I_N(m,n)$ can be found in~\cite{ANDERSEN_STROPPEL_TUBBENHAUER_2017}.
Even if we have a good description of the kernel in terms of diagrams, this does not immediately tell us about a basis in terms of Brauer representation theory data. To go to the representation theory picture, we need results from~\cite{BENKART1994529,Cox1,StollWerth2016, bowman2018cellularsecondfundamentaltheorem} -- which characterise the admissibility condition and their implications for matrix units. For applications, this abstract description must be made more explicit to give construction algorithms for the matrix units in the non-semisimple regime, which is an open problem. An intermediate step is to get explicit formulae for corrected dimensions. We make a step in this direction and find structural connections between these diagrams and Fock spaces (Hilbert spaces of infinitely many oscillators). This is discussed in further sections of this manuscript.

Following the discussion in \cite{BENKART1994529,KOIKE198957}, the representation space $V_N^{\otimes m}\otimes \overline{V}_N^{\otimes n}$ of the unitary group $U(N)$, where $U \in U(N)$ acts as $ U^{ \otimes m } \otimes \overline{U}^{ \otimes n } $, 
has a decomposition in terms of irreducible representations. By the double centralizer theorem, the  multiplicity of the irreducible representations is given by irreducible representations of the commutant algebra of $U(N)$ 
in $ \End ( \Vmn ) $,  denoted  as $ \cA^N_{m,n} $. For $ N \ge (m+n)$ this is the diagrammatic  Brauer algebra $B_N (m,n)$ and we may write 
\bea\label{eq:mixedSW1}
V_N^{\otimes m}\otimes \overline{V}_N^{\otimes n}=\bigoplus_{\gamma \in \BRT (m,n) } V_\gamma^{U(N)} \otimes V_{\gamma}^{B_N(m,n)}.
\eea  
The direct sum is labelled by combinatorial data $ \gamma $ which specifies  irreducible reps 
of $B_N(m,n)$. $\gamma$ consists of  an integer $ k $ ranging as 
$ 0 \le k \le \min ( m,n) $, along with a partition $ \gamma_+$  of $ (m-k)$ and a partition 
$ \gamma_- $ of $ (n-k)$. We thus write 
\begin{equation}
\label{eq:WBAlabelling}
\gamma=(k,\gamma_+\vdash (m-k), \gamma_-\vdash (n-k)) 
\end{equation}
We refer to the data $ \gamma$ as a Brauer representation triple (Brauer triple for short) and the set of Brauer triples for fixed $(m,n)$ is denoted 
 as $ \BRT (m,n)$.

In the non-semisimple regime $ N < (m+n)$, $   \cA^N_{ m,n}   $ is the quotient $ \widehat B_{ N } (m,n) $ of $ B_N ( m,n) $ and the double centralizer theorem implies 
\bea\label{eq:mixedSW} 
V_N^{\otimes m}\otimes \overline{V}_N^{\otimes n}=\bigoplus_{ \substack { \gamma \in \BRT ( m,n)   \\ c_1 ( \gamma_+ ) + c_1  ( \gamma_-  ) \le N }  } V_{ \gamma }^{U(N)} \otimes V_{\gamma}^{\widehat{B}_N(m,n)},
\eea

For convenience, we define 
\bea\label{defhBRTExcl}  
&& {\rm ht  } ( \gamma ) := \hbox{ height of } \gamma = c_1 ( \gamma_+ ) + c_1  ( \gamma_-  ) \cr 
&& \widehat{ \BRT} (  m,n , N )  := \{ \gamma \in \BRT (m,n) : {\rm ht  } ( \gamma )  \le N \} \cr 
&&  \Excl  (m,n, N ) := \{ \gamma \in \BRT (m,n) : {\rm ht } ( \gamma )  >  N \} 
\eea
For any $N$, there is a map  $\Gamma :  \Gamma( \gamma , N ) $ where $\Gamma( \gamma , N ) $ is the mixed Young diagram for $ V_\gamma^{U(N)}$, which determines the highest weight of the irrep $ V_{ \gamma }^{ U(N)} $. We will refer to elements $ \gamma \in \Excl ( m,n, N ) $  as $N$-excluded diagrams. The inequality on the height plays an important role in this paper 
\bea\label{finiteNconst} 
c_1 ( \gamma_+ ) + c_1  ( \gamma_-  )  \le N 
\eea 
and we refer to it as the ``finite $N$ constraint''. In the application of walled Brauer algebras to matrix invariants \cite{KR}, it is interpreted as a non-chiral stringy exclusion principle following qualitative similarities to a wide range of phenomena in AdS/CFT \cite{MalStrom}.  

For $ N \ge (m+n)$ all $\gamma \in \BRT (m,n)$ satisfy the constraint, and $ \widehat B_N ( m,n) = B_N ( m,n) $ so \eqref{eq:mixedSW} reduces to \eqref{eq:mixedSW1}. 
The decompositions of mixed tensor space in \eqref{eq:mixedSW} \eqref{eq:mixedSW1}  are referred to as  mixed Schur-Weyl duality.

\subsection{Bratteli diagrams } 
\label{sec:Brattdiag} 

In the stable large $N$ regime, $ \Dim ( V_{ \gamma}^{ B_N (m,n) } )$ is known 
and independent of $N$. We define $d_{m,n}(\gamma )$ to be this dimension from the stable range which is known \cite{BENKART1994529}\cite{KOIKE198957} to be 
\begin{equation}
\label{eq:dimWBA}
d_{m,n}(\gamma )=\frac{m!n!}{k!h(\gamma_+)h(\gamma_-)}.
\end{equation}
Here $h(\gamma_\pm)$ is the product of the Hook-lengths. There is a combinatorial interpretation of \eqref{eq:dimWBA} in terms of Bratteli diagrams, which we will find useful. 

The Bratteli diagram for  the walled-Brauer pair $ ( m , n ) $  is a graph consisting of vertices organised in layers labelled by levels ranging from $ L =0$ to $ L = (m+n) $. For $ 1 \le L \le m $, the vertices are associated with Brauer  triples for $ ( L , 0 ) $. For $ L =0$, the vertex is associated with the empty set and is the root of the graph.  For $ (m+1)  \le  L \le (m +n) $, the vertices are associated with the Brauer  triples of  the walled-Brauer pair $ ( m , L - m ) $.  Edges of the graph connect triples in adjacent layers, when these triples are  related by  combinatorial operations on Young diagrams, which are called Bratteli moves. As $L$ is increased from $0$ to $m$, a Bratteli move is the addition of a box to $ \gamma_+$ which produces a  valid Young diagram (weakly increasing row lengths). As $L $ is increased from $ (  m+1 ) $ to $(m+n)$, a Bratteli move is either the addition of a box to $ \gamma_-$  or the removal of a box from $ \gamma_+ $.   The dimension $d_{m,n}(\gamma )$ is equal to the number of paths in the Bratteli diagram of $B(m,n)$ from the root to a given Brauer triple  $\gamma \in \BRT (m,n)$.

It is also understood \cite{BENKART1994529,Cox1,StollWerth2016, bowman2018cellularsecondfundamentaltheorem}  how to  calculate the dimensions 
$ \widehat{d}_{ m,n , N } ( \gamma ) $ of the irreps $ V^{ \widehat{B}_{ N } (m,n) }$
in the non-semisimple regime $ N < (m+n)$ using a modification of Bratteli diagrams, which we will call coloured Bratteli diagrams (CBD) for $ (m,n,N)$. The CBD for $B_N(m,n)$ has 
nodes associated with $ \gamma $ obeying $ { \rm ht } (\gamma ) \le N$ coloured green and nodes associated with $ \gamma $ having $ { \rm ht } (\gamma ) > N$  coloured red. The green nodes in the final layer at $L=(m+n)$ correspond to irreps of $\widehat{B}_{ N } (m,n) $. 
Their dimension $ \widehat{d}_{ m,n , N } ( \gamma ) $  is equal to the number of paths from the root at $L=0$ which do not pass through red nodes at lower levels $L < (m+n)$. General formulae resulting from the application of this procedure are not available. Exposing hidden combinatorial structures, of physical interest, related to this algorithm and finding explicit formulae is the motivation which led to this paper. 

For a given $N < (m+n)$, some of the green nodes in the final layer will have no 
paths in the CBD which traverse red nodes at earlier layers. For these $ \widehat{d}_{ m,n , N } ( \gamma )   = d_{ m,n} ( \gamma )  $ and we refer to the set of these green nodes 
as $ \Unmod ( m,n , N )$. Other green nodes will have a subset of paths passing through red nodes at earlier stages and there is a modification of the dimension compared to the stable regime : 
\bea 
 \widehat{d}_{ m,n , N } ( \gamma )   = d_{ m,n} ( \gamma )  - \delta_{ m , n , N } 
\eea
where $ \delta_{ m,n< N } $ is a positive integer counting the number of paths arriving at $ \gamma $ from the root after passing through a red node. This set of 
nodes in the final layer at $ L = (m+n)$ is denoted $ \Mod ( m, n , N ) $. To summarise, 
each $N < (m+n) $ in the non-semisimple regime defines a partition of $ \BRT (m,n)$ 
\bea\label{DisjBRTmnN0}  
&& \BRT (m,n) = \Excl ( m,n , N ) \sqcup \Unmod (  m , n , N )  \sqcup \Mod (  m , n , N )  \cr 
&& \gamma \in \Excl ( m,n , N )  :  {\rm ht }  ( \gamma ) > N \cr 
&& \gamma \in \Unmod (  m , n , N )  : {\rm ht }  ( \gamma ) \le  N \hbox{ and } \widehat{d}_{ m,n} ( \gamma ) =  d_{ m,n} ( \gamma ) \cr 
&& \gamma \in \Mod (  m , n , N ) : {\rm ht }  ( \gamma ) \le  N \hbox{ and }   \widehat{d}_{ m,n} ( \gamma ) =  d_{ m,n} ( \gamma ) - \delta_{ m,n , N } ( \gamma ) \hbox{ with}~ \delta_{ m,n , N } ( \gamma ) >0  \cr 
&& 
\eea 
For convenience, we also introduce 
\bea\label{BRThat}  
\widehat{\BRT} ( m , n ) := \BRT (m,n) \setminus \Excl ( m,n , N ) = \Unmod (  m , n , N )  \sqcup \Mod (  m , n , N )
\eea 
The elements of $ \widehat{\BRT} ( m , n ) $ are in 1-1 correspondence with the irreps of 
$ \widehat{B}_{m,n} ( N ) $.

The coloured Bratteli diagram for $ B_{N} (m,n) = B_{2 } ( 3,2 ) $ is shown in 
 Figure~\ref{Fig0}.
	\begin{figure}[ht!] 
		\centering
		\includegraphics[scale=0.3]{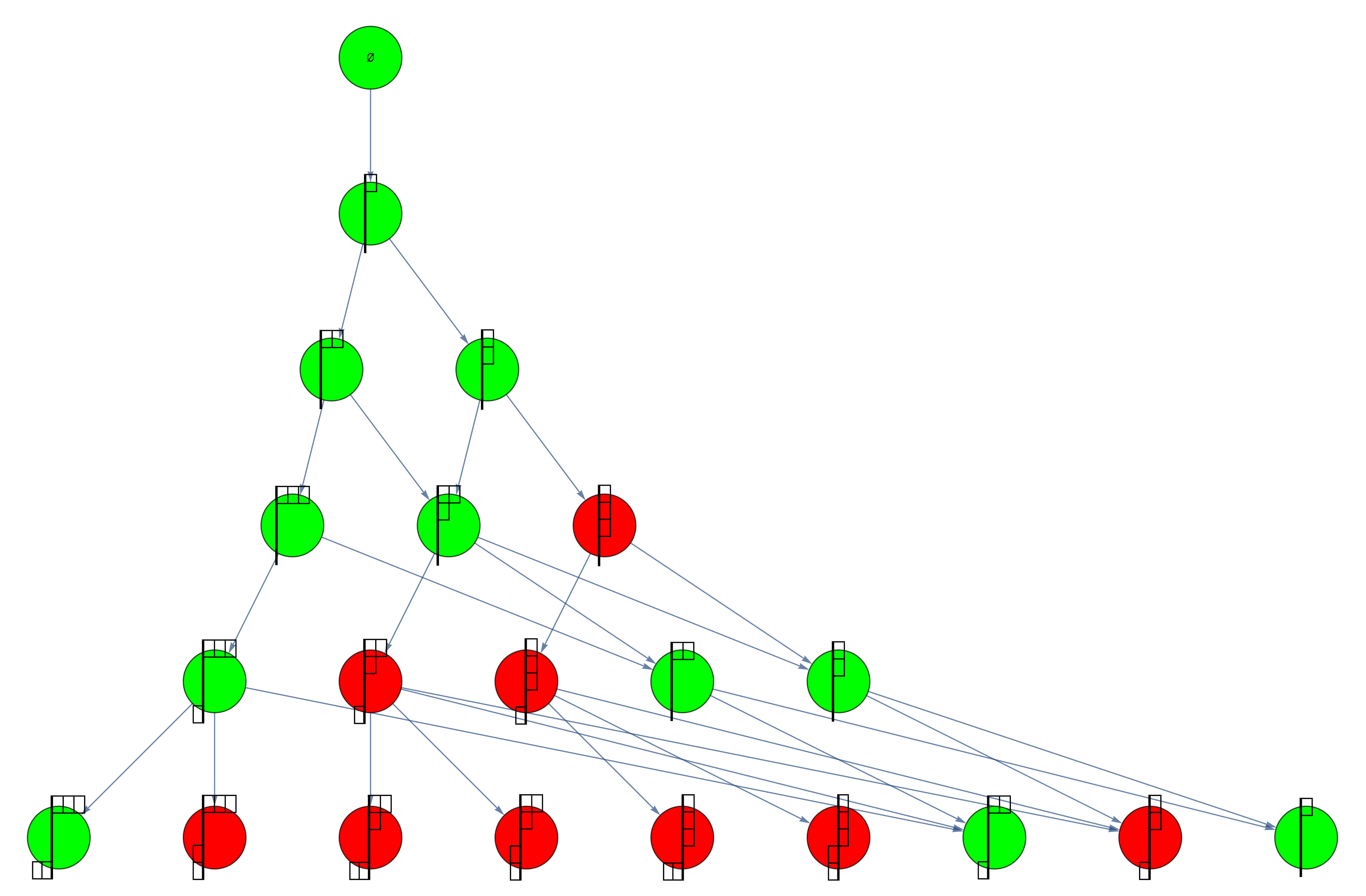}
		\caption{The graphic presents the coloured Bratteli diagram (CBD) for $m=3,n=2$ and $N=2$. The nodes in red are associated with Brauer representation triples which do not obey the finite $N$ constraint \eqref{finiteNconst}, while green nodes do obey the constraint.  } 
		\label{Fig0}
	\end{figure}
	
In section ~\ref{sec:introRBDB} we will introduce the definition  of restricted Bratteli diagrams (RBD)
for $B_N(  m,n) $ which will allow the efficient calculation of $ \delta_{ m,n, N } ( \gamma ) $ and hence $ \widehat{d}_{ m,n} ( \gamma ) $, using as input the dimension formulae for different sets of $ d_{ m,n} ( \gamma' ) $ from the stable regime. The RBD for $B_2(3,2)$ is shown in Figure~\ref{Gen_rest}. 
	\begin{figure}[ht!] 
		\centering
		\includegraphics[scale=0.3]{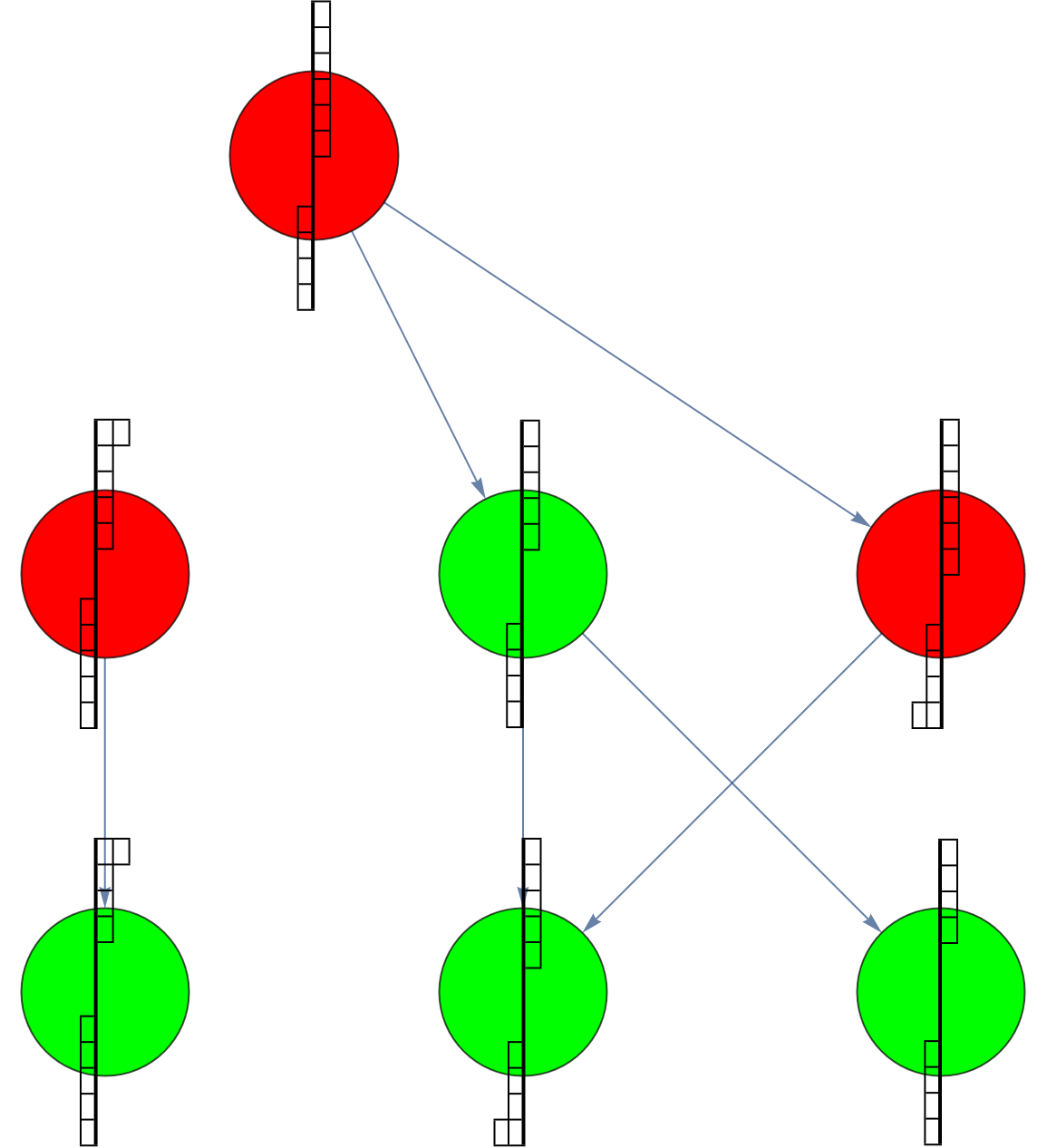}
		\caption{The graphic presents the   restricted Bratteli diagram (RBD) for $m=n=6$ and $N=9$. The red nodes are associated with Brauer triples which do not obey the finite $N$ constraint and admit paths to green nodes in the final layer. The green nodes  are associated with triples labelling irreps with modified dimensions. } 
		\label{Gen_rest}
	\end{figure}

\section{Relating Kernels : symmetric group algebras and Brauer algebras to tensor space}
\label{sec:relkern} 

Let us denote by $B(m,n)$ a linear space over $\mathbb{C}$ spanned by the  walled Brauer diagrams. 
The goal of this section is to characterise the kernel of  the map $ \rho_{N, m , n  }$ 
from $B(m,n)$ to  $\operatorname{End}(V_N^{\otimes m}\otimes \overline{V}_N^{\otimes n})$.
The main result in this section is also understood in the mathematical literature (Proposition 12 of \cite{StollWerth2016}). The present explanation and proof is close to the applications of walled Brauer algebras to matrix invariants (\cite{KR,KRHol}) and to quantum information tasks
 (\cite{Yongzhang2021permutation}\cite{MozJPA}) where partial transposition plays an important role. 

Let us start by considering the group algebra $\mathbb{C}[S_{m+n}]$ with corresponding irreducible matrix units $Q^{\lambda}_{IJ}$, associated with an irrep $\lambda$, where $1\leq I,J\leq \dim_{m+n}(\lambda)$. These operators span every irreducible block labelled by $\lambda \vdash (m+n)$ and satisfy the following matrix multiplication rules:
\begin{align}
\label{MatUnits_S_n}
Q_{IJ}^{\lambda}Q_{I'J'}^{\lambda'}=\delta^{\lambda \lambda'}\delta_{JI'}Q^{\lambda}_{IJ'}.
\end{align}
For the algebra  $\mathbb{C}[S_{m+n}]$ we can define a map 
\begin{align}
\label{eq:mapRhom+n0}
\rho_{N,m+n}:\mathbb{C}[S_{m+n}] \longrightarrow \operatorname{End}(V_N^{\otimes (m+n)})
\end{align} 
that allows us to find matrix representations of the considered irreducible matrix units:
\begin{align}
\label{eq:mapRhom+n}
\rho_{N,m+n}\left(Q^{\lambda}_{IJ}\right)\in \operatorname{End}(V_N^{\otimes (m+n)}).
\end{align}
In the regime when $N<m+n$, the map $\rho_{N,m+n}$ has a non-trivial kernel, i.e. we have $\operatorname{ker}(\rho_{N,m+n})\neq \{0\}$. It means there exists a  non-empty set of $Q^\lambda_{IJ}$ which is mapped to zero matrix under action of $\rho_{N,m+n}$. It happens a Young diagram $\lambda$ satisfies $c_1(\lambda)>N$.

Now, let us consider another map $P^t_{m,n}$ performing partial transposition with respect to last $n$ systems
\begin{align}
\label{eq:mapPt}
P^t_{m,n}:\mathbb{C}[S_{m+n}]\longrightarrow B(m,n),
\end{align}
where the $B(m,n)$ is a vector space spanned by all walled Brauer diagrams.
Then we have for all irreducible units $Q^\lambda_{IJ}$
\begin{align}
\mathbb{C}[S_{m+n}] \ni Q^\lambda_{IJ} \longrightarrow{\text{$P^t_{m,n}$}}(Q^\lambda_{IJ}) \in B(m,n).
\end{align} 

Having the mapping between algebras $\mathbb{C}[S_{m+n}]$ and $B(m,n)$, we can ask about matrix representation of the image of the $P^t_{m,n}$ on the space of $\operatorname{End}(V_N^{\otimes m}\otimes \overline{V}_N^{\otimes n})$, so in fact matrix elements of $\mathcal{A}_{m,n}^N$. Similarly to~\eqref{eq:mapRhom+n} we define a following map
\begin{align}
\label{eq:mapRhoBmn}
&\rho_{N,m,n}: B(m,n) \longrightarrow \operatorname{End}(V_N^{\otimes m}\otimes \overline{V}_N^{\otimes n}),\\
&B(m,n) \ni P^t_{m,n}(Q^\lambda_{IJ}) \longrightarrow{\text{$\rho_{N,m,n}$}}\left(P^t_{m,n}(Q^\lambda_{IJ})\right) \in \mathcal{A}_{m,n}^N.
\end{align}
Notice that the map $\rho_{N,m,n} \circ P^t_{m,n}$ does not map irreducible units of $\mathbb{C}[S_{m+n}]$ to irreducible units of the algebra $\mathcal{A}_{m,n}^N$. However, the result of its action still spans the whole linear space.
The map $\rho_{N,m,n}$, as it was for the map from~\eqref{eq:mapRhom+n},  also has a non-zero kernel. The kernel of this map is strongly connected with irreducible matrix units of the algebra $B_N(m,n)$. Namely, similarly to irreducible matrix units for $\mathbb{C}[S_{m+n}]$, we can define set of operators $Q^{\gamma}_{IJ}$ spanning every irreducible space labelled by $\gamma=(k,\gamma_+,\gamma_-)$.  These operators satisfy the analogous matrix multiplication rules to~\eqref{MatUnits_S_n}. In the semisimple regime, when $N\geq (m+n)$ the quotient algebra $\widehat{B}_N(m,n)$ is equal to $B_N(m,n)$ and  the map $\rho_{N,m,n}$ has a  trivial kernel. 
In the non-semisimple regime, when $N<(m+n)$, the kernel of  $\rho_{N,m,n}$ is non-trivial and composed of objects of two kinds. The objects of the first kind  are all matrix  units $Q^{\gamma}_{IJ}$, where $\gamma=(k,\gamma_+,\gamma_-)$ corresponds  to dropped $\gamma'$s. This case happens when the condition $c_1(\gamma_+)+c_1(\gamma_-)\leq N$ is not fulfilled. The objects of the second kind are matrix units  $Q^{\gamma'}_{ab}, Q^{\gamma'}_{ib}, Q^{\gamma'}_{aj}$, where $\gamma'$ corresponds to the irreps whose dimensions are modified, and $a,b$ denote inadmissible paths in the Bratteli diagram while $ i,j$ label admissible paths. The maps discussed above satisfy the commutativity relations depicted in Figure~\ref{fig2:sminusr} for all $\sigma \in \mathbb{C}[S_{m+n}]$.
\begin{figure}[h!]
\centering
\begin{tikzpicture}[scale=1.5, baseline=(current bounding box.center),
>={Stealth}, 
every node/.style={font=\small}]

\node (V) at (0,2) {$\sigma \in \mathbb{C}[S_{m+n}]$};
\node (W) at (5,2) {$\rho_{N,m+n}(\sigma) \in \operatorname{End}\left(V_N^{\otimes (m+n)}\right)$};
\node (V') at (0,0) {$P^t_{m,n}(\sigma)\in B(m,n)$};
\node (W') at (5,0) {$\rho_{N,m,n}(P^t_{m,n}(\sigma))\in \operatorname{End}\left(V_N^{\otimes m}\otimes \overline{V}_N^{\otimes n}\right)$};

\draw[->] (V) -- (W) node[midway, above] {$\rho_{N,m+n}$};
\draw[->] (V) -- (V') node[midway, left] {$P^t_{m,n}$};
\draw[->] (W) -- (W') node[midway, right] {$\mathrm{id} \otimes T_{m+1}\otimes \cdots \otimes T_{m+n}$};
\draw[->] (V') -- (W') node[midway, below] {$\rho_{N,m,n}$};

\end{tikzpicture}
\caption{Commutativity diagram for algebra elements $\sigma \in \mathbb{C}[S_{m+n}]$ under action of maps $\rho_{N,m+n},P^t_{m,n}$, and $\rho_{N,m,n}$ given through~\eqref{eq:mapRhom+n0}, \eqref{eq:mapPt}, and~\eqref{eq:mapRhoBmn} respectively.}
\label{fig2:sminusr}
\end{figure}
Notice that the both maps~\eqref{eq:mapRhom+n0} and~\eqref{eq:mapRhoBmn} when having non-trivial kernel give us useful identities between elements of the respective algebras in $\operatorname{End}(V_N^{\otimes (m+n)})$ and $\operatorname{End}(V_N^{\otimes m}\otimes \overline{V}_N^{\otimes n})$. Namely, we have identities of the form $\rho_{N,m+n}(Q^\lambda_{IJ})\equiv 0$ and $\rho_{N,m,n}(Q^\gamma_{IJ})\equiv 0$.

Having these preliminary considerations, we can formulate the following:
\begin{lemma}
	For the maps $\rho_{N,m+n},\rho_{N,m,n}$ from~\eqref{eq:mapRhom+n0},~\eqref{eq:mapRhoBmn} respectively, the following equality holds:
	\begin{align}
	\operatorname{ker}(\rho_{N,m+n})=\operatorname{ker}(\rho_{N,m,n}).
	\end{align}
\end{lemma}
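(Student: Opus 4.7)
The plan is to use the commutative diagram of Figure~\ref{fig2:sminusr} together with the observation that $P^t_{m,n}$ and the tensor-space partial transposition $T := \mathrm{id}\otimes T_{m+1}\otimes\cdots\otimes T_{m+n}$ are both invertible linear maps. Strictly speaking, $\ker(\rho_{N,m+n})$ sits inside $\mathbb{C}[S_{m+n}]$ while $\ker(\rho_{N,m,n})$ sits inside $B(m,n)$, so the asserted equality must be interpreted as an identification of these two subspaces through the vector-space isomorphism $P^t_{m,n}$.

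First I would verify that $P^t_{m,n}:\mathbb{C}[S_{m+n}]\to B(m,n)$ is a bijection of vector spaces. This is the content of the correspondence illustrated in Figure~\ref{S_to_WBA}: partial transposition of a permutation diagram with respect to the last $n$ sites reflects the nodes at positions $m+1,\ldots,m+n$ across the horizontal axis, and the resulting pairing satisfies the two walled Brauer conditions. Moreover the operation is invertible (applying it twice recovers the original permutation diagram), so it induces a linear bijection between the two $(m+n)!$-dimensional vector spaces.

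Next I would observe that $T$, as an endomorphism of $\End(V_N^{\otimes(m+n)})$, is a linear involution and hence invertible. The commuting square of Figure~\ref{fig2:sminusr} reads
\begin{equation}
\rho_{N,m,n}\circ P^t_{m,n} \;=\; T\circ \rho_{N,m+n}
\end{equation}
as linear maps $\mathbb{C}[S_{m+n}]\to \End(V_N^{\otimes m}\otimes\overline{V}_N^{\otimes n})$, once one identifies $\End(V_N^{\otimes(m+n)})$ with $\End(V_N^{\otimes m}\otimes\overline{V}_N^{\otimes n})$ as vector spaces via $T$. From the invertibility of $T$ it follows that $\ker(T\circ\rho_{N,m+n})=\ker(\rho_{N,m+n})$, and from the invertibility of $P^t_{m,n}$ it follows that $\ker(\rho_{N,m,n}\circ P^t_{m,n})=(P^t_{m,n})^{-1}\bigl(\ker(\rho_{N,m,n})\bigr)$. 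Equating the two expressions gives
\begin{equation}
P^t_{m,n}\bigl(\ker(\rho_{N,m+n})\bigr) \;=\; \ker(\rho_{N,m,n}),
\end{equation}
which is the claimed equality under the identification by $P^t_{m,n}$.

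I do not anticipate a major obstacle here: once one notes that both $P^t_{m,n}$ and the tensor-space partial transpose $T$ are bijections, the commuting square forces the kernels to coincide. The only point that needs a little care is confirming that the diagram in Figure~\ref{fig2:sminusr} actually commutes on all of $\mathbb{C}[S_{m+n}]$ (not merely on matrix units), which follows because all four maps are $\mathbb{C}$-linear and the commutativity is manifest on the diagrammatic basis of permutations by the definition of $P^t_{m,n}$ as partial transposition on indices $m+1,\ldots,m+n$.
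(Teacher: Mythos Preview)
Your proposal is correct and rests on the same underlying idea as the paper: the commuting square of Figure~\ref{fig2:sminusr} together with the invertibility of $P^t_{m,n}$ and of the tensor-space partial transpose $T$. The paper phrases this a bit less abstractly, testing $\rho_{N,m+n}(A)=0$ by pairing against arbitrary $X_1\otimes\cdots\otimes X_{m+n}$ inside a trace and then using that partial transposition preserves the trace to pass to $\rho_{N,m,n}(P^t_{m,n}(A))$; this is exactly your statement that $T$ is an invertible linear map, unpacked in a form tied to the matrix-invariants application. Your direct linear-algebra formulation $\rho_{N,m,n}\circ P^t_{m,n}=T\circ\rho_{N,m+n}$ with both outer maps invertible is cleaner, and your explicit remark that the equality of kernels is to be read through the bijection $P^t_{m,n}$ is a useful clarification the paper leaves implicit.
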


\begin{proof}
First we prove that for any $A\in \mathbb{C}[S_{m+n}]$ which is in the kernel of the standard  map $\rho_{N,m +n}$ from~\eqref{eq:mapRhom+n0} we can deduce it is also in the kernel of the map from~\eqref{eq:mapRhoBmn}.
Let us take $A \in \mC [ S_{ m + n } ] $ of its the most general form:
\begin{equation}
	A = \sum_{ \sigma \in S_{m+n} } A_{ \sigma } \sigma, \qquad \forall \sigma \in S_{m+n} \quad A_{\sigma}\in\mathbb{C}.
\end{equation}
Let us assume $A$ is in the kernel of the standard  map $\rho_{N,m +n}$ from~\eqref{eq:mapRhom+n0}.
This means that 
\begin{equation}
\rho_{ N,m +n } ( A ) = 0.
\end{equation}
The above is equivalent in saying that 
\begin{equation}
\label{vanX} 
\sum_{ \sigma \in S_{m+n} } A_{ \sigma } 
\langle e^{ j_1 } \otimes e^{ j_2 }\otimes \cdots \otimes e^{ j_{m+n} } | \rho_{N,m +n}(\sigma) | e_{ i_1 } \otimes e_{ i_2 } \otimes \cdots \otimes e_{ i_{ m+n } } \rangle = 0.
\end{equation}
Multiply this this equation with $ ( X_{1} )^{ i_1}_{ j_1 } \cdots  ( X_{ m+n} )^{ i_{ m+n} }_{ j_{ m+n} } $ where these variables are matrix elements of linear operator  variables $X_a$,  $X_{a} | e_i \rangle =  \sum_{ j }  ( X_a )_i^j | e_j \rangle $, for $1\leq a \leq m+n$. The  coefficients  $ ( X_{1} )^{ i_1}_{ j_1 } \cdots  ( X_{ m+n} )^{ i_{ m+n} }_{ j_{ m+n} } $ can be thought as the  coefficients in the expansion of a vector $X_1 \otimes X_2 \otimes \cdots \otimes X_{ m+n}$ in $\operatorname{End} ( V_N^{ \otimes (m + n) } ) $ in terms of the basis  vectors $| e_{ i_1 } \otimes e_{ i_2 } \otimes \cdots \otimes e_{ i_{ m+n } } \rangle \langle e^{ j_1 } \otimes e^{ j_2 }\otimes  \cdots \otimes e^{ j_{m+n} } |$. 
We can express the vanishing in \eqref{vanX} as 
\bea\label{VanMultTr}  
\sum_{ \sigma \in S_{m+n} } A_\sigma \tr_{ V_N^{ \otimes ( m +n) } } \left[( X_{ 1} \otimes X_2\otimes \cdots \otimes X_{ m+n} ) \rho_{N,m+n}(\sigma)\right] 
= 0.
\eea
Now exploiting fact that the partial transposition does not change value of trace, we apply it to \eqref{VanMultTr} with respect to last $n$ systems. Together with the commutativity property described in Figure~\ref{fig2:sminusr} we obtain the equality for any $A \in \mC [ S_{ m +n } ] $, and any value of $M$:
\bea\label{PartTranspEq} 
\begin{split}
&\sum_{ \sigma \in S_{m+n} } A_\sigma \tr_{ V_M^{ \otimes ( m +n) } } \left[( X_{ 1} \otimes X_2\otimes \cdots \otimes X_{ m+n} )\rho_{N,m+n}(\sigma)\right]  \\
&= \sum_{ \sigma \in S_{m+n} } A_\sigma \tr_{ V_M^{ \otimes ( m +n) } }\left[( X_{ 1} \otimes X_2 \otimes  \cdots \otimes X_{m} \otimes X^T_{ m+1}  \otimes \cdots \otimes X^T_{ m+n} ) \rho_{N,m,n}(P^t_{  m, n }  (  \sigma ))\right].
\end{split}
\eea 
Specialising the above to $ M = N $ and using \eqref{VanMultTr} 
\bea 
&& \sum_{ \sigma \in S_{m+n} } A_{\sigma}\tr_{ V_N^{ \otimes ( m +n) } } \left[( X_{ 1} \otimes X_2 \otimes  \cdots \otimes X_{m} \otimes X^T_{ m+1}  \otimes \cdots \otimes X^T_{ m+n} ) \rho_{N,m,n}(P^t_{  m, n }  (  \sigma ))\right]  
= 0.  \cr 
&&
\eea
This means that the element $\sum_{ \sigma \in S_{m+n} } A_{ \sigma } P^t_{m,n} ( \sigma )$ belongs to $B(  m , n )$ and maps to zero under the map $\rho_{ N, m , n } :  B( m , n )  \rightarrow  \operatorname{End} ( V_N^{\otimes m } \otimes \overline{V}_N^{ \otimes n } )$.
	
The converse argument taking us from the kernel of $ \rho_{ N, m , n } $ to the kernel of $ \rho_{ N, m +n} $ proceeds in the same way. We start from an element $B \in B_d( m , n ) $ of the form
\bea 
B = \sum_{b\in B_d( m , n )} B_{ b } b, \qquad \forall b \in B_d( m , n ) \quad B_{b}\in\mathbb{C}
\eea
in the kernel of 
$\rho_{N, m , n} $ to obtain the element $ P^t_{ m , n } ( B ) \in \mC[S_{ m + n }]$ which is in the kernel of $ \rho_{N,m +n} $ because of the equality \eqref{PartTranspEq}.
\end{proof}

We illustrate above by the following examples. 
\begin{example}
	Consider the case when $l=1$, this means we have $N=m+n-1$. There is only one irrep $\gamma$ in the kernel, which is
	\begin{align}
	\gamma=(k=0,[1^m],[1^n]), \quad d_{m,n,N}(\gamma)=1,
	\end{align}
	with the corresponding matrix unit
	\begin{align}
	Q^{\gamma}=P^t_{m,n}(Q^{[1^{m+n}]}).
	\end{align}
	In this particular case, it is easy to see a one-to-one correspondence between dropped irreducible matrix units in $\mathcal{A}_{m,n}^N$ and $\mathbb{C}[S_{m+n}]$. 
	\begin{align}
	\rho_{N,m+n}(Q^{[1^{m+n}]})=0 \quad \text{and} \quad \rho_{N,m,n}(P^t_{m,n}(Q^{[1^{m+n}]}))=\rho_{N,m,n}(Q^{\gamma})=0.
	\end{align}
	This shows that the both kernels are equal and one-dimensional.
\end{example}

\begin{example}
	Let us consider now more complicated case when $l=2$. This means we have $N=m+n-2$. In this case vanishing matrix units of the algebra $\mathbb{C}[S_{m+n}]$ under action of the map $\rho_{N,m+n}$ are those associated with the following partitions
	\begin{align}
	\lambda_1=[1^{m+n}], \quad \lambda_2=[2,1^{m+n-2}]
	\end{align}
	with $d_{m+n}(\lambda_1)=1$, and $d_{m+n}(\lambda_2)=m+n-1$. It is clear that for $\lambda_1$ we have only one matrix unit $Q^{\lambda_1}$, while for $\lambda_2$, we have $(m+n-1)^2$ matrix units $Q^{\lambda_2}_{IJ}$. Shortly, we can write 
	\begin{align}
	\rho_{N,m+n}(Q^{\lambda_1})=0, \quad \forall I,J \ \rho_{N,m+n}(Q^{\lambda_2}_{IJ})=0.
	\end{align}
	This means that the kernel of $\rho_{N,m+n}$ is $(m+n-1)^2+1$ dimensional.
	Applying first the map $P^t_{m,n}$ to every matrix units and then the map $\rho_{N,m,n}$, we get 
	\begin{align}
	&\rho_{N,m,n}\left(P^t_{m,n}(Q^{\lambda_1})\right)=0,\cr
	\forall \ I,J \ &\rho_{N,m,n}\left(P^t_{m,n}(Q^{\lambda_2}_{IJ})\right)=0.
	\end{align}
	From the walled Brauer algebra perspective there are three irreps that must be dropped
	\begin{align}
	&\gamma_1=(k=0,[1^m],[1^n]), \gamma_2=(k=0,[2,1^{m-2}],[1^n]), \gamma_3=(k=0,[1^m],[2,1^{n-2}]),\cr
	& d_{m,n}(\gamma_1)=1, \  d_{m,n}(\gamma_2)=m-1, \  d_{m,n}(\gamma_3)=n-1.
	\end{align}
	and according to Section~\ref{Sec:l=2} one irrep whose dimension must be modified
	\begin{align}
	\gamma_4=(k=1,[1^{m-1}],[1^{n-1}]),\quad d_{m,n}(\gamma_4)=mn.
	\end{align}
	The correction in this case is simple and equal to $\delta=1$, giving $ \widehat{d}_{m,n,N}(\gamma_4)= d_{m,n}(\gamma_4)-1=mn-1$. It means the only one path in the Bratteli diagram is inadmissible, let us denote it by $a$. Then matrix units of the second kind living in the kernel are of the form:
	\begin{align}
	&Q^{\gamma_4}_{aa}\quad \text{1 operator},\cr
	&Q^{\gamma_4}_{ia}\quad \text{$(mn-1)$ operators},\cr
	&Q^{\gamma_4}_{aj}\quad \text{$(mn-1)$ operators}.
	\end{align} 
	This given in total $2mn-1$ matrix units that live in the kernel. The remaining operators $Q^{\gamma_4}_{ij}$, for $i,j\neq a$ are not in the kernel. The dimension of the kernel for $\rho_{N,m,n}$ is equal then to $2mn-1+(m-1)^2+(n-1)^2+1=2mn+(m-1)^2+(n-1)^2 = ( m+n -1)^2 +1$ which is equal to the dimension of  the kernel of $\rho_{N,m+n}$.
\end{example}

\section{The restricted Bratteli diagrams : definition and properties  }
\label{sec:introRBDB}

An irreducible representation of the walled Brauer algebra $ B_N ( m ,n ) $  is 
  specified by a triple, consisting of an integer 
 $ 0 \le k \le \min ( m , n ) $ where $ \min( m , n ) $ is the smaller of $ m , n $, along with a partition $ \gamma_+ $ of $ (m-k)$ and a partition $ \gamma_-$ of $ ( n - k)  $. These partitions can be visualised as Young diagrams with row lengths $ r_i ( \gamma_+ ) , r_i ( \gamma_- ) $ obeying $ r_i ( \gamma_{\pm} ) \ge  r_{ i+1} ( \gamma_{\pm} ) $ and $ \sum_{ i } r_i ( \gamma_+ ) = ( m -k ) , \sum_i r_i ( \gamma_- ) = ( n -k ) $.   We will thus write $ \gamma = ( k , \gamma_+ \vdash ( m-k)  , \gamma_-\vdash ( n-k)   ) $ for an irreducible representation. $ \gamma_+ , \gamma_-$ are naturally associated with Young diagrams with $(m-k)$ and $ ( n-k)$ boxes respectively.  We denote the length of the first columns as $ c_1 ( \gamma_+ ) $ and $ c_1 ( \gamma_-)$.  It is also conventional to associate a Young diagram with positive row lengths given by the parts of $ \gamma_+$ and negative row lengths given by the parts of $ \gamma_-$ 
 \bea\label{mixedYD}  
&&  R_i ( \gamma ) = r_i ( \gamma_+ )  ~~\hbox{ for } ~~    1 \le i \le c_1 ( \gamma_+ ) \cr 
&&  R_{ N -i +1 }  ( \gamma ) = - r_i ( \gamma_- )  ~~ \hbox{ for } ~~     1\le i \le c_1 ( \gamma_-)  
 \eea
 
The representation theory of $ B_N ( m , n ) $ and its Schur-Weyl duality with $ U(N) $ acting in the mixed tensor representation $ \Vmn  $  is best understood for $ N > ( m + n ) $.  In this range, the dimension of the irreducible representation \cite{BENKART1994529, bowman2018cellularsecondfundamentaltheorem} is given by~\eqref{eq:dimWBA}.  The dimension given in \eqref{eq:dimWBA} is independent of $N$ while $N$ varies in the range $ N \ge ( m +n)$, and  we will refer to this  as a large $N$ stability property. 
As discussed in section  \ref{sec:Brattdiag} this formula is equal to a counting of paths in the Bratteli diagram for $ B_N ( m,n)$, i.e. the expression for $ d_{m,n}( \gamma)  $ counts all paths in the Bratteli diagram of $B(m,n)$ from the root to a given irrep $\gamma$. 
The diagrams share the stability property of the dimension formula, i.e. they are independent of $N$ in the range $ N > (m+n)$. 

For $ m + n < N $, as discussed in \eqref{AquotB},  there is a semisimple quotient $ \cA^N_{ m , n  } $ which is isomorphic to the commutant algebra of $ U^{ \otimes m} \otimes \overline{U}^{ \otimes n } $  in $\Vmn $.  Irreps of  $ \cA^N_{ m , n  } $ are restricted by the condition $ c_1 ( \gamma_+ ) + c_1 ( \gamma_-) \le N$. A further important feature is that a subset of the irreps of $ \cA^N_{ m , n } $, labelled by $ \gamma $ obeying this condition, have a dimension which is smaller than \eqref{eq:dimWBA}.  It is also known~\cite{BENKART1994529,StollWerth2016,bowman2018cellularsecondfundamentaltheorem} that the modified dimensions can be obtained by counting paths in the Bratteli diagram of $ B ( m , n ) $ which  have the following properties: \\
{\bf Properties of  paths contributing to modified dimensions}\label{Cordimprop} 
\begin{enumerate} 

\item They terminate at $ \gamma$ in the last layer. 

\item They do not pass through any Young diagrams $ \gamma'$ at smaller levels with $ c_1 ( \gamma'_+ ) + c_1 ( \gamma'_- ) > N $.  We will refer to these as $N$-excluded diagrams, or $N$-excluded Brauer triples. 

\end{enumerate} 
 Equivalently 
\bea\label{dimcor}  
 \widehat{d}_{m,n,N} ( \gamma )=d_{m,n} ( \gamma )  - \delta_{m,n,N} (  \gamma ),  
\eea
where  $ \delta_{m,n,N} ( \gamma ) $ is the number of paths terminating at $ \gamma $ in the last layer of the $B_N(m,n)$ Bratelli diagram and  passing through one or more nodes associated $N$-excluded diagrams. 

For the non-semisimple regime  $ N < m+n $, we define the  coloured Bratteli diagrams (CBD) of 
$B_N( m , n  )$ as follows. 

\begin{definition}{Definition  of coloured Bratteli diagrams of $B_N( m , n  )  $.   } 
\label{CBDBNmn} 
\begin{enumerate} 

\item They have $ ( m+n+1)$ layers.  

\item They contain all the same nodes and paths as the $B_M(m,n)$ Bratelli-diagram from the stable regime $M \ge (m+n)$. 

\item The nodes obeying $ c_1 ( \gamma_+ ) + c_1 ( \gamma_- ) \le N $ are coloured green, while 
the nodes obeying $ c_1 ( \gamma_+ ) + c_1 ( \gamma_- ) > N$ are coloured red. The Brauer representation triples  $ \gamma$ with  $ c_1 ( \gamma_+ ) + c_1 ( \gamma_- ) > N$ are called $N$-excluded nodes. 

\end{enumerate} 
\end{definition} 

It is evident from Definition  \ref{Cordimprop} that calculating $ \delta_{m,n,N} (  \gamma ) $ and $d_{m,n} ( \gamma )$ can be accomplished by counting paths in the CBD of $ B_N ( m,n)$. 
A more efficient calculation of these quantities can be accomplished using a simpler diagram containing a subset of nodes and links of the CBD of   $ B_N( m , n ) $.
The {\bf Restricted Bratelli diagram (RBD)  of $ B_N( m , n) $ } is thus defined by the following 
properties. 

\begin{definition}{Definition of restricted Bratteli diagrams of  $B_N(m,n)$ }
\label{DefRBDB}
These diagrams have a sequence of layers of depths $d$  starting from $d=0$. 

\begin{enumerate}

\item At $ d =0$,  they only contain the subset of nodes of the CBD of  $ B_N ( m , n ) $ at the layer $ L = ( m+n)$ which are green. 

\item At depth $d>0$, they contain the red  nodes at $ L = ( m+n) -d $ of the CBD of $ B_N ( m,n)$ which connect to green nodes at $ L = (m+n)$, or $d=0$. 

\item The  RBD of  $B_N(m,n)$ also contains green nodes at $ d >0 $ which appear along paths from reds nodes at $d' >d $ to the green nodes at $d=0$. Green nodes in CBD of  $B_N ( m , n ) $ which only connect to red nodes at $ d' < d $ are removed to produce the simpler RBD. The counting of paths from such green nodes in $ \CBDB_N ( m , n ) $ can be recovered from  the RBD of $B_N(m,n)$ by applications of the stable range dimension formula \eqref{eq:dimWBA}. This will be illustrated in examples. 

\end{enumerate} 
\end{definition} 

The RBD of $B_N ( m  , n  ) $ are used, alongside the stable-range dimension formula \eqref{eq:dimWBA} to calculate the finite $N$ dimensions and dimension modifications.  They do not include paths that are counted by the stable range dimension formula, hence their simplicity compared to the  CBD  $B_N(m,n)$. 

We will show, in Observation \ref{ObsdepRBDB}  in section \ref{sec:deepest} that for $ m +n = N  - l $, these diagrams have depth $l$, i.e. the layers are labelled by $ 0 \le d \le (l-1)$.  When $ N , m , n $ are large, while $ l $ is small, these diagrams are very useful. We describe the
Mathematica code used for  constructing the RBD of $B_N ( m , n) $ in Appendix \ref{appsec:Mtca}. 
We also obtain general formulae for the corrected dimensions at $ l = 2,3,4$. We test these formulae using the understanding of the  identity  of the dimensions of the kernels of the maps $ 
\rho_{N, m,n} :  B_N ( m , n ) \rightarrow \End (\Vmn   )   $ and $ 
\rho_{ N,m+n} :  \mC ( S_{ m+n} )  \rightarrow \End ( V_N^{ m+n}    )   $ 
derived in Section \ref{sec:relkern}. Results are collected in section~\ref{sec:DimExamples}.

\section{Modified dimensions for $B_N(m,n)$ with $N=m+n-l$ and small  values of $l$}
\label{sec:DimExamples}

In this section, starting from a few explicit examples for fixed $m$, we deliver formulas for modified dimensions for irreps of $B_N(m,n)$, where $N=m+n-l$ with $l=2,3,4$. From our examples we excluded the case of $l=1$ since we have only one $N$-excluded rep in the 
final layer of the CBD of $B_N(m,n)$, and there are no irreps with modified dimension.  The algorithm for deriving the modifications is the following:
\begin{enumerate}
	\item We work in the stable regime  for the RBD of the algebra $B_N(m,n)$ with given $N=n+m-l$. In the stable regime,the RBD are  unchanged when $m,n$ are increased. This regime is $m,n \ge (2l-3) $ which is observed from the computation of the RBD using the mathematica code in Appendix A and proved in section \ref{Sec:countingReds}. 
	\item Identifying irreps $\gamma_i=(k,\gamma^{(i)}_+\vdash m-k,\gamma_-^{(i)}\vdash n-k)$ of $B_N(m,n)$ requiring dimension modification and compute their dimension $d_{m,n}(\gamma_i)$
	from the stable range formula \eqref{eq:dimWBA} 
These irreps have paths in the CBD or the RBD of $B_N(m,n)$ which terminate at $\gamma_i$ after passing through $N$-excluded nodes. 
 Then the modifications to the dimensions are
	\bea 
&&	\widehat{d}_{m,n,N}(\gamma_i)= d _{m,n}(\gamma_i)  - \delta_{m,n,N}(\gamma_i), \quad \text{for}\quad i=1,\ldots, K, \cr 
&&	\hbox{ where } K = |\Mod ( m,n, N ) | \hbox{ is the number of irreps $ \gamma $ with modified dimensions }\cr 
&& 
	\eea
and $\delta_{m,n,N}(\gamma_i)$ is  the number of paths reaching $ \gamma_i$ through 
$N$-excluded diagrams in the CBD of $B_N(m,n)$.  The number $K$ denotes the number of irreps in the last layer of the RBD of $ B_N ( m,n) $. 
	
	\item To compute numbers $\delta_{m,n,N}(\gamma_i)$ for $i=1,\ldots,K$  more efficiently using the RBD, we first calculate dimensions $d_{m,n}(\gamma')$ for the $N$-excluded Brauer representation triples $ \gamma'$  in the RBD, which are  connected to the given $\gamma_i$. Next, we calculate the number of inadmissible paths going through the  $\gamma'$ and arriving at 
$ \gamma_i$.  These numbers are used to obtain the  dimension corrections $\delta_{m,n,N}(\gamma_i)$. We iterate this procedure for all $i=1,\ldots,K$.
\end{enumerate}
To check the correctness of the number derived from the above algorithm we can proceed as follows. We introduce the following quantity:
\begin{align}
\label{eq:deficit}
\Delta(m,n;N):=\sum_{i=1}^k\left[(d_{m,n}(\gamma_i))^2-(d_{m,n}(\gamma_i)-\delta_{m,n,N}(\gamma_i))^2\right],
\end{align}
which measures the sums of the differences  between the squares of 
$ d_{ m,n} ( \gamma_i ) $ and $ \widehat{d}_{m,n,N} ( \gamma_i ) = d_{ m,n} ( \gamma_i )  - \delta_{ m,n,N} ( \gamma_i )  $.  The dimension of $\mathcal{A}^N_{m,n}$ is then given by
\begin{align}
\label{eq:Dim1}
\dim ( \mathcal{A}^N_{m,n})&&= - \Delta ( m, n ; N ) + \sum_{ \substack { \gamma \in \BRT( m,n )   \\ 
		c_1(\gamma_+)+c_1(\gamma_-)\leq N  }  }
( d_{m,n}(\gamma) )^2 .  \cr 
&&= - \Delta ( m, n ; N ) + \sum_{ \gamma \in \widehat{\BRT}( m,n )   }
( d_{m,n}(\gamma) )^2 . 
\end{align}
The last line uses the notation introduced in \eqref{defhBRTExcl}.
The dimension of $\mathcal{A}^N_{m,n}$  is also given by dimensions of the irreps $\lambda$ of the symmetric group $S_{m+n}$:
\bea 
\label{eq:Dim2}
\dim ( \mathcal{A}^N_{m,n})=\sum_{ \substack { \lambda \vdash (m+n) \\ c_1(\lambda) \le N  }} ( d_{m+n}(\lambda) )^2.
\eea
This follows from the relation between the kernels of the maps $ \rho_{ N , m, n } : B_N (m,n) \rightarrow \Vmn $ and  $ \rho_{ N , m+n} : \mC ( S_{ m+n}  ) \rightarrow V_N^{ \otimes m+n } $ explained in section \ref{sec:relkern}. 
Then by comparing right-hand side of~\eqref{eq:Dim1} with right-hand side of~\eqref{eq:Dim2}, we deduce an alternative expression for  $\Delta(m,n;N)$:
\bea\label{identityDims} 
\Delta ( m, n ; N ) 
= \sum_{  \gamma \in \widehat {\BRT}( m,n,N  )     }
( d_{m,n} (\gamma) )^2 - \sum_{ \substack { \lambda \vdash (m+n) \\ c_1(\lambda) \le N  }} ( d_{m+n}(\lambda) )^2. 
\eea
 For $N = ( m+n) -l $ with $l$ small, and $m,n$ arbitrarily large, the two sums on the RHS only involve small numbers of diagrams (controlled by the small $l$) and the summands involve representation theoretic quantities from $S_{m+n}$ and the stable large $N$ regime of $B_N (m,n)$. The symmetric group irrep dimensions can be computed using the standard hook formula, and 
the stable regime dimension is also expressed in terms of hook lengths in \eqref{eq:dimWBA}. In the following we will calculate $\Delta ( m, n ; N )$  from \eqref{eq:deficit} after working out the modifications $ \delta_{ m,n,N} ( \gamma_i ) $ using the RBD, and we will compare with the simpler formula \eqref{identityDims}. The agreement between the two calculations will give a non-trivial check of the computation of the modified dimensions.

\subsection{Modified dimensions for $l=2$}
Irrep $\gamma=(k,\gamma_+,\gamma_{-})$ in $B_N(m,n)$ which receives a modification to its dimension, see Figure~\ref{Fig2} for several examples, is
\begin{align}\label{leq2gam} 
\gamma=(1,[1^{m-1}],[1^{n-1}]),\quad d_{m,n}(\gamma)=\frac{m!n!}{1!(m-1)!(n-1)!}=mn.
\end{align}

\begin{figure}[h!] 
	\centering
	\includegraphics[scale=0.4]{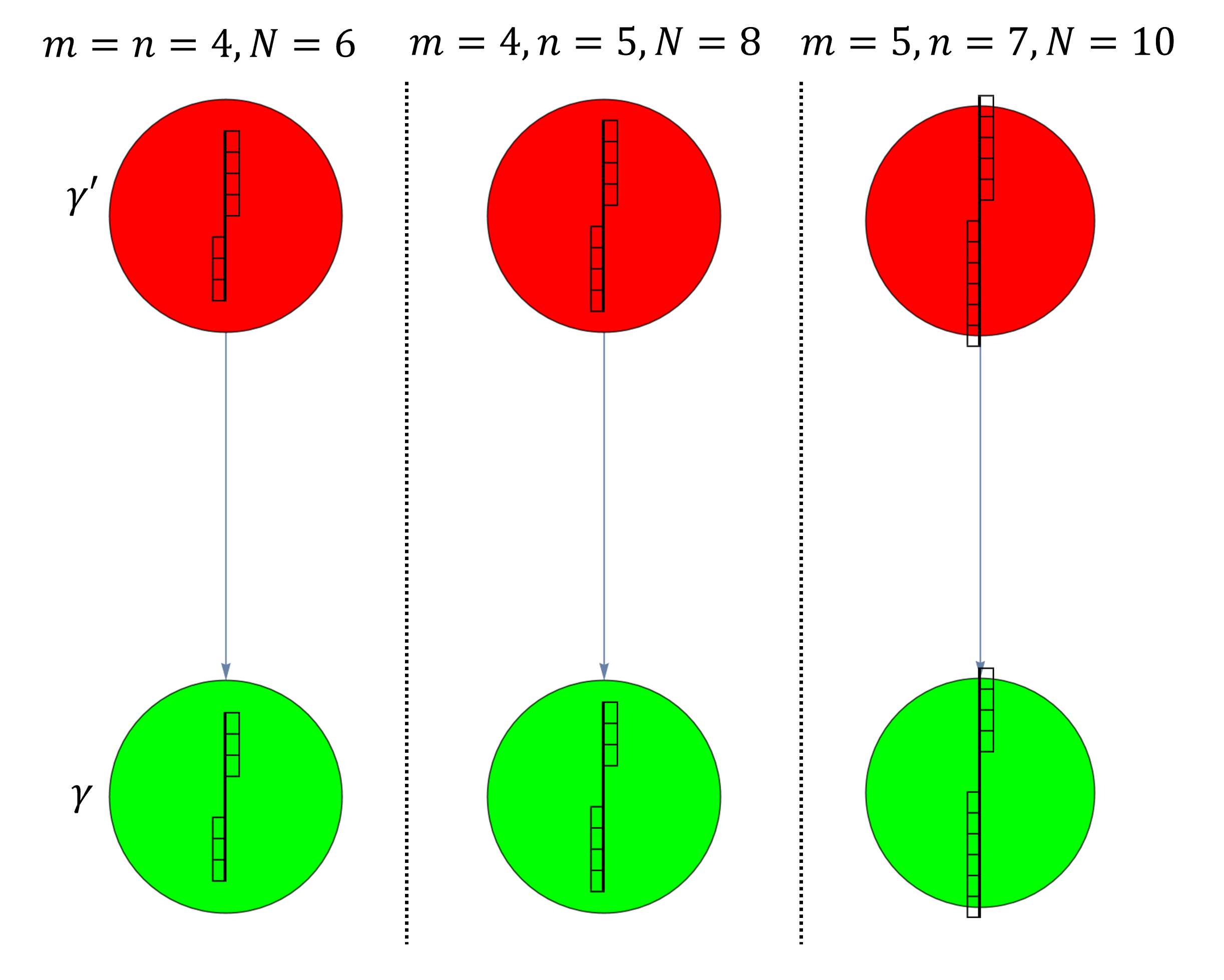}
	\caption{The graphic presents the RBDs for $m=n=4, N=6$, $m=4, n=5, N=7$, and $m=5, n=7, N=10$ counting from the left.} 
	\label{Fig2}
\end{figure} 
The modified dimension is equal to 
\bea 
\widehat{d}_{m,n,m+n-2}(\gamma)=d_{m,n}(\gamma)  - \delta_{m,n,m+n-2}(\gamma).
\eea
The dimension modification $\delta_{m,n,N}(\gamma)$ is expressed in terms of dimension $d_{m,n}(\gamma')$ of the dropped irrep $\gamma'=(0,[1^{n-1}],[1^{m-1}])$ as an irrep of $B_N(m,n-1)$:
\begin{align}\label{leq2delt}
\delta_{m,n,m+n-2}(\gamma)=d_{m,n-1}(\gamma')=\frac{(n-1)!(m-1)!}{0!(n-1)!(m-1)!}=1.
\end{align}
The dimension of the algebra $\mathcal{A}^{N=m+n-2}_{m,n}$ is equal to 
\bea 
\dim(\cA^{N=m+n-2}_{ m , n  })= - \Delta ( m, n ; N = m+n-2)
  + \sum_{ \mu \in \widehat { \BRT} ( m,n , N=m+n-2 )   } ( d_{m,n} (  \mu ))^2 ,
\eea
where $\Delta ( m , n ; N = m+n-2)= ( d_{m,n}(\gamma))^2 - ( d_{m,n}(\gamma)  - \delta_{m,n,m+n-2}(\gamma))^2$. Combining~\eqref{leq2gam} and~\eqref{leq2delt} we get an universal expression for dimension modification:
\begin{equation}
\label{eq:mod1}
\boxed{
\Delta ( m , n ; N = m+n-2)=2mn-1.}
\end{equation}
Computing this $ \Delta ( m,n ; N = m+n-2) $ by using equation~\eqref{identityDims}, we obtain a sequence starting from $ n =2$  to $ n =14$ for $m=2$:
\bea 
\Delta ( 2, 2\leq n\leq 14 ; N = m+n-2)=\{7, 11, 15, 19, 23, 27, 31, 35, 39, 43, 47, 51, 55\},
\eea
which agrees with~\eqref{eq:mod1}.

\subsection{Modified dimensions for $l=3$}

We will describe the calculations of the modified dimension for the  general $2$-parameter case $ (m,n, N  = m+n-3)$, starting from the simpler special cases cases $ (2,n, N  = n-1) $ and $  ( 3,n, N = n)$. 

\subsubsection{Case of $B_N(2,n)$ and $N=n-1$}
As it is depicted in  Figure~\ref{Gen_n2m5N4}, we have two irreps $\gamma=(k,\gamma_+,\gamma_{-})$ in $B_N(2,n)$ which receive a dimension modification:
\begin{align}\label{modleq3gams} 
\gamma_1&=(1,[1],[2,1^{n-3}]),\cr 
\gamma_2&=(2,[\emptyset],[1^{n-2}]).
\end{align}
\begin{figure} 
	\centering
	\includegraphics[scale=0.40]{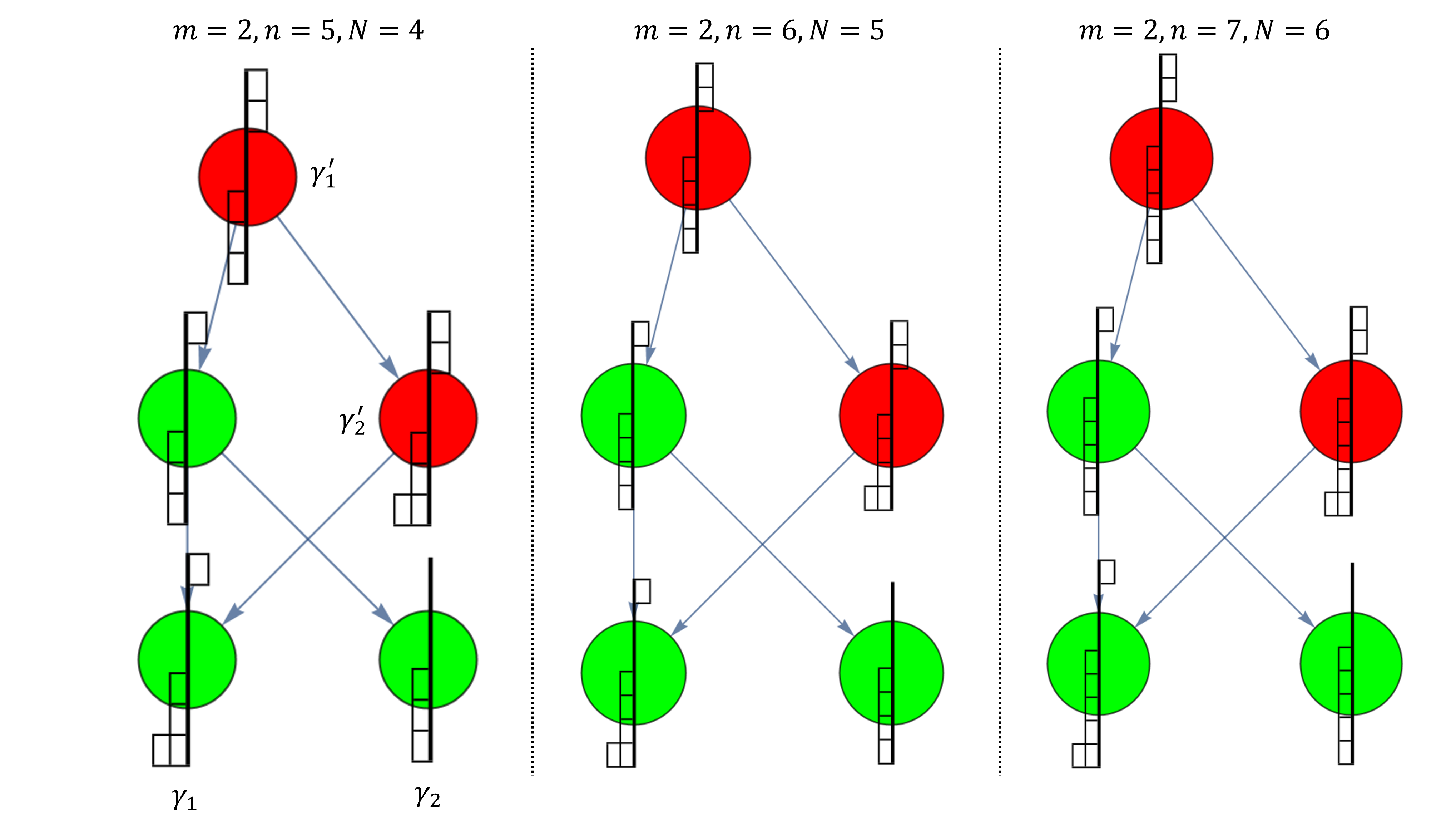}
	\caption{The graphic presents RBDs for $m=2,n=5,N=4$, $m=2, n=6, N=5$, and $m=2, n=7, N=6$ counting from the left. In this case, we have already achieved the stable regime for $m=2, n=3$.} 
	\label{Gen_n2m5N4}
\end{figure} 
Their dimensions are
\begin{align}
d_{2,n}(\gamma_1)&=\frac{2!n!}{1!1!(n-1)(n-3)!}=2n(n-2),\cr 
d_{2,n}(\gamma_2)&=\frac{2!n!}{2!0!(n-2)!}=n(n-1).
\end{align}
The modifications to the dimensions are 
\bea 
\widehat{d}_{2,n,N=n-1}(\gamma_i)=d_{2,n}(\gamma_i)  - \delta_{2,n,N=n-1}(\gamma_i), \quad \text{for}\quad i=1,2.
\eea
The modifications $\delta_{2,n,N=n-1}(\gamma_i)$ are expressed in terms of dimensions $d_{2,n-2}(\gamma'_1),\\ d_{2,n-1}(\gamma'_2)$ of the irreps in red: 
\begin{align}\label{modleq3delts} 
\delta_{2,n,N=n-1}(\gamma_1)&=d_{2,n-2}(\gamma'_1)+d_{2,n-1}(\gamma'_2)=n-1,\cr 
\delta_{2,n,N=n-1}(\gamma_2)&=d_{2,n-2}(\gamma'_1)=1.
\end{align}
The right-hand sides of the above expression are obtained by using the following formulas for dimensions $d_{2,n-2}(\gamma'_1),d_{2,n-1}(\gamma'_2)$:
\begin{itemize}
	\item $d_{2,n-2}(\gamma'_1)$ is the dimension of $\gamma_1'=(0,[1^2],[1^{n-2}])$ as an irrep of $B_N(2,n-2)$ for $N\geq n$
	\begin{align}
	d_{2,n-2}(\gamma'_1)=\frac{2!(n-2)!}{0!2!(n-2)!}=1.
	\end{align}
	\item $d_{2,n-1}(\gamma'_2)$ is the dimension of $\gamma'_2=(0,[1^2],[2,1^{n-3}])$ as an irrep of $B_N(2,n-1)$ for $N\geq n+1$
	\begin{align}
	d_{2,n-1}(\gamma'_2)=\frac{2!(n-1)!}{0!2!(n-1)(n-3)!}=n-2.
	\end{align}
\end{itemize}
The dimension of the algebra $\mathcal{A}^{N=n-1}_{2,n}$ is equal to
\bea 
\dim(\mathcal{A}^{N=n-1}_{2,n})=- \Delta ( m, n ; N = n-1)
+ \sum_{ \mu \in \widehat { \BRT} ( 2,n , N=n-1 )   } ( d_{2,n} (  \mu ))^2 , 
\eea
where 
\begin{equation}
\label{eq:delta2}
\boxed{\begin{split}
\Delta ( 2 , n ; N = n-1)&= \sum_{ i =1}^{ 2 }\left[( d_{2,n} (\gamma_i))^2 - ( d_{2,n} (\gamma_{i})  - \delta_{2,n,N=n-1} )^2\right]\\
&=4n^3-11n^2+8n-2.
\end{split}}
\end{equation}
Computing this $ \Delta ( 2,n ; N = n-1) $ by using equation~\eqref{identityDims}, we obtain a sequence starting from $ n =3$  to $ n =13$ for $m=2$:
\bea 
\begin{split}
&\Delta ( 2, 3\leq n\leq 13 ; N = n-1)\\
&=\{31, 110, 263, 514, 887, 1406, 2095, 2978, 4079, 5422, 7031\}
\end{split}
\eea
which agrees with equation~\eqref{eq:delta2}.
\subsubsection{Case of $B_N(3,n)$ and $N=n$}
\label{sec:m3nNn}
As it is depicted in Figure~\ref{Gen_n3m5N5}, we have three irreps $\gamma=(k,\gamma_+,\gamma_{-})$ in $B_N(3,n)$ which receive a modification to their dimension:
\begin{align}
\gamma_1&=(1,[2],[1^{n-1}]),\cr
\gamma_2&=(1,[1^2],[2,1^{n-3}]),\cr
\gamma_3&=(2,[1],[1^{n-2}]).
\end{align}
Their dimensions are
\begin{align}
d_{3,n}(\gamma_1)&=\frac{3!n!}{1!2!(n-1)!}=3n,\cr
d_{3,n}(\gamma_2)&=\frac{3!n!}{1!2!(n-1)(n-3)!}=3n(n-2),\cr
d_{3,n}(\gamma_3)&=\frac{3!n!}{2!1!(n-2)!}=3n(n-1).
\end{align}
The modifications to the dimensions are 
\bea 
\widehat{d}_{3,n,N=n} (\gamma_i)= d_{3,n}(\gamma_i)  - \delta_{3,n,N=n}(\gamma_i), \quad \text{for}\quad i=1,2,3.
\eea
The modifications $ \delta_{3,n,N=n}(\gamma_i)$ are expressed in terms of dimensions $ d_{3,n-1}(\gamma_1'),\\ d_{3,n-2}(\gamma_2'),d_{3,n-1}(\gamma_3')$  of the irreps in red:
\begin{align}
\delta_{3,n,N=n}(\gamma_1)&=d_{3,n-1}(\gamma_1')=2,\cr
\delta_{3,n,N=n}(\gamma_2)&=d_{3,n-2}(\gamma_2')+d_{3,n-1}(\gamma_3')=n-1,\cr
\delta_{3,n,N=n}(\gamma_3)&=d_{3,n-2}(\gamma_2')=1.
\end{align}
\begin{figure} 
	\centering
	\includegraphics[scale=0.4]{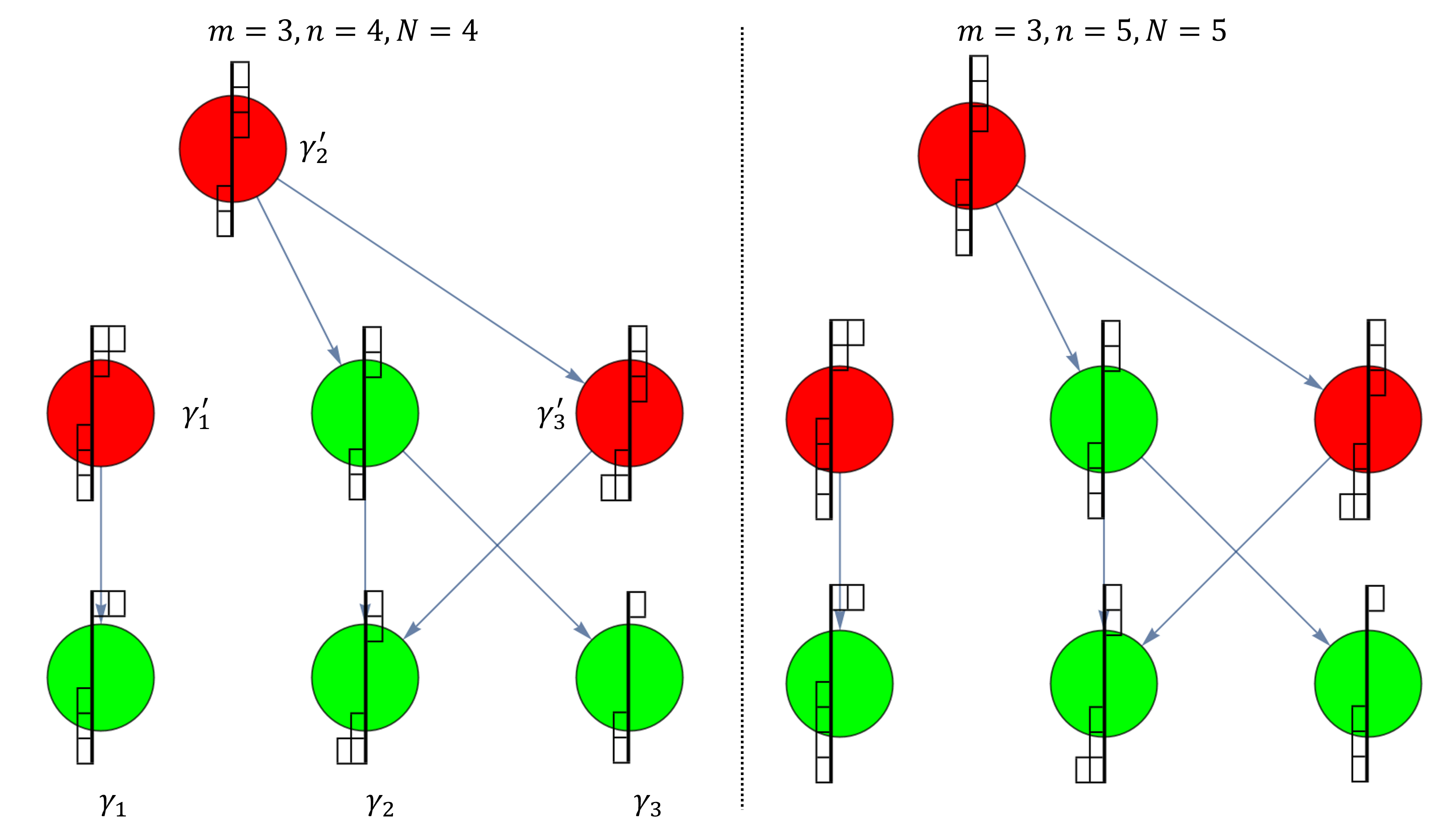}
	\caption{The graphics presents RBDs for $m=3, n=4, N=4$, and $m=3, n=5, N=5$ counting from the left. In this case, we have already achieved the stable regime for $m=3, n=3$.} 
	\label{Gen_n3m5N5}
\end{figure} 
The right-hand sides of the above expression are obtained by using the following formulas for dimensions $d_{3,n-1}(\gamma_1'),d_{3,n-2}(\gamma_2'),d_{3,n-1}(\gamma_3')$:
\begin{itemize}
	\item $d_{3,n-1}(\gamma_1')$ is the dimension of $\gamma_1'=(0,[2,1],[1^{n-1}])$ as an irrep of $ B_N (3,n-1 )$ for $ N \geq n+2$.
	\begin{align}
		d_{3,n-1}(\gamma_1')=\frac{3!(n-1)!}{0!3(n-1)!}=2.
	\end{align}
	\item $d_{3,n-2}(\gamma_2')$ is the dimension of $\gamma_2'=(0,[1^3],[1^{n-2}])$ as an irrep of $ B_N (3,n-2 )$ for $ N \geq n+1$.
	\begin{align}
	d_{3,n-2}(\gamma_2')=\frac{3!(n-2)!}{0!3!(n-2)!}=1.
	\end{align}
	\item $d_{3,n-1}(\gamma_3')$ is the dimension of $\gamma_3'=(0,[1^3],[2,1^{n-3}])$ as an irrep of $ B_N (3,n-1 )$ for $ N \geq n+2$.
	\begin{align}
	d_{3,n-1}(\gamma_3')=\frac{3!(n-1)!}{0!3!(n-1)(n-3)!}=n-2.
	\end{align}
\end{itemize}
The dimension of the algebra $\mathcal{A}^{N=n}_{3,n}$ is equal to
\bea 
\dim(\mathcal{A}^{N=n}_{3,n})=- \Delta ( 3, n ; N = n)
+ \sum_{ \mu \in \widehat { \BRT} ( 3,n , N=n )   } ( d_{3,n} (  \mu ))^2 ,
\eea
where 
\begin{equation}
	\label{eq:3nN=n}
	\boxed{
	\begin{split}
\Delta ( 3 , n ; N = n)&= \sum_{ i =1}^{ 3 }\left[( d_{3,n}(\gamma_i))^2 - ( d_{3,n}(\gamma_i)  - \delta_{3,n,N=n}(\gamma_i) )^2\right]\\
&= 6n^3-13n^2+20n-6.
\end{split}}
\end{equation}
Computing the $ \Delta ( 3,n ; N = n ) $ by using equation~\eqref{identityDims}, we obtain a sequence starting from $ n =3$  to $ n =13$ 
\bea 
\begin{split}
&\Delta ( 3, 3\leq n\leq 13 ; N = n)\\
 &=\{99, 250, 519, 942, 1555, 2394, 3495, 4894, 6627, 8730, 11239 \},
\end{split}
\eea
which agrees with~\eqref{eq:3nN=n}.
\subsubsection{Case of $B_N(m,n)$ and $N=m+n-3$}
As we pointed out in Figure~\ref{Gen_n3m5N5}, in this case we achieve the stable regime already for $m=3$. This allows us to present here the analysis for an arbitrary $m\geq 3$. Irreps $\gamma=(k,\gamma_+,\gamma_{-})$ in $B_N(m,n)$ which receive  modifications to their dimension are
\begin{align}
\gamma_1&=(1,[2,1^{m-3}],[1^{n-1}]),\cr 
\gamma_2&=(1,[1^{m-1}],[2,1^{n-3}]),\cr 
\gamma_3&=(2,[1^{m-2}],[1^{n-2}]).
\end{align}
Their dimensions are
\begin{align}
d_{m,n}(\gamma_1)&=\frac{m!n!}{1!(m-1)(m-3)!(n-1)!}=mn(m-2),\cr 
d_{m,n}(\gamma_2)&=\frac{m!n!}{1!(m-1)!(n-1)(n-3)!}=mn(n-2),\cr 
d_{m,n}(\gamma_3)&=\frac{m!n!}{2!(m-2)!(n-2)!}=\frac{1}{2}mn(m-1)(n-1).
\end{align}
The modifications to the dimensions are 
\bea 
\widehat{d}_{m,n}(\gamma_i) = d_{m,n}(\gamma_i)  - \delta_{m,n,N=m+n-3}(\gamma_i), \quad \text{for}\quad i=1,2,3.
\eea
The modifications $ \delta_{m,n,N=m+n-3}(\gamma_i)$ are expressed in terms of dimensions $ d_{m,n-1}(\gamma_1'),\\ d_{m,n-2}(\gamma_2'),d_{m,n-1}(\gamma_3')$ of the irreps in red, as it is depicted in Figure~\ref{Gen_n3m5N5}:
\begin{align}
\delta_{m,n,N=m+n-3}(\gamma_1)&=d_{m,n-1}(\gamma_1')=m-1,\cr 
\delta_{m,n,N=m+n-3}(\gamma_2)&=d_{m,n-2}(\gamma_2')+d_{m,n-1}(\gamma_3')=n-1,\cr 
\delta_{m,n,N=m+n-3}(\gamma_3)&=d_{m,n-2}(\gamma_2')=1.
\end{align}
The right-hand sides of the above expression are obtained by using the following formulas for dimensions $d_{m,n-1}(\gamma_1'), d_{m,n-2}(\gamma_2'), d_{m,n-1}(\gamma_3')$:
\begin{itemize}
	\item $d_{m,n-1}(\gamma_1')$ is the dimension of $\gamma_1'=(0,[2,1],[1^{n-1}])$ as an irrep of $ B_N (m,n-1 )$ for $ N \geq m+n-1$.
	\begin{align}
		d_{m,n-1}(\gamma_1')=\frac{m!(n-1)!}{0!m(m-2)!(n-1)!}=m-1.
	\end{align}
	\item $d_{m,n-2}(\gamma_2')$ is the dimension of $\gamma_2'=(0,[1^m],[1^{n-2}])$ as an irrep of $ B_N (m,n-2 )$ for $ N \geq m+n-2$.
	\begin{align}
		d_{m,n-2}(\gamma_2')=\frac{m!(n-2)!}{0!m!(n-2)!}=1.
	\end{align}
	\item $d_{m,n-1}(\gamma_3')$ is the dimension of $\gamma_3'=(0,[1^m],[2,1^{n-3}])$ as an irrep of $ B_N (m,n-1 )$ for $ N \geq m+n-1$.
	\begin{align}
		d_{m,n-1}(\gamma_3')=\frac{m!(n-1)!}{0!m!(n-1)(n-3)!}=n-2.
	\end{align}
\end{itemize}
The dimension of the algebra $\mathcal{A}^{N=m+n-3}_{m,n}$ is equal to
\bea 
\dim(\mathcal{A}^{N=m+n-3}_{m,n})=- \Delta ( m, n ; N = m+n-3)
+ \sum_{ \mu \in \widehat { \BRT} ( m,n , N=m+n-3 )   } ( d_{m,n} (  \mu ))^2 ,
\eea
where 
\begin{equation}
	\boxed{
	\begin{split}
&\Delta ( m , n ; N = m+n-3)=\\
&= \sum_{ i =1}^{ 3 }\left[( d_{m,n} (\gamma_i))^2 - ( d_{m,n}(\gamma_i)  - \delta_{m,n,N=m+n-3}(\gamma_i))^2\right]\\
&=m(2n^3-7n^2+9n+2)+m^2(n^2-7n-1)+2m^3n+2n-n^2-3.
\end{split}}
\end{equation}
Computing the $ \Delta ( m =4 ,n ; N = m+n-3 )$ by using equation~\eqref{identityDims}, we obtain the a sequence starting  from $ n =3$  to $ n =13$: 
\bea 
\begin{split}
&\Delta ( 4, 3\leq n\leq 13 ; N =n+1)\\  &=\{250, 509, 934, 1573, 2474, 3685, 5254, 7229, 9658, 12589, 16070 \}.
\end{split}
\eea
which agrees with ~\eqref{eq:mod1}.

Substituting in all above expressions $m=3$ we recover all results from subsection~\ref{sec:m3nNn}.

\subsection{Modified dimensions  for $l=4$}

We will describe the calculations of the modified dimension for the  general $2$-parameter case $ (m,n, N  = m+n-4)$, starting from the simpler special cases cases $ (2,n, N  = n-2) $ and $  ( 3,n, N = n-1)$. 

\subsubsection{Case of $B_N(2,n)$ and $N=n-2$}
As it is depicted in Figure~\ref{Gen_n2m7N5}, we have four irreps $\gamma=(k,\gamma_+,\gamma_{-})$ in $B_N(2,n)$ which receive modifications to their dimension:
\begin{align}
&\gamma_1 = (1,[1] , [3,1^{n-4}] ),\cr 
&\gamma_2 = (1,[1] , [2^2,1^{n-3} ]),\cr 
&\gamma_3 = (2,[\emptyset] , [2,1^{n-4}]),\cr 
&\gamma_4 = (2,[\emptyset] , [1^{n-2}]).
\end{align}

\begin{figure} 
	\centering
	\includegraphics[scale=0.4]{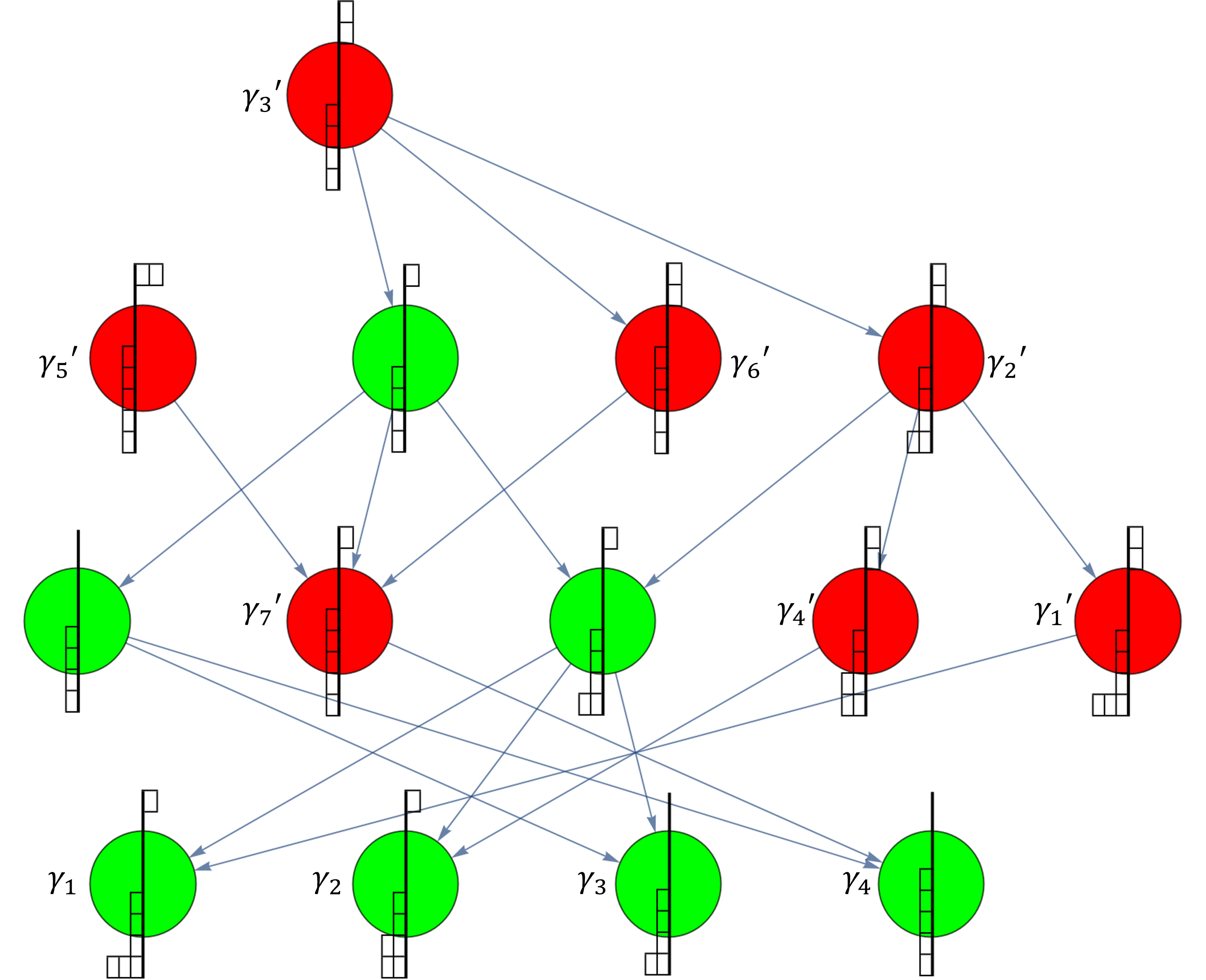}
	\caption{The graphics presents RBD for $m=2, n=7, N=5$. In this case, we have already achieved the stable regime for $m=2, n=5, N=3$.} 
	\label{Gen_n2m7N5}
\end{figure} 

Their dimensions are
\begin{align} 
&d_{2,n}( \gamma_1) =  { 2! n! \over 1!1! ( n-4)! 2 ( n-1) } = n ( n-2) ( n-3),\cr 
&d_{2,n} ( \gamma_2 ) =  { 2 ! n! \over 1!1! ( n-5)! 2 ( n-2) ( n-3) } = n ( n -1) ( n-4),\cr 
&d_{2,n} ( \gamma_3 ) = {2! ( n)! \over 2! ( n-4)! ( n-2) } = n ( n-1) (n-3),\cr 
&d_{2,n}( \gamma_4 )  =  { 2! n! \over 2! ( n-2)! } = n ( n-1).
\end{align}
The modifications to their dimensions are 
\bea 
\widehat{d}_{2,n,N=n-2}(\gamma_i) = d_{2,n}(\gamma_i)  - \delta_{2,n,N=n-2}(\gamma_i), \quad \text{for}\quad i=1,\ldots,4.
\eea
The $\delta_{2,n,N=n-2}(\gamma_i)$ are expressed in terms of dimensions $d_{2,n-1}(\gamma_1'), d_{2,n-2}(\gamma_2'),\\ \ldots,d_{2,n-1}(\gamma_7')$ of the irreps in red from Figure~\ref{Gen_n2m7N5}: 
\begin{align}
\delta_{2,n,N-2}(\gamma_1) &= ( d_{2,n-1}( \gamma_1') - d_{2,n-2}(\gamma_2')  ) + ( d_{2,n-2}(\gamma_2') - d_{2,n-3}(\gamma_3') )+\\
 &+ d_{2,n-3}(\gamma_3') + d_{2,n-3}(\gamma_3') + d_{2,n-2}(\gamma_2')\\
 &= d_{2,n-1}( \gamma_1') +d_{2,n-2}(\gamma_2') + d_{2,n-3}(\gamma_3')=\frac{1}{2}(n^2-3n+2),\\
\delta_{2,n,N-2}(\gamma_2) &= ( d_{2,n-2}(\gamma_2') - d_{2,n-3}(\gamma_3')) + d_{2,n-3}(\gamma_3') +\\
&+ (d_{2,n-1}(\gamma_4') - d_{2,n-2}(\gamma_2') ) + d_{2,n-3}(\gamma_3') + d_{2,n-2}(\gamma_2') \\
& = d_{2,n-2}(\gamma_2') + d_{2,n-3}(\gamma_3') + d_{2,n-1}(\gamma_4')=\frac{1}{2}n(n-3),\\ 
\delta_{2,n,N-2}(\gamma_3)& = d_{2,n-2}(\gamma_2')+2 d_{2,n-3}(\gamma_3')=n-1, \\
\delta_{2,n,N-2}(\gamma_4)&= ( d_{2,n-1}(\gamma_7') - d_{2,n-1}(\gamma_5') - d_{2,n-1}(\gamma_5') - d_{2,n-3}(\gamma_3') ) +\\
&+ d_{2,n-1}(\gamma_5') + ( d_{2,n-1}(\gamma_5') - d_{2,n-2}(\gamma_2') ) + 3d_{2,n-3}(\gamma_3')\\
&= d_{2,n-3}(\gamma_3') + d_{2,n-1}(\gamma_7')=2n-1.
\end{align}
The right-hand sides of the above expression are obtained by using the following formulas for the respective dimensions:
\begin{itemize}
	\item $d_{2,n-1}(\gamma_1')$ is the dimension of $\gamma_1'=(0, [1^2] , [3,1^{n-4} ]) $ as an irrep of $ B_N ( 2, n-1)$ for $ N \ge n+1 $. 
	\bea 
	d_{2,n-1}(\gamma_1') = { 2! ( n-1)! \over 0! 2! ( n-4)! 2 ( n-1) } = { ( n -2 ) ( n-3) \over 2 } .
	\eea 
	\item $d_{2,n-2}(\gamma_2')$ is the dimension of $\gamma_2'=(0, [1^2 ] , [ 2,1^{ n-4} ] ) $ as an irrep of $ B_N ( 2,n-2) $ for $ N \ge n$. 
	\bea 
	d_{2,n-2}(\gamma_2') = { 2! ( n-2)! \over 0! 2! ( n-4)! ( n-2) } = ( n-3). 
	\eea
	\item $d_{2,n-3}(\gamma_3')$ is the dimension of $ \gamma_3'=(0,[1^2] , [1^{n-3}] ) $ as an irrep of $ B_N  ( 2,n-3)$ for $ N \ge n-1 $. 
	\bea 
	d_{2,n-3}(\gamma_3') = { 2! ( n-3)! \over 0!2! ( n-3)! } = 1.
	\eea
	\item $d_{2,n-1}(\gamma_4')$ is the dimension from $ \gamma_4'=(0, [1^2] , [ 2^2 , 1^{n-5} ] $ as an irrep of $ B_N ( 2, n-1) $
	for $ N \ge  n+1$.
	\bea 
	d_{2,n-1}(\gamma_4') = { 2! ( n-10! \over 0!2! ( n-5)! 2 ( n-2) ( n-3) } = { ( n-1) ( n-4 ) \over 2 }.
	\eea 
	\item $d_{2,n-1}(\gamma_5')$ is the dimension of $ \gamma_5'=(0,[2], [1^{ n-5} ])$ as an irrep of $ B_N ( 2 , n-1 )$ for $ N \ge n+1 $
	\bea 
	d_{2,n-1}(\gamma_5') = { 2! ( n-2)! \over 0!2! ( n-2)! } = 1.
	\eea 
	\item $d_{2,n-2}(\gamma_6')$ is the dimension of $\gamma_6'=( 0,[1^2] , [1^{ n-2}] ) $ as an irrep of $ B ( 2, n-2) $ for $ N \ge n$ 
	\bea 
	d_{2,n-2}(\gamma_6') = { 2! ( n-2)! \over 0! 2! ( n-2)! } = 1.
	\eea
	\item $d_{2,n-1}(\gamma_7')$ is the dimension of $\gamma_7'= (1,[1] , [1^{ n-2} ] )$ as an irrep of $ B_N ( 2,n-1) $ for $ N \ge n+1$ 
	\bea 
	d_{2,n-1}(\gamma_7') = { 2! ( n-1)! \over 1!1! ( n-2)! } = 2 ( n-1).
	\eea
\end{itemize}
The dimension of the algebra $\mathcal{A}^{N=n-2}_{2,n}$ is equal to
\bea 
\dim(\mathcal{A}^{N=n-2}_{2,n})=- \Delta ( 2, n ; N = n-2)
+ \sum_{ \mu \in \widehat { \BRT} ( 2,n , N=n-2 )   } ( d_{2,n} (  \mu ))^2 
\eea
where

 \begin{equation}
	\label{eq:2nN=n-2}
	\boxed{
	\begin{split}
\Delta ( 2 , n ; N = n-2)&= \sum_{ i =1}^{ 4 }\left[( d_{2,n}(\gamma_i))^2 - ( d_{2,n}(\gamma_i)  - \delta_{2,n,N=n-2}(\gamma_i) )^2\right]\\
&= \frac{1}{2} \left(4 n^5-29 n^4+78 n^3-85 n^2+34 n-6\right).
\end{split}}
\end{equation}
Computing the $ \Delta ( 2,n ; N = n-2 ) $ by using equation~\eqref{identityDims}, we obtain  a sequence starting from $ n =5$  to $ n =14$: 
\bea 
\begin{split}
&\Delta ( 2,n ; N = n-2 )\\
&=\{1082, 3753, 10210, 23525, 48102, 89917, 156758, 258465, 407170, \
617537 \},
\end{split}
\eea
which agrees with~\eqref{eq:2nN=n-2}.

\subsubsection{Case of $B_N(3,n)$ and $N=n-1$}
As it is depicted in Figure~\ref{Gen_3n}, we have six irreps $\gamma=(k,\gamma_+,\gamma_{-})$ in $B_N(3,n)$ which receive modifications to their dimension:
\begin{align}
&\gamma_1=(1,[2],[2,1^{n-3}]),\cr
&\gamma_2=(1,[1^2],[3,1^{n-4}]),\cr
&\gamma_3=(1,[1^2],[2^2,1^{n-5}]),\cr
&\gamma_4=(2,[1],[2,1^{n-4}]),\cr
&\gamma_5=(2,[1],[1^{n-2}]),\cr
&\gamma_6=(3,\emptyset,[1^{n-3}]).
\end{align}
Their dimensions are:
\begin{align}
&d_{3,n}(\gamma_1)=\frac{3!n!}{2!(n-3)!(n-2)}=3n(n-2),\cr
&d_{3,n}(\gamma_2)=\frac{3!n!}{2!(n-4)!(n-1)2}=\tfrac{3}{2}n(n-2)(n-3),\cr
&d_{3,n}(\gamma_3)=\frac{3!n!}{2!(n-5)!(n-2)(n-1)2}=\tfrac{3}{2}n(n-1)(n-4),\cr
&d_{3,n}(\gamma_4)=\frac{3!n!}{2!(n-4)!(n-2)}=3n(n-1)(n-3),\cr
&d_{3,n}(\gamma_5)=\frac{3!n!}{2!(n-2)!}=3n(n-1),\cr
&d_{3,n}(\gamma_6)=\frac{3!n!}{3!(n-3)!}=n(n-1)(n-2).
\end{align}

\begin{figure} 
	\includegraphics[scale=0.45]{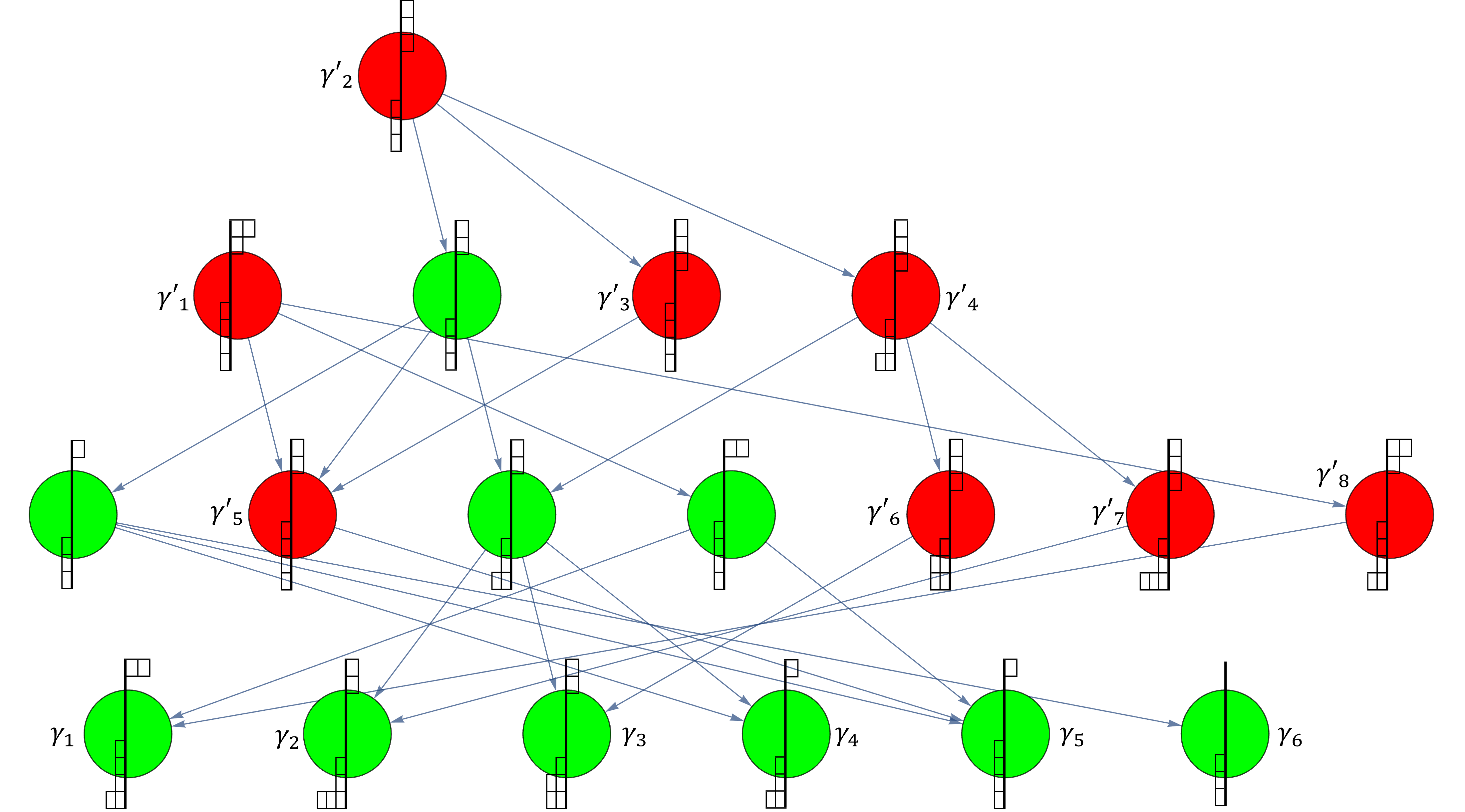}
	\caption{The graphic presents RBD for $m=3, n=6$ and $N=5$. The stable region is achieved for $m=3, n=5$ and $N=4$.} 
	\label{Gen_3n}
\end{figure}

The modifications to the dimensions are 
\bea 
\widehat{d}_{3,n}(\gamma_i)=d_{3,n} (\gamma_i)  - \delta_{3,n,N=n-1}(\gamma_i), \quad \text{for}\quad  i=1,\ldots,6.
\eea
The $\delta_{3,n,N=n-1}(\gamma_i)$ are expressed in terms of dimensions of the irreps in red:
\begin{align}
\delta_{3,n,N=n-1}(\gamma_1)&=d_{3,n-2}(\gamma_1')+(d_{3,n-1}(\gamma_8')-d_{3,n-2}(\gamma_1'))+d_{3,n-2}(\gamma_1')=d_{3,n-2}(\gamma_1')+d_{3,n-1}(\gamma_8')\cr 
&=2(n-1),\cr 
\delta_{3,n,N=n-1}(\gamma_2)&=3d_{3,n-3}(\gamma_2')+2(d_{3,n-2}(\gamma_4')-d_{3,n-3}(\gamma_2'))+(d_{3,n-1}(\gamma_7')-d_{3,n-2}(\gamma_4'))\cr 
&=d_{3,n-3}(\gamma_2')+d_{3,n-2}(\gamma_4')+d_{3,n-1}(\gamma_7')\cr 
&=\frac{1}{2}(n-1)(n-2),\cr 
\delta_{3,n,N=n-1}(\gamma_3)&=3d_{3,n-3}(\gamma_2')+2(d_{3,n-2}(\gamma_4')-d_{3,n-3}(\gamma_2'))+(d_{3,n-1}(\gamma_6')-d_{3,n-2}(\gamma_4'))\cr 
&=d_{3,n-3}(\gamma_2')+d_{3,n-2}(\gamma_4')+d_{3,n-1}(\gamma_6')\cr 
&=\frac{1}{2}n(n-3),\cr 
\delta_{3,n,N=n-1}(\gamma_4)&=d_{3,n-3}(\gamma_2')+(d_{3,n-2}(\gamma_4')-d_{3,n-3}(\gamma_2'))+d_{3,n-3}(\gamma_2')+d_{3,n-3}(\gamma_2')\cr 
&=2d_{3,n-3}(\gamma_2')+d_{3,n-2}(\gamma_4')=n-1,\cr 
\delta_{3,n,N=n-1}(\gamma_5)&=(d_{3,n-1}(\gamma_5')-d_{3,n-2}(\gamma_1')-d_{3,n-3}(\gamma_2')-d_{3,n-2}(\gamma_3'))\cr 
&+(d_{3,n-2}(\gamma_3')-d_{3,n-3}(\gamma_2'))+3d_{3,n-3}(\gamma_2')+2d_{3,n-2}(\gamma_1')\cr 
&=d_{3,n-2}(\gamma_1')+d_{3,n-3}(\gamma_2')+d_{3,n-1}(\gamma_5')\cr 
&=3n,\cr 
\delta_{3,n,N=n-1}(\gamma_6)&=d_{3,n-3}(\gamma_2')=1.
\end{align}

The right-hand sides of the above expression are obtained by using the following formulas for the dimensions:
\begin{itemize}
	\item $d_{3,n-2}(\gamma_1')$ is the dimension from $\gamma_1'=(0,[2,1],[1^{n-2}])$ as an irrep of $B_N(3,n-2)$, for $N\geq n+1$
	\begin{align}
	d_{3,n-2}(\gamma_1')=\frac{3!(n-2)!}{3(n-2)!}=2.
	\end{align}
	\item $d_{3,n-3}(\gamma_2')$ is the dimension from $\gamma_2'=(0,[1^3],[1^{n-3}])$ as an irrep of $B_N(3,n-3)$, for $N\geq n$
	\begin{align}
	d_{3,n-3}(\gamma_2')=\frac{3!(n-3)!}{3!(n-3)!}=1.
	\end{align}
	\item $d_{3,n-2}(\gamma_3')$ is the dimension from $\gamma_3'=(0,[1^3],[1^{n-2}])$ as an irrep of $B_N(3,n-2)$, for $N\geq n+1$
	\begin{align}
	d_{3,n-2}(\gamma_3')=\frac{3!(n-2)!}{3!(n-2)!}=1.
	\end{align}
	\item $d_{3,n-2}(\gamma_4')$ is the dimension from $\gamma_4'=(0,[1^3],[2,1^{n-4}])$ as an irrep of $B_N(3,n-2)$, for $N\geq n+1$
	\begin{align}
	d_{3,n-2}(\gamma_4')=\frac{3!(n-2)!}{3!(n-4)!(n-2)}=n-3.
	\end{align}
	\item $d_{3,n-1}(\gamma_5')$ is the dimension from $\gamma_5'=(1,[1^2],[1^{n-2}])$ as an irrep of $B_N(3,n-1)$, for $N\geq n+2$
	\begin{align}
	d_{3,n-1}(\gamma_5')=\frac{3!(n-1)!}{2!(n-2)!}=3(n-1).
	\end{align}
	\item $d_{3,n-1}(\gamma_6')$ is the dimension from $\gamma_6'=(0,[1^3],[2^2,1^{n-5}])$ as an irrep of $B_N(3,n-1)$, for $N\geq n+2$
	\begin{align}
	d_{3,n-1}(\gamma_6')=\frac{3!(n-1)!}{3!(n-5)!(n-3)(n-2)2}=\frac{1}{2}(n-1)(n-4).
	\end{align}
	\item $d_{3,n-1}(\gamma_7')$ is the dimension from $\gamma_7'=(0,[1^3],[3,1^{n-4}])$ as an irrep of $B_N(3,n-1)$, for $N\geq n+2$
	\begin{align}
	d_{3,n-1}(\gamma_7')=\frac{3!(n-1)!}{3!(n-4)!(n-1)2}=\frac{1}{2}(n-2)(n-3).
	\end{align}
	\item $d_{3,n-1}(\gamma_8')$ is the dimension from $\gamma_8'=(0,[2,1],[2,1^{n-3}])$ as an irrep of $B_N(3,n-1)$, for $N\geq n+2$
	\begin{align}
	d_{3,n-1}(\gamma_8')=\frac{3!(n-1)!}{3(n-3)!(n-1)}=2(n-2).
	\end{align}
\end{itemize}

The dimension of the algebra $\mathcal{A}^{N=n-1}_{3,n}$ is equal to
\bea 
\dim(\mathcal{A}^{N=n-1}_{3,n})=- \Delta ( 3, n ; N = n-1) + \sum_{ \mu \in \widehat { \BRT} ( 3,n , N=n-1 )   }
( d_{3,n}(\mu) )^2 , 
\eea
where 
\begin{equation}
\label{DiagFormula2}
\boxed{
\begin{split}  
\Delta ( 3 , n ; N = n-1)  &= \sum_{ i =1}^{ 6 }\left[( d_{3,n}(\gamma_i))^2 - ( d_{3,n}(\gamma_i)  - \delta_{3,n,N=n-1}(\gamma_i) )^2\right]\\
&=-7 + 41 n - \frac{195}{2}n^2 + 68 n^3 - \frac{37}{2} n^4 + 3 n^5.
\end{split}}
\end{equation}
Computing the $ \Delta ( 3,n ; N = n-1 ) $ by using equation~\eqref{identityDims}, we obtain a sequence starting from $n=5$ to $n=10$:
\bea 
\Delta ( 3 , n ; N = n-1)=\{4073,10769,24829,51425,97805 \}
\eea
which agrees with~\eqref{DiagFormula2}.

\subsubsection{Case of $B_N(m,n)$ and $N=m+n-4$}
\label{subSub}
As it is depicted in Figure~\ref{Gen_mnk4}, we have nine irreps $\gamma=(k,\gamma_+,\gamma_{-})$ in $B_N(m,n)$ which receive modifications to their dimension:
\begin{align}\label{leq4modgams} 
&\gamma_1=(1,[3,1^{m-4}],[1^{n-1}]),\cr 
&\gamma_2=(1,[2^2,1^{m-5}],[1^{n-1}]),\cr 
&\gamma_3=(1,[2,1^{m-3}],[2,1^{n-3}]),\cr 
&\gamma_4=(2,[2,1^{m-4}],[1^{n-2}]),\cr 
&\gamma_5=(1,[1^{m-1}],[3,1^{n-4}]),\cr 
&\gamma_6=(1,[1^{m-1}],[2^2,1^{n-5}]),\cr 
&\gamma_7=(2,[1^{m-2}],[2,1^{n-4}]),\cr 
&\gamma_8=(2,[1^{m-2}],[1^{n-2}]),\cr
&\gamma_9=(3,[1^{m-3}],[1^{n-3}]).
\end{align}
Their dimensions are:
\begin{align}
&d_{m,n}(\gamma_1)=\frac{m!n!}{(m-4)!2(m-1)(n-1)!}=\frac{1}{2}mn(m-2)(m-3),\cr 
&d_{m,n}(\gamma_2)=\frac{m!n!}{(m-5)!2(m-3)(m-2)(n-1)!}=\frac{1}{2}nm(m-1)(m-4),\cr 
&d_{m,n}(\gamma_3)=\frac{m!n!}{(m-3)!(m-1)(n-3)!(n-1)}=mn(m-2)(n-2),\cr 
&d_{m,n}(\gamma_4)=\frac{m!n!}{2!(m-4)!(m-2)(n-2)!}=\frac{1}{2}mn(m-3)(m-1)(n-1),\cr 
&d_{m,n}(\gamma_5)=\frac{m!n!}{(m-1)!(n-4)!(n-1)2}=\frac{1}{2}mn(n-3)(n-2),\cr 
&d_{m,n}(\gamma_6)=\frac{m!n!}{(m-1)!(n-5)!2(n-3)(n-2)}=\frac{1}{2}mn(n-1)(n-4),\cr 
&d_{m,n}(\gamma_7)=\frac{m!n!}{2!(m-2)!(n-4)!(n-2)}=\frac{1}{2}mn(m-1)(n-1)(n-3),\cr 
&d_{m,n}(\gamma_8)=\frac{m!n!}{2!(m-2)!(n-2)!}=\frac{1}{2}mn(m-1)(n-1),\cr 
&d_{m,n}(\gamma_9)=\frac{m!n!}{3!(m-3)!(n-3)!}=\frac{1}{6}mn(m-1)(n-1)(m-2)(n-2).
\end{align}
The modifications to the dimensions are 
\bea 
\widehat{d}_{m,n}(\gamma_i)= d_{m,n} (\gamma_i)  - \delta_{m,n,N=m+n-4}(\gamma_i), \quad \text{for}\quad  i=1,\ldots,9.
\eea
These $\widehat{d}_{m,n}(\gamma_i)$ are expressed in terms of dimensions of the irreps in red:
\begin{align} 
\delta_{m,n,N=m+n-4}(\gamma_1)&=d_{m,n-1}(\gamma_2')=\frac{1}{2}(m-1)(m-2),\cr 
\delta_{m,n,N=m+n-4}(\gamma_2)&=d_{m,n-1}(\gamma_1')=\frac{1}{2}m(m-3),\cr 
\delta_{m,n,N=m+n-4}(\gamma_3)&=d_{m,n-2}(\gamma_3')+(d_{m,n-1}(\gamma_{10}')-d_{m,n-2}(\gamma_3'))+d_{m,n-2}(\gamma_3')\cr
&=d_{m,n-2}(\gamma_3')+d_{m,n-1}(\gamma_{10}')=(m-1)(n-1),\cr 
\delta_{m,n,N=m+n-4}(\gamma_4)&=d_{m,n-2}(\gamma_3')=m-1,\cr 
\delta_{m,n,N=m+n-4}(\gamma_5)&=d_{m,n-3}(\gamma_4')+d_{m,n-2}(\gamma_7')+d_{m,n-1}(\gamma_9')\cr
&=\frac{1}{2}(n^2-3n+2)=\frac{1}{2}(n-1)(n-2),\cr
\delta_{m,n,N=m+n-4}(\gamma_6)&=d_{m,n-3}(\gamma_4')+d_{m,n-2}(\gamma_7')+d_{m,n-1}(\gamma_8')=\frac{1}{2}n(n-3),\cr
\delta_{m,n,N=m+n-4}(\gamma_7)&=d_{m,n-3}(\gamma_4')+(d_{m,n-2}(\gamma_7')-d_{m,n-3}(\gamma_4'))\cr
&+d_{m,n-3}(\gamma_4')+d_{m,n-3}(\gamma_4')\cr
&=2d_{m,n-3}(\gamma_4')+d_{m,n-2}(\gamma_7')=n-1,\cr
\delta_{m,n,N=m+n-4}(\gamma_8)&=d_{m,n-2}(\gamma_3')+d_{m,n-3}(\gamma_4')+d_{m,n-1}(\gamma_5')=mn,\cr
\delta_{m,n,N=m+n-4}(\gamma_9)&=d_{m,n-3}(\gamma_4')=1.\cr 
& \label{leq4moddelts2} 
\end{align}
\begin{figure} 
	\includegraphics[scale=0.5]{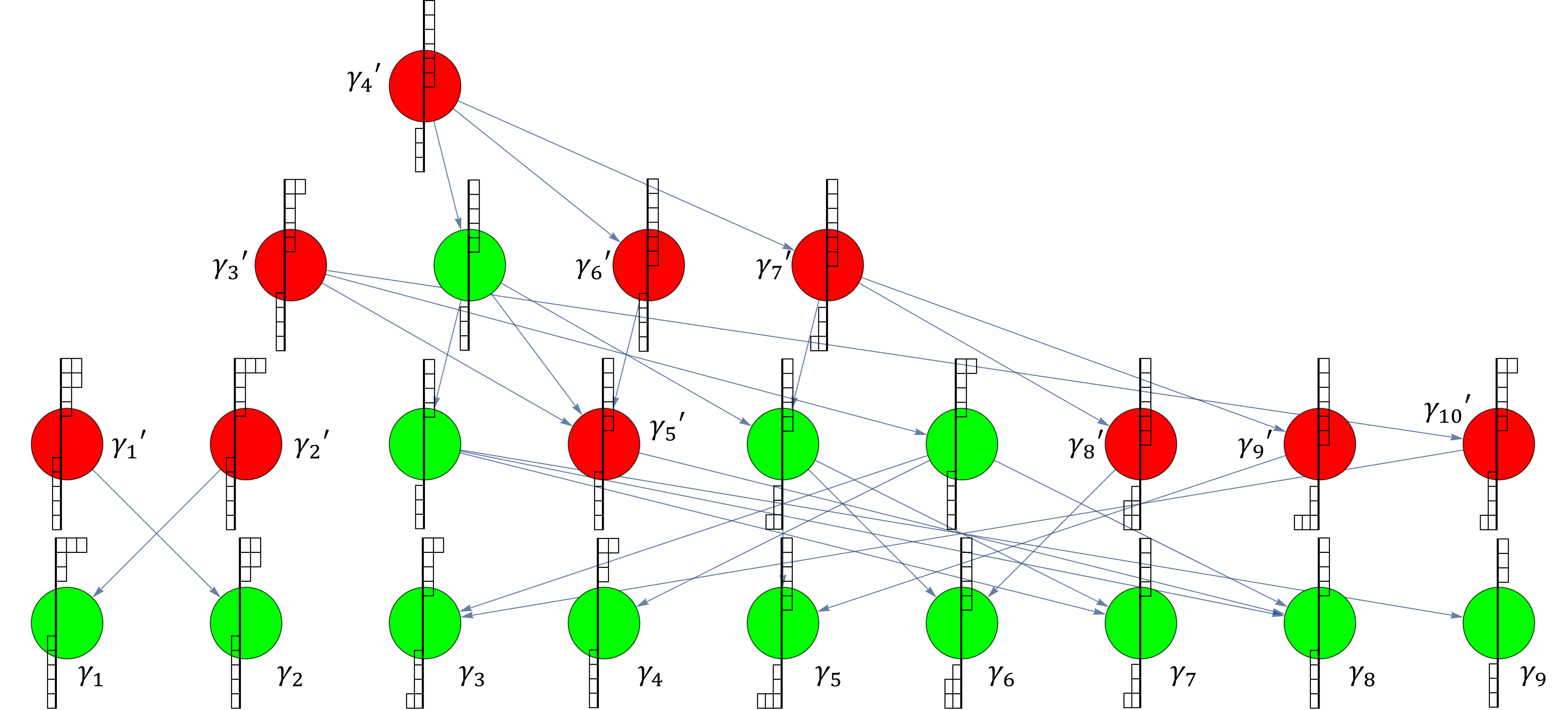}
	\caption{The graphic presents RBD for $m=n=6$ and $N=8$. The stable regime is achieved for $m=n=5$ and $N=6$.} 
	\label{Gen_mnk4}
\end{figure} 
The right-hand sides of the above expression are obtained by using the following formulas for dimensions $\gamma'_i$:
\begin{itemize}
	\item $d_{m,n-1}(\gamma_1')$ is the dimension from $\gamma_1'=(0,[2^2,1^{m-4}],[1^{n-1}])$ as an irrep of $B_N(m,n-1)$, for $N\geq n+m-1$
	\begin{align}
	d_{m,n-1}(\gamma_1')=\frac{m!(n-1)!}{(m-4)!(m-2)(m-1)2(n-1)!}=\frac{1}{2}m(m-3).
	\end{align}
	\item $d_{m,n-1}(\gamma_2')$ is the dimension from $\gamma_2'=(0,[3,1^{m-3}],[1^{n-1}])$ as an irrep of $B_N(m,n-1)$, for $N\geq n+m-1$
	\begin{align}
	d_{m,n-1}(\gamma_2')=\frac{m!(n-1)!}{(m-3)!m2(n-1)!}=\frac{1}{2}(m-1)(m-2).
	\end{align}
	\item $d_{m,n-2}(\gamma_3')$ is the dimension from $\gamma_3'=(0,[2,1^{m-2}],[1^{n-2}])$ as an irrep of $B_N(m,n-2)$, for $N\geq n+m-2$
	\begin{align}
	d_{m,n-2}(\gamma_3')=\frac{m!(n-2)!}{(m-2)!m(n-2)!}=m-1.
	\end{align}
	\item $d_{m,n-3}(\gamma_4')$ is the dimension from $\gamma_4'=(0,[1^m],[1^{n-3}])$ as an irrep of $B_N(m,n-3)$, for $N\geq m+n-3$
	\begin{align}
	d_{m,n-3}(\gamma_4')=\frac{m!(n-3)!}{m!(n-3)!}=1.
	\end{align}
	\item $d_{m,n-1}(\gamma_5')$ is the dimension from $\gamma_5'=(1,[1^{m-1}],[1^{n-2}])$ as an irrep of $B_N(m,n-1)$, for $N\geq m+n-1$
	\begin{align}
	d_{m,n-1}(\gamma_5')=\frac{m!(n-1)!}{(m-1)!(n-2)!}=m(n-1).
	\end{align}
	\item $d_{m,n-2}(\gamma_6')$ is the dimension from $\gamma_6'=(0,[1^m],[1^{n-2}])$ as an irrep of $B_N(m,n-2)$, for $N\geq m+n-2$
	\begin{align}
	d_{m,n-2}(\gamma_6')=\frac{m!(n-2)!}{m!(n-2)!}=1.
	\end{align}
	\item $d_{m,n-2}(\gamma_7')$ is the dimension from $\gamma_7'=(0,[1^m],[2,1^{n-4}])$ as an irrep of $B_N(m,n-2)$, for $N\geq m+n-2$
	\begin{align}
	d_{m,n-2}(\gamma_7')=\frac{m!(n-2)!}{m!(n-4)!(n-2)}=n-3.
	\end{align}
	\item $d_{m,n-1}(\gamma_8')$ is the dimension from $\gamma_8'=(0,[1^m],[2^2,1^{n-5}])$ as an irrep of $B_N(m,n-1)$, for $N\geq m+n-1$
	\begin{align}
	d_{m,n-1}(\gamma_8')=\frac{m!(n-1)!}{m!(n-5)!(n-3)(n-2)2}=\frac{1}{2}(n-1)(n-4).
	\end{align}
	\item $d_{m,n-1}(\gamma_9')$ is the dimension from $\gamma_9'=(0,[1^m],[3,1^{n-4}])$ as an irrep of $B_N(m,n-1)$, for $N\geq m+n-1$
	\begin{align}
	d_{m,n-1}(\gamma_9')=\frac{m!(n-1)!}{m!(n-4)!(n-1)2}=\frac{1}{2}(n-2)(n-3).
	\end{align}
	\item $d_{m,n-1}(\gamma_{10}')$ is the dimension from $\gamma_{10}'=(0,[2,1^{m-2}],[2,1^{n-3}])$ as an irrep of $B_N(m,n-1)$, for $N\geq m+n-1$
	\begin{align}
	d_{m,n-1}(\gamma_{10}')=\frac{m!(n-1)!}{(m-2)!m(n-3)!(n-1)}=(m-1)(n-2).
	\end{align}
\end{itemize}

The dimension of the algebra $\mathcal{A}^{N=m+n-4}_{m,n}$ is equal to 
\bea 
\dim(\mathcal{A}^{N=m+n-4}_{m,n})=- \Delta ( m, n ; N = m+n-4) + \sum_{ \mu \in \widehat { \BRT} ( m,n , N=m+n-4 )   }
( d_{m,n}(\mu) )^2,
\eea
where 
\begin{equation}
\label{DiagFormula}  
\boxed{
\begin{split}
&\Delta ( m , n ; N = m+n-4)= \sum_{ i =1}^{ 9 } ( d_{m,n}(\gamma_i))^2 - ( d_{m,n}(\gamma_i)  - \delta_{m,n,N=m+n-4}(\gamma_i) )^2=\\
&=\frac{1}{6}(6 m^5 n + m^4 (-3 - 54 n + 6 n^2) + 
2 m^3 (9 + 92 n - 39 n^2 + 10 n^3) -\\
&+3 (12 - 14 n + 15 n^2 - 6 n^3 + n^4) +3 m^2 (-15 - 84 n + 80 n^2 - 33 n^3 + 3 n^4)\\
& +m (42 + 140 n - 252 n^2 + 148 n^3 - 33 n^4 + 3 n^5)).
\end{split}}
\end{equation}
Computing~\eqref{DiagFormula} for $m=5$ and $5\leq n\leq  14$ we get
\bea
\begin{split}
&\text{\phantom{x}}\\
&\Delta ( m , n ; N = m+n-4)=\\
&=\{26564,54996,104624,186488,315336,510224,795116,1199484,1758908,2515676 \}.
\end{split}
\eea
Computing the $ \Delta ( m,n ; N = m+n-4 ) $ by using equation~\eqref{identityDims}, we obtain a sequence for $m=5$ and $5\leq n< 14$
\bea 
\begin{split}
	&\text{\phantom{x}}\\
	&\Delta ( m , n ; N = m+n-4)=\\
&=\{26564,54996,104624,186488,315336,510224,795116,1199484,1758908,2515676 \},
\end{split}
\eea
which agrees with~\eqref{DiagFormula}.

\section{Structure of  restricted Bratteli diagrams and counting of the red nodes } 
\label{Sec:countingReds}

In this section, we derive basic structural properties of the RBD for $B_{ N } (m,n)$, with $ N = m +n -l $, which lead to counting formulae for the red and green nodes in the RBD. 
In section \ref{sec:deepest} we show that the RBD have $ (l$ layers, labelled by depths $ 0 \le d \le (l-1)$.  Section \ref{sec:bdsxs} derives an upper bound on the excess height $ \Delta = c_1 ( \gamma_+ + c_1 ( \gamma_- ) - N $ in the RBD. Section \ref{mnstability} shows that the form of the RBD is independent of $(m,n)$ in the region $ (m,n) \ge (2l-3)$, the property we refer to as  degree-stability. Section \ref{gencountreds} derives the general formula \eqref{Redcount} for the counting  of red nodes.

\subsection{Depth and structure of  the deepest red }
\label{sec:deepest} 

The restricted Bratteli-diagrams have a layer for representations of $ B_{ N } ( m , n ) $, and subsequent rows at increasing depth. We define the initial row to have depth $d=0$. At depth $d$ we have irreps of  $ B ( m , n - d )$. We set $ N = m +n - l $ : for the non-semisimple regime of interest here, we have $ l\ge 1$. The depth $0$ diagrams are all green, which are connected to some red diagram at a depth $ d >0$. 

\begin{proposition}\label{depstruc} 
For $ N = ( m + n - l ) $, the largest depth $d$  which admits an $N$-excluded triple $ ( k , \gamma_+  , \gamma_-) \in \BRT  ( m , n-d )  $ is denoted  by $ d_{ \max} $ and  is given by 
\bea\label{dmaxres} 
d_{\max} = l-1 
\eea
The unique $N$-excluded triple at $ d_{\max } $ is 
\bea\label{deepestred} 
( k=0 , [1^m] , [ 1^{ n-l +1} ] ) \in \BRT  ( m   , n - l +1 ) 
\eea 
which exists for $ n \ge (l-1)$.
\end{proposition}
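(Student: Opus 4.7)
The plan is to derive the bound purely from the trivial comparison $c_1(\gamma_\pm) \leq |\gamma_\pm|$, and then analyse the equality case. First I would note that a triple $(k,\gamma_+,\gamma_-) \in \BRT(m,n-d)$ at depth $d$ has $\gamma_+ \vdash (m-k)$ and $\gamma_- \vdash (n-d-k)$, so the height satisfies $c_1(\gamma_+) + c_1(\gamma_-) \leq m+n-d-2k$. Combining this with the $N$-excluded condition $c_1(\gamma_+)+c_1(\gamma_-) > N = m+n-l$ gives the key inequality
\[
d + 2k \;\leq\; l-1,
\]
which immediately yields $d \leq l-1$ for any admissible $k \geq 0$, establishing the upper bound $d_{\max} \leq l-1$.

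Second, I would show the bound is saturated uniquely by the claimed triple. Substituting $d = l-1$ into the key inequality forces $k=0$, reducing the chain of inequalities to
\[
m + (n-l+1) \;\geq\; c_1(\gamma_+) + c_1(\gamma_-) \;\geq\; m+n-l+1,
\]
which must be saturated at every step. Saturation of the individual bounds $c_1(\gamma_\pm) \leq |\gamma_\pm|$ forces each of $\gamma_+,\gamma_-$ to be a single column, uniquely pinning down $\gamma_+ = [1^m]$ and $\gamma_- = [1^{n-l+1}]$. The existence of the column $[1^{n-l+1}]$ (possibly empty) requires $n-l+1 \geq 0$, which is the stated hypothesis $n \geq l-1$.

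I expect no genuine obstacle; the argument is essentially a two-step inequality manipulation. The only detail worth flagging is the boundary case $n = l-1$, in which $\gamma_- = \emptyset$ and the deepest red node actually sits at level $L = m$, i.e.\ before the wall is crossed in the Bratteli diagram. The above analysis remains valid in this degenerate situation, since $c_1(\emptyset) = 0$ and the $N$-exclusion reduces to $c_1(\gamma_+) > m-1$, again forcing $\gamma_+ = [1^m]$.
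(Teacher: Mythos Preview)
Your proof is correct and follows essentially the same route as the paper: both derive the core inequality $d+2k \le l-1$ from $c_1(\gamma_\pm)\le|\gamma_\pm|$ combined with the $N$-exclusion condition, and then force $k=0$ and single-column shapes by saturating that inequality at $d=l-1$. The only cosmetic difference is that the paper packages the saturation argument via the auxiliary quantities $|\gamma_\pm\setminus c_1|$, whereas you read it off directly from the equality case of $c_1(\gamma_\pm)\le|\gamma_\pm|$.
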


\noindent 
\begin{observation}\label{ObsdepRBDB} An immediate consequence is that the RBD of  $B_N ( m , n ) $ has exactly $ l$ layers, for $ n \ge (l-1)$,  with depth label $d$  ranging from $ 0\le  d \le (l-1)$. Any green mode in the RBD must connect to a red at a greater depth (Definition \ref{DefRBDB}), so there cannot be any greens at depths greater than $ ( l-1)$. 
\end{observation}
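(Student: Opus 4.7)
The plan is to combine Proposition \ref{depstruc} with the literal content of Definition \ref{DefRBDB} plus one explicit connecting path. Two facts are needed: (a) no node of the RBD sits at depth $d > l-1$; (b) the depth $d = l-1$ is actually attained. Together these give exactly the $l$ layers $d = 0, 1, \ldots, l-1$.

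For (a), I would walk through the cases of Definition \ref{DefRBDB}. A red node at depth $d > 0$ is retained in the RBD only at a depth for which a red node exists, so by Proposition \ref{depstruc} one has $d \le l-1$. A green node at $d > 0$ is retained only as an intermediate vertex on a path from some red at depth $d' > d$ down to a green at $d = 0$; since $d' \le l-1$, one has $d \le l-2$. This is the in-line remark already quoted in the observation's statement, and it rules out greens beyond depth $l-1$.

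For (b), I would check that the unique deepest red $\gamma^{\ast} = (0,[1^m],[1^{n-l+1}])$ from Proposition \ref{depstruc} actually connects downward to some $d = 0$ green, so that Definition \ref{DefRBDB} retains it in the RBD. I would exhibit an explicit Bratteli path of length $l-1$: the first forward move removes a box from $\gamma_+$, producing $(1,[1^{m-1}],[1^{n-l+1}])$ at depth $l-2$ with height $(m-1)+(n-l+1) = N$ (green); the remaining $l-2$ moves each append a box to row $1$ of $\gamma_-$, arriving at $(1,[1^{m-1}],[l-1,1^{n-l}])$ at depth $0$. The only sanity check is that adding boxes to row $1$ leaves $c_1(\gamma_-)$ unchanged, so the height stays pinned at $N$ and every intermediate vertex is green. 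The construction is valid whenever $n \ge l-1$, which is the hypothesis of the observation.

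The main obstacle is essentially nil: Proposition \ref{depstruc} supplies the upper bound on the depth of any red, and the rest of (a) reads off Definition \ref{DefRBDB}, while (b) reduces to the explicit box-manipulation above. Combining (a) and (b) yields that the depths realized in the RBD form exactly $\{0, 1, \ldots, l-1\}$, establishing the $l$-layer assertion.
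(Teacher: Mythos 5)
Your proposal is correct and follows the same route as the paper: your part (a) is precisely the paper's one-line justification, namely that Proposition \ref{depstruc} caps red nodes at depth $l-1$ and Definition \ref{DefRBDB} only retains a green at positive depth as an intermediate below some strictly deeper red, so greens sit at depth $\le l-2$. The paper treats the rest as immediate and never checks that depth $l-1$ is actually attained; your part (b) — the explicit chain from $(0,[1^m],[1^{n-l+1}])$ through $(1,[1^{m-1}],[1^{n-l+1}])$ (height exactly $N$) down to the green $(1,[1^{m-1}],[l-1,1^{n-l}])$ at $d=0$, which acquires a modified dimension because this path extends back to the root through the deepest red — is a correct and worthwhile addition that upgrades the upper bound to the claimed ``exactly $l$ layers.'' The only caveat is the degenerate case $l=1$, where your construction needs $l\ge 2$ and the RBD is in fact empty (there are no modified dimensions), so the ``exactly $l$ layers'' phrasing fails there; that looseness is in the paper's statement, not in your argument.
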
 

Consider diagrams at depth $d$, irreps of $B_N( m , n - d ) $, with labels $ ( k , \gamma_+ , \gamma_- )$. Let $| \gamma_+ |$ and $ | \gamma_- | $ be the number of boxes in $ \gamma_+ , \gamma_-$ respectively. 
\bea\label{defk} 
&& | \gamma_+ | = m - k \cr 
&& | \gamma_- | = n - k  -d 
\eea
Let $ c_1 ( \gamma_{ \pm} )$ be the length of the first column of $ \gamma_{ \pm} $. Red diagrams obey 
\bea\label{ctff} 
c_1 ( \gamma_+ ) + c_1 ( \gamma_- ) > m + n - l 
\eea
We also have the inequalities 
\bea\label{boxhtin}
&& c_1 ( \gamma_+ )  \le | \gamma_+ | \cr 
&& c_1 ( \gamma_- )  \le | \gamma_- |
\eea
From \eqref{defk} we have 
\bea\label{defk1}  
| \gamma_+ | + | \gamma_- |  = ( m+n - d - 2k ) 
\eea
From \eqref{boxhtin} we have 
\bea 
| \gamma_+ | + | \gamma_- |  \ge c_1 ( \gamma_+ )  + c_1 ( \gamma_- ) 
\eea
which implies, using \eqref{ctff} that 
\bea\label{gpgmgt}
 | \gamma_+ | + | \gamma_- |  > (m+n-l) 
\eea
Therefore 
\bea 
&& ( m+n - d - 2k )  > ( m +n - l ) \cr 
&& \implies d + 2k < l 
\eea 
$d$ is maximised when $ k =0$. Hence the upper bound  on $d$ is $ l-1$. 
\bea\label{dmaxder} 
d_{ \max } = l-1 
\eea

Now define
\bea 
&& | \gamma_+ | - c_1 ( \gamma_+ )  =  | \gamma_+ \setminus c_1 | \cr
&& | \gamma_- | - c_1 ( \gamma_-)  =  | \gamma_- \setminus c_1 |
\eea
These are the numbers of boxes left after we remove the first column of $ \gamma_+ , \gamma_-$ respectively. They are greater or equal to $0$ : 
\bea 
&& | \gamma_+ \setminus c_1 | \ge 0  \cr 
&& | \gamma_-\setminus c_1 | \ge 0 
\eea 
 From \eqref{defk1} 
\bea 
&& c_1 ( \gamma_+ ) + c_1 ( \gamma_- ) + | \gamma_+ \setminus c_1 | +  | \gamma_-\setminus c_1 |  
= ( m +n - d - 2k ) \cr 
&& \implies c_1 ( \gamma_+ ) + c_1 ( \gamma_- ) 
= ( m +n - d - 2k ) - | \gamma_+ \setminus c_1 | -  | \gamma_-\setminus c_1 |  
\eea
Now the inequality \eqref{ctff} becomes 
\bea\label{ctff1} 
( m +n - d - 2k ) - | \gamma_+ \setminus c_1 | -  | \gamma_-\setminus c_1 |   > m + n - l 
\eea 
which simplifies to 
\begin{equation}\label{simpImp} 
\boxed{ 
~~~ d + 2k + | \gamma_+ \setminus c_1 | +  | \gamma_-\setminus c_1 |  < l ~~~
} 
\end{equation}

To maximise $d$, we must have $k=0$, and 
\bea 
&& | \gamma_+ \setminus c_1 | = 0 \cr 
&& | \gamma_-\setminus c_1 |  = 0 
\eea 
This gives the additional information that the maximal depth irrep which disappears at $ N = m+n - l $ has a pair of Young diagrams which each have all the boxes in the first column. Hence the diagram at $ d = d_{ \max } = l -1 $ is 
\bea\label{deepestRed}  
( k=0 , [1^m] , [ 1^{ n-l +1} ] )
\eea 
Alongside \eqref{dmaxder}, this completes the proof of the proposition. \hfill $\blacksquare$

\vskip.2cm 

\noindent 
{\bf Remarks } 
\begin{enumerate} 

\item The simple equation \eqref{simpImp} is very useful. We can also get an upper  bound on $k$ for the reds. To get a least upper bound we need $ d =1$ (which is the smallest $d$ that has the reds) and $ | \gamma_+ \setminus c_1 | =  | \gamma_-\setminus c_1 |  =0$, i.e. the Young diagrams $ \gamma_+ , \gamma_-$ are single columns. 
\bea 
2k < (l -1 )
\eea

\item When a green at depth $d$ connects to a red at $ d+1 $, it only connects to one such. To go from green to red as we go from  depth $ d$ to depth  $d+1$, the total height of $ \gamma_+ $ and $  \gamma_- $ must increase. Connections exist when, increasing the depth by $1$, we  remove a box from  $ \gamma_-$ or add a box to  $ \gamma_+$. Since we want to decrease the total height, we have to remove a box from $ \gamma_-$. For the height of $ \gamma_+$ to increase, we have to add the  box to the first column of $ \gamma_+$. This completely specifies the $ \tilde \gamma_+ $ of the red Brauer diagram $ ( \tilde \gamma_+ , \gamma_-  ) $ which connects to a green $ (  \gamma_+ , \gamma_- ) $  at depth $ d$. 

\item Our primary interest in subsequent sections is in the description of the RBD for the generic cases where $m,n$ are large enough compared to $l$. We establish the $(m,n)$-stability region $ m,n \ge ( 2l-3) $ in section \ref{mnstability}. The special cases of small $(m,n)$ outside the stable regime and for $ m,n \le (l-1)$ should generically follow from the large $(m,n)$ regime by specialising the derived formulae  for dimension corrections (of the kind computed in section \ref{sec:DimExamples}) and taking into account the vanishing of these. Appendix B studies some small $(m,n)$ cases, and the systematic study of these is left for the future. 

\end{enumerate}

\subsection{Bounds on the excess height }
\label{sec:bdsxs}  

The red diagrams have $ c_1 ( \gamma_+ ) + c_1 ( \gamma_- ) > (m +n -l ) $. Let 
\bea\label{defDeltxs} 
c_1 ( \gamma_+ ) + c_1 ( \gamma_- ) = (m +n -l ) + \Delta 
\eea
where $ \Delta $ is defined as the excess total height of $ \gamma_+ $ and $ \gamma_-$. 

\begin{proposition}\label{bndexcess}  
 The excess height of a triple $ ( k , \gamma_+ , \gamma_ - ) $ at depth $d$ is bounded by the minimum of $ d $ and $ (l-d)$
\begin{equation}\label{UPD} 
\boxed{  
 ~~~\Delta ( d  ) \le \min  ( l - d , d )  ~~~ 
 } 
\end{equation}
\end{proposition}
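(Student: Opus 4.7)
The plan is to establish the two bounds $\Delta \le l-d$ and $\Delta \le d$ separately and then combine them, since $\Delta \le \min(l-d,d)$ is equivalent to the conjunction of the two.

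For the bound $\Delta \le l-d$, I would proceed by a direct rearrangement using the key inequality \eqref{simpImp} from the proof of Proposition~\ref{depstruc}. Using the definition of $\Delta$ in \eqref{defDeltxs}, the identity $|\gamma_+|+|\gamma_-|= m+n-d-2k$, and the relation
\begin{equation}
|\gamma_+\setminus c_1|+|\gamma_-\setminus c_1| = (|\gamma_+|+|\gamma_-|) - (c_1(\gamma_+)+c_1(\gamma_-)) = l-d-2k-\Delta,
\end{equation}
the non-negativity of the two summands on the left immediately yields $\Delta\le l-d-2k\le l-d$. This half of the bound is purely local at the red node and does not require the RBD connectivity condition.

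The bound $\Delta \le d$ is where I would invoke the defining property of the RBD, namely that a red node at depth $d$ is only present because it lies on a path connecting down to some green node at depth $d=0$. Such a path is $d$ successive Bratteli moves advancing the level from $L=m+n-d$ to $L=m+n$. For $d\le l-1$ in the range of $(m,n)$ of interest we have $n-d\ge 1$, so every step on the path lies in the regime $L>m$, where the moves are either the addition of a box to $\gamma_-$ or the removal of a box from $\gamma_+$. Each such move changes the first-column heights by an amount in $\{0,1\}$ (for addition to $\gamma_-$) or $\{0,-1\}$ (for removal from $\gamma_+$), so the net change in $c_1(\gamma_+)+c_1(\gamma_-)$ per step lies in $\{-1,0,+1\}$. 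Over $d$ steps the total decrease is at most $d$, but the sum must drop from $m+n-l+\Delta$ (at the red) to at most $m+n-l$ (at the green), a drop of at least $\Delta$. This forces $\Delta \le d$.

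Combining the two bounds gives \eqref{UPD}. The second bound is the more delicate of the two, since it uses a global property (existence of a path down to a depth-$0$ green) rather than a local one; the main point to check carefully is that the argument about the change per Bratteli move is uniformly valid along the whole path, which reduces to verifying $L>m$ at every intermediate level, and this is automatic for $d\le l-1$ with $n\ge l-1$.
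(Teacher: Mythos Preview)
Your proof is correct and follows essentially the same approach as the paper. The paper derives the bound $\Delta \le l-d$ from the identity $d+2k+|\gamma_+\setminus c_1|+|\gamma_-\setminus c_1|+\Delta = l$ together with non-negativity, exactly as you do, and obtains $\Delta \le d$ from the RBD connectivity to a green at $d=0$ combined with the observation that each Bratteli move changes the total first-column height by at most one; your explicit check that all intermediate levels satisfy $L\ge m$ (so that only second-regime moves occur) is a detail the paper leaves implicit.
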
 

The equation \eqref{ctff1} becomes : 
\bea 
( m +n - d - 2k ) - | \gamma_+ \setminus c_1 | -  | \gamma_-\setminus c_1 |   = m + n - l  + \Delta 
\eea 
The inequality \eqref{simpImp} then becomes 
\begin{equation}\label{simpEq}  
\boxed{ 
 d + 2k + | \gamma_+ \setminus c_1 | +  | \gamma_-\setminus c_1 |  + \Delta   = l ~~~
} 
\end{equation}
Since $ k \ge 0 ,| \gamma_{ \pm } \setminus c_1 | \ge 0  $, this implies that 
\bea 
\Delta \le  l - d 
\eea
At $ d = d_{\max } $, $ \Delta = 1 $, which we know from the form of the maximal depth red. 
At $ d = d_{ \max} -1 $, $ \Delta \le 2 $. 

There is another constraint on $ \Delta $ for the restricted Bratteli diagram. In this diagram, there is no red node at $ d =0$. At $ d=1$, the restricted Bratteli diagram  only keeps red nodes which connect to greens at $ d=0$, which have $ \Delta \le  0$. Going from $ d =0$ to $ d=1$ along an arrow, we add a box to $ \gamma_+$ or remove a box from $ \gamma_-$. This means that the maximum $ \Delta $ for irreps at $ d =1$ is $1$. The maximum $ \Delta $ for depth $d$ in the truncated diagram is $ d $. 

Combining with above 
\bea\label{UPD} 
 \Delta ( d  ) \le \min  ( l - d , d ) 
\eea 
This completes the proof of the proposition \ref{bndexcess}. \hfill  $\blacksquare$. 

\subsection{Stability of  the  RBD for $B_N ( m , n ) $ for large enough $m,n$ } 
\label{mnstability}

Inspection of restricted Bratelli diagrams   of  $B_N(m,n)$ has, for fixed $l$ (with $N = m+n=l$), shows that when $m,n$ are large enough compared to $l$, the diagrams take he same form independent of $m,n$. We illustrate with the diagrams  for a sequence of $m,n$ and $l=4$ in Figure~\ref{sequence} below. 

\begin{figure} 
	\centering
\includegraphics[scale=0.35]{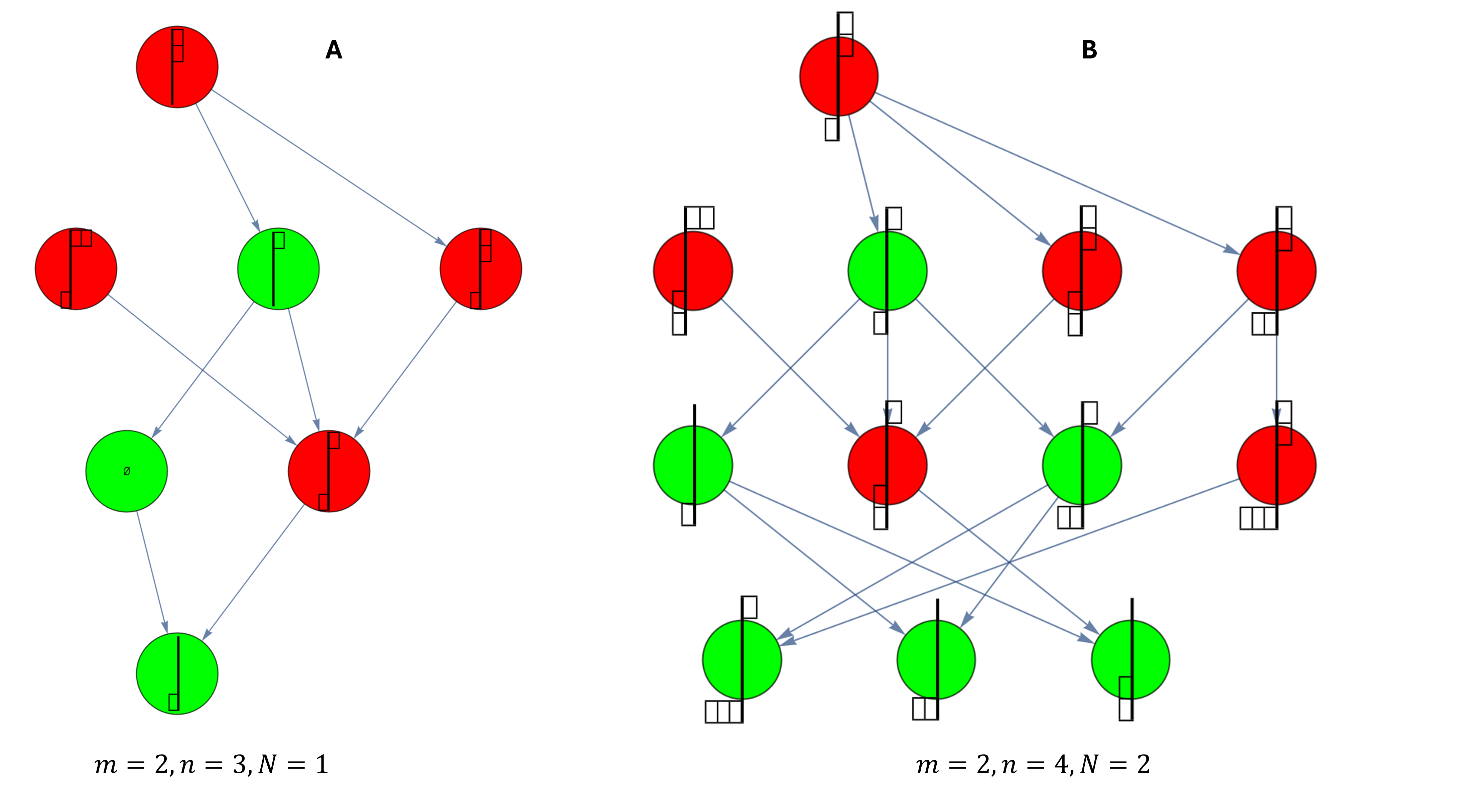}\\
\includegraphics[scale=0.38]{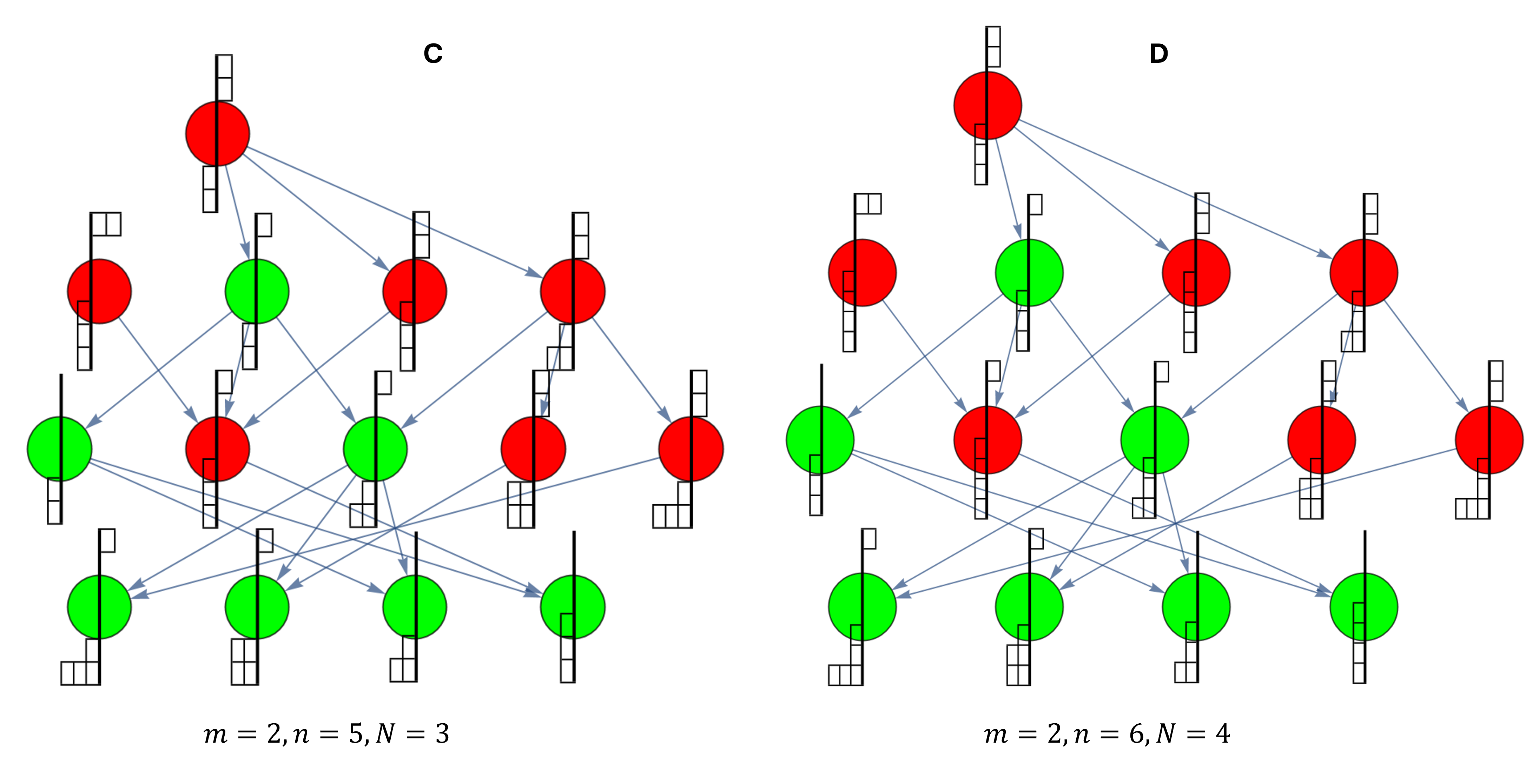}\\
\includegraphics[scale=0.38]{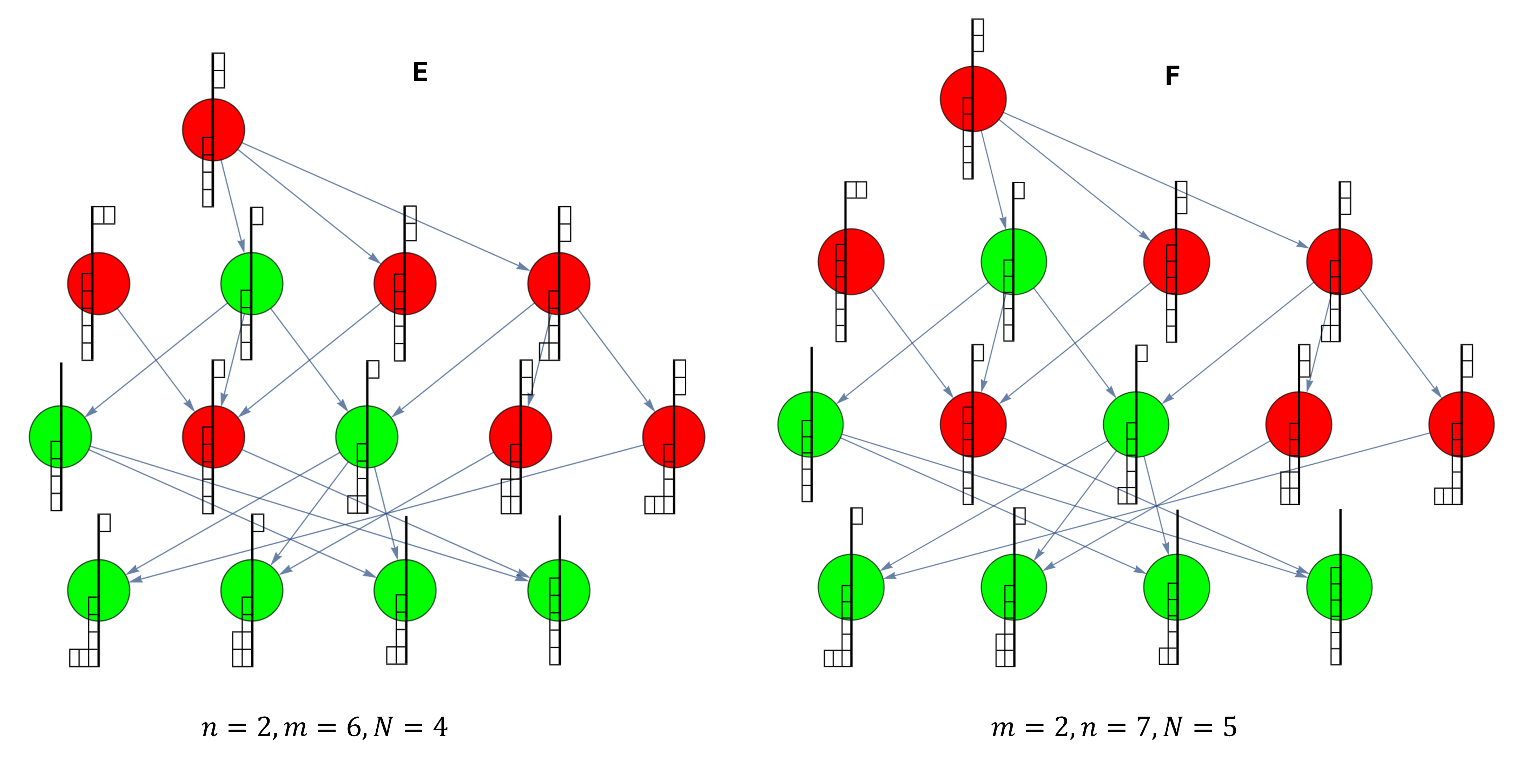}
\caption{For fixed $m=2$ and $l=4$ we present a sequence of RBDs from $n=3$ to $n=7$. We achieve the stable region for $n\geq 5$, which agrees with Proposition~\ref{mnstab}. } 
\label{sequence}
\end{figure} 

Ad $m,n$ are decreased away from the stable regime, a subset of the red Brauer triples disappear. We derive the stable range by further building on the results obtained in the previous section. Specifically, we begin by analyzing the saturated inequality~\eqref{simpImp}, as expressed in~\eqref{simpEq}. Given that all the parameters involved in the equation are positive, and taking into account the allowed Bratteli moves when transitioning between different depths $d$, we can deduce the shapes of the corresponding Young diagrams that label the walled Brauer irreducible representations satisfying the given equality.

\begin{proposition}\label{mnstab} 
The restricted Bratteli diagram of $ B_N ( m , n ) $, with $ N = (m+n-l)$  
has a form independent of $ (m,n)$ for large enough $m,n$ in the range  
\begin{equation} 
\boxed{ 
 ~~~~ m \ge ( 2l -3 )   ~~ ;~~ n \ge ( 2l -3). ~~~~ 
} 
\end{equation} 
\end{proposition}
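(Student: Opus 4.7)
The strategy is to reparameterize each node $(k,\gamma_+,\gamma_-)$ at depth $d$ in the RBD of $B_N(m,n)$ by the \emph{stable data} $(k,\tilde\gamma_+,\tilde\gamma_-,d,\Delta)$, where $\tilde\gamma_\pm=\gamma_\pm\setminus c_1$ are the Young diagrams with their first columns removed and $\Delta=c_1(\gamma_+)+c_1(\gamma_-)-N$ is the excess height. Read as an algebraic identity rather than an inequality, equation \eqref{simpEq} becomes $d+2k+|\tilde\gamma_+|+|\tilde\gamma_-|+\Delta=l$ for every node of the RBD, with $\Delta\ge1$ at reds and $\Delta\le0$ at greens. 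Consequently the stable data range over a finite set depending only on $l$, and all $(m,n)$-dependence enters solely through the requirement that $\gamma_\pm$ be valid Young diagrams, namely
\[
m\ge k+|\tilde\gamma_+|+c_1(\tilde\gamma_+),\qquad n\ge d+k+|\tilde\gamma_-|+c_1(\tilde\gamma_-),
\]
for every node. Proposition~\ref{mnstab} therefore reduces to computing the maximum of each right-hand side over all nodes that occur in the RBD, and showing both maxima equal $2l-3$.

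The plan is to carry out the maximization in two stages. For red nodes, the constraints $\Delta\ge1$ and $d\ge1$ combined with the identity give $2k+|\tilde\gamma_+|+|\tilde\gamma_-|\le l-2$, whence $k+|\tilde\gamma_+|+c_1(\tilde\gamma_+)\le k+2|\tilde\gamma_+|\le 2l-4$. For green nodes, which enter the RBD only as intermediate points on paths from a red at some depth $d'$ down to a green at $d=0$, I will classify the four Bratteli moves between adjacent depths as A1, A2 (add a box to a non-first or first column of $\gamma_-$) and R1, R2 (remove a box from a non-first or first column of $\gamma_+$) and tabulate their effect on the quadruple $(k,|\tilde\gamma_+|,|\tilde\gamma_-|,\Delta)$. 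A direct check shows that $k+2|\tilde\gamma_+|$ shifts by $+1$ under R2, by $-1$ under R1, and by $0$ under A1 or A2. Consequently, at a green node at depth $d$ reached via $d'-d$ moves from the red at depth $d'$, one has $k+2|\tilde\gamma_+|\le k_{\rm red}+2|\tilde\gamma_+|_{\rm red}+(d'-d)$, with equality when every move so far is of type R2. Optimizing the red data through $\Delta_{\rm red}=1$, $k_{\rm red}=|\tilde\gamma_-|_{\rm red}=0$, $|\tilde\gamma_+|_{\rm red}=l-d'-1$ with $\tilde\gamma_+$ a single column, and then over $d'\ge1$ and $0\le d\le d'$, locates the global maximum at $(d,d')=(0,1)$, with value $2l-3$ attained by the green $(1,[2^{l-2},1^{m-2l+3}],[1^{n-1}])$. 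The $\gamma_-$-side analysis is symmetric (with the roles of A- and R-moves exchanged) and gives the bound $n\ge 2l-3$ via the green $(1,[1^{m-1}],[2^{l-2},1^{n-2l+3}])$.

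The main obstacle, or at least the most delicate bookkeeping, is the converse direction: showing that for $m,n\ge2l-3$ every admissible stable datum is actually realized in the RBD, and that Bratteli edges between stable data lift to edges in the RBD. Both points follow by noting that a red at depth $d'$ with $\Delta_{\rm red}=1$ can always descend to a green at $d=0$ via $d'$ consecutive R2 moves, and these are valid precisely because $c_1(\gamma_+)$ is long enough once $m\ge 2l-3$; all other stable configurations admit similar compatible paths. Once this structural identification is in place, the RBD is seen to depend only on $l$ throughout the stable range, proving the proposition.
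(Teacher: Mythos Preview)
Your approach is essentially the paper's: both work from the identity \eqref{simpEq}, observe that all $(m,n)$-dependence enters only through the Young-diagram validity conditions $c_1(\gamma_\pm)\ge c_1(\tilde\gamma_\pm)$, locate the extremal green $(1,[2^{l-2},1^{m-2l+3}],[1^{n-1}])$ reached from the red $(0,[2^{l-2},1^{m-2l+4}],[1^{n-1}])$ at depth~$1$, and read off $m\ge 2l-3$. Your stable-data parametrisation and path-tracking of $k+2|\tilde\gamma_+|$ under the four move types is a cleaner systematisation of what the paper does in a few lines; the paper likewise treats the $n$-side by a one-word appeal to symmetry. One caution on that point: the shift analysis for $d+k+2|\tilde\gamma_-|$ under A1/A2/R1/R2 is $(+1,-1,0,0)$ as you say, but the naive bound this yields along a path of length $d'-d$ is $2l-2-d$ rather than $2l-3$; you need the extra input that reaching a \emph{green} endpoint forces $\#\mathrm{R2}\ge\#\mathrm{A2}+1$ (since $\Delta$ must drop from $\ge 1$ to $\le 0$), which caps $\#\mathrm{A1}\le d'-d-1$ and recovers the correct $2l-3$.
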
 

From $ \eqref{simpEq}$ we see, using $ k \ge  0 ,  | \gamma_- \setminus c_1  | \ge 0 $  that there is a bound  for red Brauer triples 
\bea 
| \gamma_+ \setminus c_1 |  \le  l - d - \Delta. 
\eea 
This bound is strongest for the lowest possible values   $d=1$, $ \Delta =1$ allowed for red triples, where we have 
\bea 
| \gamma_+ \setminus c_1 |  \le  l - 2. 
\eea
Choose the maximum value 
\bea 
| \gamma_+ \setminus c_1 |  =   l - 2, 
\eea 
which can only occur if $ k =0 $ and $ | \gamma_- \setminus c_1 | =0  $.  
Further  consider  the Young diagram $ \gamma_+$ where all these $ (l-2)$ boxes are in the second column, i,e.  $  ( \gamma_+ \setminus c_1)  = [ 1^{  l-2}] $. 
 Then the red triple  is $ ( k =0  , [ 2^{l-2} ,  1^{ m - 2l + 4 } ] , [ 1^{ n-1} ] ) $. 
 This red triple  at $ d =1$ connects to a green triple  at 
$d=0$ which is $ ( k=1 ,  [2^{l-2} ,  1^{ m - 2l +3  } ] , [ 1^{ n-1} ] ) $. This can only exist if 
\bea 
m \ge ( 2l - 3 ).
\eea
Similarly we can derive $ n \ge ( 2l -3 )$. 

If we consider  instead   a Young diagram $  ( \gamma_+ \setminus c_1 )  $ 
which has fewer rows than the maximum $ ( l-2)$, i.e. $ ( l - 2 - t)  $ for some $ t > 0 $, then $ \gamma_+ \setminus c_1 $ can be written as 
\bea 
&& \gamma_+ \setminus c_1   = [ 1^{ l-2-t }  ; ( \gamma_+\setminus  \{ c_1 , c_2 \}  )  ],  \cr 
&& | ( \gamma_+\setminus  \{ c_1 , c_2 \}  ) | = t, 
\eea
where  $ ( \gamma_+\setminus  \{ c_1 , c_2 \}  ) $ is the Young diagram obtained from $ \gamma_+$ by deleting the first column $ c_1 $ and the second column $ c_2 $. The 
Brauer triple  is then
\bea 
 ( k =0  , [2^{ l - 2 - t } ,  1^{  m - 2 l + 4 + t  } ;   ( \gamma_+\setminus \{ c_1 , c_2 \}  )  ]    , [ 1^{ n-1} ] ).
\eea 
This red diagram connects to   a green 
  \bea 
     ( k =1 , [2^{ l - 2 - t } ,  1^{  m - 2 l + 4 + t  } ;   ( \gamma_+\setminus \{ c_1 , c_2 \}  ) ]    , [ 1^{ n-1} ] ).  \eea 
This leads to  a weaker constraint $ m \ge ( 2l - 3 - t ) $. 

Thus   that the RBD of  $B_N(m,n)$ has, for fixed $l$,  a universal form independent of 
$m,n$ when 
\bea 
&& m \ge ( 2l - 3 ), \cr 
&& n \ge ( 2l -3 ). 
\eea
This completes the proof of the proposition \ref{mnstab}. \hfill $\blacksquare$

 This $(m,n)$-stability of the restricted Bratteli diagrams of $ B_N (m,n)$ implies corresponding stability for 
the set of irreps of $B_N ( m, n ) $ which have modified dimensions. The dimension-modifications
have $m,n,l$ dependences which are computable from the structure of the $(m,n)$-stable
restricted Bratteli diagrams of  $ B_N (m,n)$, and known dimension formulae from the $N$-stable regime of $ N \ge (m+n)$.

\subsection{General formula for counting of reds }\label{gencountreds}

The counting of Brauer representation triples for $ (m,n ) $ is 
\bea 
\hbox{BRT } ( m , n ) = \sum_{ k =0 }^{ \min (m , n ) } p ( m-k ) p ( n -k) 
\eea
where $p( m ) $ for any integer $m \ge 1$ is the number of partitions of $m$, while $p(0) $ is defined as $1$.  This is easy to understand since Brauer triples for $ (m, n ) $ are triples of the form $ ( k , \gamma_+ , \gamma_- ) $ where $ 0 \le k \le \min ( m , n ) $ and 
$ \gamma_+ \vdash (m-k) , \gamma_- \vdash (n-k) $. 

We start with the equation \eqref{simpEq} which, for convenience, we rewrite here : 
 \begin{equation}  
 d + 2k + | \gamma_+ \setminus c_1 | +  | \gamma_-\setminus c_1 |  + \Delta   = l 
\end{equation}
From the non-negativity of $k, | \gamma_+ \setminus c_1 | ,  | \gamma_- \setminus c_1 | $ we deduce that $ \Delta \le  l - d $. As explained in the proof of Proposition \ref{bndexcess} there is a stronger bound $0 \le  \Delta \le \min ( d , l-d )$ for red diagrams in the RBD of  $ B_N(m,n)$. The diagrams $\gamma_+ \setminus c_1  $ 
and  $ \gamma_- \setminus c_1$ have no constraints beyond \eqref{simpEq} in the stable regime 
$ m,n \ge (2l-3)$.

This leads to a counting formula for the reds, as a function of  for $l \equiv ( m+n - N )  $ and  depth  $d$, in the stable regime ( $ m, n \ge ( 2l -3 ) $ ). Let $l_1 = | \gamma_+ \setminus c_1 | $ and $ l_2 = | \gamma_- \setminus c_1 |$ . It follows from \eqref{simpEq}  that 
\bea 
l_2  = l - d - 2k - l_1  - \Delta 
\eea
Let $ p(l)$  be the number of partitions of $l$. The counting function for red nodes is obtained by considering the product  $ p(l_1) p( l_2) $ and summing over the allowed ranges of $l_1 , \Delta , k $ 
\bea\label{Redcount1}  
\cR ( l, d )  = \sum_{ \Delta = 1 }^{ \min ( l -d  , d ) }  ~ \sum_{ k  = 0 }^{ \lfloor { (  l - d - \Delta )  \over 2 } \rfloor  }  ~ \sum_{ l_1 =0 }^{  l  - d - 2 k - \Delta } 
p ( l_1 )   p ( l - d - 2k  - \Delta - l_1 ) 
\eea 
We can also write this as 
\bea\label{Redcount}  
\cR ( l , d )  = \sum_{ k =0 }^{  \lfloor {  ( l- d )  \over 2 } \rfloor } ~  \sum_{ \Delta =1 }^{ \min ( l - d - 2k  , d ) }  ~
\sum_{ l_1 =0}^{   l  - d - 2 k - \Delta  } 
p ( l_1 )   p ( l - d - 2k  - \Delta - l_1 ) 
\eea

\section{Simple  harmonic oscillators  and the counting of red nodes  in the RBD of $B_N ( m,n)$ }
\label{sec:derSHO}  

In section \ref{sec:SHOhdreds}, we establish an equation relating the counting of high depth  red nodes in the RBD to the partition function $\cZ_{ \rm univ} (x)$  of a tower of harmonic oscillators \eqref{ResDeepUniv}. A more non-trivial equation relates this partition function to the low-depth red nodes is found in section \ref{sec:lowdepOsc}. The counting of red nodes 
at each depth $d$ can be further refined according to the excess $ \Delta $ which measures the extent to which $ { \rm ht} ( \gamma ) $ exceeds $N$ (see the definition \eqref{defDeltxs}). 
In section \ref{sec:DelEq1Osc}  we relate the counting of the $ \Delta =1 $ red nodes to the oscillator partition function.

\subsection{Simple Harmonic oscillators and high-depth red nodes  }
\label{sec:SHOhdreds}

We start from \eqref{simpEq} and rewrite in terms of $s$ defined by  $ d = ( l - s)  $ with $ s \in \{ 1,2, \cdots , l \} $. We also define 
\bea 
l_1 = | \gamma_+ \setminus c_1 | \cr 
l_2 =  | \gamma_- \setminus c_1 | 
\eea   
The equation   \eqref{simpEq} becomes 
\bea\label{simpEq2} 
 \Delta + 2k + l_1  + l_2 = s 
\eea
The upper bound on $ \Delta $ in \eqref{UPD} is expressed as 
\bea
\Delta \le \min ( l - d , d ) = \min ( s , l - s ) 
\eea
For $ s \le \lfloor { l \over 2 } \rfloor  $, equivalently $ d \ge \lceil { l \over  2 } \rceil $,  we have  
\bea\label{simplowsbd} 
\Delta \le s 
\eea 
This inequality is  implied by \eqref{simpEq2} by taking into account $  k , l_1 , l_2 \ge 0$. 
We define $ \cR_{ \dep}  ( s) $ to be the number of red nodes in this region of parameters 
$ s \le \lfloor { l \over 2 } \rfloor  ; ( 2l-3) \le \min ( m , n ) $.  Note that neither the equation \eqref{simpEq2} nor the bound \eqref{simplowsbd} have any  dependence on $l$. This means that for small enough $s$, equivalently large enough $d$,  $ \cR( l , d ) $ is independent of $l$.

A formula for $ \cR_{ \dep } ( s ) $  is  obtained by multiplying the number of partitions $ ( \gamma_+ \setminus c_1 )  $ of $ l_1$ with the number of partitions $ ( \gamma_- \setminus c_1 )  $ of $l_2$, subject to the constraint \eqref{simpEq2} 
\bea\label{Rdeps1}  
\cR_{  \dep }  ( s ) = \sum_{ l_1 =0 }^{ s } \sum_{ l_2 = 0  }^{ s - l_2 }  \sum_{ \Delta =1 }^{ s -l_1 - l_2 } \sum_{ k =0}^{ \lfloor {  ( s - l_1 - l_2 - \Delta ) \over 2 } \rfloor  } 
p ( l_1 ) p ( l_2 ) 
\eea
By re-arranging the sums,  while obeying the constraint in \eqref{simpEq2}, this can also be written as 
\bea\label{Rdeps2} 
\cR_{  \dep }  ( s ) = \sum_{ \Delta =1 }^s 
 \sum_{ k =0}^{ \lfloor {s - \Delta \over 2 }  \rfloor   } \sum_{ l_1 =0 }^{ s  - \Delta - 2k } 
p ( l_1 ) p ( s - \Delta - 2k - l_1  )
\eea
Now  we define the generating function 
\bea 
\cR_{ \dep }  ( x ) &  = & \sum_{ s =1  }^{ \infty }  x^{ s } ~~ \cR_{ \dep }  ( s ) 
\eea 
We have 
\bea 
\cR_{ \dep }  ( x ) &= & \sum_{ s=1}^{ \infty } x^s \sum_{ l_1 =0 }^{ s } \sum_{ l_2 = 0  }^{ s - l_2 }  \sum_{ \Delta =1  }^{ s -l_1 - l_2 } \sum_{ k =0}^{ \lfloor {  ( s - l_1 - l_2 - \Delta ) \over 2 } \rfloor  } 
p ( l_1 ) p ( l_2 )  \delta ( s , l_1 + l_2 + 2k + \Delta ) \cr 
& = & \sum_{ s=1}^{ \infty } x^s \sum_{ l_1 =0 }^{ \infty  } \sum_{ l_2 = 0  }^{ \infty }  \sum_{ \Delta =1  }^{ \infty } \sum_{ k =0}^{ \infty  } 
p ( l_1 ) p ( l_2 )  \delta ( s , l_1 + l_2 + 2k + \Delta )
\eea
We have removed 
 the upper bounds on the $ l_1 , l_2 , \Delta , k $ sums since they are enforced by 
the $ \delta $  function. We change the order of summation, doing the  sum over $s$ first  and use the delta function to trivially do the sum over $s$
\bea\label{genfunOsc} 
 \cR_{ \dep }  ( x ) & = & \sum_{ l_1 =0 }^{ \infty  } \sum_{ l_2 = 0  }^{ \infty }  \sum_{ \Delta =1 }^{ \infty } \sum_{ k =0}^{ \infty  } 
p ( l_1 ) p ( l_2 ) x^{ l_1 + l_2 + \Delta + 2k } \cr 
& =  & \left ( \sum_{ l_1 = 0 }^{ \infty } x^{  l_1 } p( l_1 ) \right ) 
    \left ( \sum_{ l_2 = 0 }^{ \infty } x^{  l_2 } p( l_2  ) \right )  \sum_{ \Delta =1 }^{ \infty } x^{ \Delta }  \sum_{ k =0}^{ \infty  } x^{ 2k }  \cr 
    &  = &  { x \over ( 1 - x ) ( 1 - x^2 ) }  \prod_{ i =1  }^{ \infty }  { 1 \over ( 1 - x^i)^{ 2 }  } 
\eea
In the final step we have employed the usual generating functions for partitions. 
The result in \eqref{genfunOsc} is the generating function given in OEIS A000714 \cite{OEISA000714}. 
The product  $ \prod_{ i =1  }^{ \infty }  { 1 \over ( 1 - x^i)  } $  gives the  partition function for a tower of oscillators $ \{  A^{ \dagger}_{ i } :  i \in \{  1, 2, \cdots \} = \mathbb{N  } \}  $, where $ A^{ \dagger}_{ i} $ creates a particle of energy $i$. The full generating  function is the partition function for the following system of oscillators 
\bea 
&&  \{  A^{ \dagger}_{ i } :  i \in \{  1, 2, \cdots \} = \mathbb{N  } \}  \cr 
&& \{  B^{ \dagger}_{ i } :  i \in \{ 1, 2, \cdots \} = \mathbb{N  } \}  \cr 
&& C^{ \dagger }_{ 1 } \cr 
&& D^{ \dagger}_{ 2} 
\eea  
where the subscript of each oscillator gives its energy. This is equivalent to the description given in OEIS which describes the sequence as giving the 
``Number of partitions of $n$, with three kinds of 1 and 2 and two kinds of 3,4,5,....''.
We will refer to $ \cZ_{ \rm univ }  ( s ) $ as the { \bf universal sequence} and the known 
partition function from OEIS is 
\bea\label{Zuniv} 
\cZ_{ \rm univ } ( x ) = \sum_{ s=0 }^{ \infty } x^s ~\cZ ( s ) = { x \over ( 1 - x ) ( 1 - x^2 ) }  \prod_{ i =1  }^{ \infty }  { 1 \over ( 1 - x^i)^{ 2  }  } 
\eea 
As  we will see, $ Z_{ \rm univ } ( x  ) $ has a number of applications in the context of restricted Bratelli diagrams. To summarise the first such connection we have derived : 
\bea\label{ResDeepUniv} 
\boxed{ 
\cR_{ \dep } ( x ) = \cZ_{ \rm univ  } ( x ) 
} 
\eea
 
For the case $ l = 18 $, the list $ \cR (  l , d ) $ for $ s =  l - d $ with $s$ increasing from $ 1 $  to $ l $ is 
\bea  
 \{1, 3, 9, 21, 47, 95, 186, 344, 620, 1075, 1814, 2950, 4623, 6869, \
9489, 11523, 10409 \} 
\eea 
For $ l = 20 $, the list $ \cR (  l , d ) $ for $ s =  l - d $ with $s$ increasing from $ 1 $ to $19$ is 
\bea 
&& \{1,3,9,21,47,95,186,344,620,1078,1832,3024,4872,7603,11456,16425,21932,25815,22639\}
\cr 
&& 
\eea 
For $ l = 19 $, the list $ \cR (  l , d  ) $ for $ s =  l - d $ with $s$ increasing from $ 1 $ to $19$ is 
\bea 
&& \{1, 3, 9, 21, 47, 95, 186, 344, 620, 1077, 1826, 2998, 4781, 7327, \
10699, 14503, 17345, 15406 \} \cr 
&& 
\eea 
The universal sequence $ \cZ_{ \rm univ } (  s  ) $ up to $12$ is : 
\bea
\{ 1, 3, 9, 21, 47, 95, 186, 344, 620, 1078, 1835, 3045 \} 
\eea 
Up to the $ 9^{\rm th}$ term, this agrees with the $ l=18$ and $ 19$ sequences.  
Up to the $ 10^{\rm th}$ term, it agrees with the $l=20$ sequence. This is as expected from the derivation of the universal sequence  above as the near-high depth region defined by  $ s \le \lfloor { l \over 2 } \rfloor  $. 

It is worth noting that the partition function $ \cZ_{ \rm univ } ( x ) $ is strongly suggestive of a two-dimensional  free field theory interpretation. Connections  between the combinatorics of 
$U(N)$ representations and two dimensional field theory have been studied in the context of gauge-string duality for large $N$ two-dimensional Yang-Mills theory, see for example equation 
(3.3) in \cite{Douglas1993}, which involves the infinite product part of $ \cZ_{ \rm univ } (x ) $ without the extra factors $ {1 \over ( 1 -x ) ( 1- x^2 ) } $.  A two-dimensional field theory interpretation of $ \cZ_{ \rm univ } (x ) $, and its connections to restricted Brauer diagrams being developed here,  is an interesting problem for the future.

\subsection{Universal oscillator partition function  and low-depth red nodes } 
\label{sec:lowdepOsc} 

The counting of red nodes in \eqref{Redcount1}  is given by a formula of the form 
\bea 
\cR ( l , d ) = \sum_{ \Delta =1 }^{ \min ( l -d , d ) } \cR ( l , d, \Delta ) 
\eea 
where $ \cR ( l , d , \Delta ) $ only depends on $ ( l-d) $. 
For $ d $ near $l$, i.e.  $ \min ( l -d , d ) = (l-d)$,  equivalently $ \lceil { l \over 2 } \rceil     \le  d \le l-1  $,  
\bea 
\cR ( l , d )  \rightarrow \cR_{ \rm  deep }  ( l -  d )  =
 \sum_{ \Delta =1 }^{  l -d } \cR ( l , d, \Delta ) 
\eea
and the relation to oscillator counting derived as \eqref{ResDeepUniv} is 
\bea 
\cR_{ \rm  deep   }  ( l -  d ) = \cZ_{ \rm univ  } ( l - d  )
\eea 
Now consider the counting of low-depth red nodes, i.e. consider $ \cR_{ \sha } ( l , d ) $ for small $ d $ where $ \min ( l -d , d ) = d $ equivalently $ d \le \lfloor { l \over 2 } \rfloor  $ 
\bea 
\cR_{ \sha } ( l , d ) = \sum_{ \Delta =1 }^{  d  } \cR ( l , d, \Delta ) 
\eea 
In this range for $d$, we have $ ( l - d ) \ge  d $ and we can write 
\bea 
&& \cR_{ \sha } ( l , d ) =  \sum_{ \Delta =1 }^{  l -d  } \cR ( l , d, \Delta )  - 
\sum_{ \Delta = d +1 }^{ l - d } \cR ( l ,  d , \Delta ) 
\eea 
Defining 
\bea 
\cR_{ 12} ( l , d )  = \sum_{ \Delta = d +1 }^{ l - d } \cR ( l ,  d , \Delta ) 
\eea 
we can write 
\bea 
\cR_{ \sha } ( l , d ) =  \cR_{\rm  deep } ( l - d )   - \cR_{ 12} ( l , d ) 
\eea 
The first term is related to the oscillator count as we showed earlier  \eqref{ResDeepUniv}. 
The second term can be re-written, using \eqref{Redcount1}, as   
\bea 
&& \cR_{ 12} ( l , d  )  = \sum_{ \Delta  = d+1 }^{ l -d } \sum_{ k =0}^{ \lfloor {  l - d - \Delta \over 2 }  \rfloor   }  ~  \sum_{ l_1 = 0 }^{ l - d - 2k - \Delta } 
p (l_1 ) p ( l- d - 2k - \Delta - l_1 ) \cr 
&& = \sum_{ D =1 }^{ l - 2d } ~ \sum_{ k =0}^{ \lfloor {  l - 2 d - D \over 2 } \rfloor    }  ~ \sum_{ l_1 = 0 }^{ l - 2 d - 2k - D } 
p (l_1 ) p ( l- 2 d - 2k - D - l_1 )
\eea 
where we defined $ D = \Delta - d $.  Comparing the last line with \eqref{Redcount1}, specialised to the high-depth limit, we observe that 
\bea 
\cR_{ 12} ( l , d ) = \cR_{ \dep} ( l - 2d ) = \cZ_{ \rm univ } ( l - 2d) 
\eea 
We conclude that 
\bea 
\cR_{ \sha } ( l , d ) = \cZ_{ \rm univ } ( l - d ) - \cZ_{ \rm univ } ( l - 2d ) 
\eea
We have thus related the counting of red nodes of the restricted Bratteli diagrams,  in both the shallow and the deep regions to 
the universal oscillator partition function $ \cZ_{ \rm univ}  ( x ) $ in \eqref{genfunOsc}. We may summarise as 
\begin{equation}
\boxed{ 
\mathcal{R}(l, d) = 
\left\{
\begin{array}{ll}
\mathcal{Z}_{\text{univ}}(l - d) - \mathcal{Z}_{\text{univ}}(l - 2d) & \text{for } 1 \leq d \leq \left\lfloor \frac{l}{2} \right\rfloor \\
\mathcal{Z}_{\text{univ}}(l - d) & \text{for } \left\lceil \frac{l}{2} \right\rceil \leq d \leq (l - 1)
\end{array}
\right.
} 
\label{RedsOscSum}
\end{equation}

\subsection{Universal oscillator partition function  and  $ \Delta =1$ red nodes.    } 
\label{sec:DelEq1Osc} 

In this section we show that the counting of red nodes with $ \Delta =1$ has a simple relation to the universal oscillator partition function.
Recall the definitions 
\bea 
&& \cR ( l , d , \Delta  ) = \hbox { Number of red nodes  at depth $d$, with $ N = m+n -l$, } \cr 
&& \hbox{  and with $ c_1 ( \gamma_+ ) + c_1 ( \gamma_-) = N + \Delta $ } 
\eea
From \eqref{Redcount1}, we have 
\bea 
\cR ( l , d , \Delta  )  = 
\sum_{ k=0}^{ \lfloor { ( l - d - \Delta ) \over 2 }  \rfloor }  \sum_{ l_1 = 0 }^{ l - d - \Delta - 2k } p ( l_1) p ( l - d - \Delta - 2k - l_1 ) 
\eea 
Specialising to $ \Delta =1 $ 
\bea 
\cR ( l ,  d , \Delta =1 ) 
= \sum_{ k=0}^{ \lfloor { ( l - d - 1 ) \over 2 }  \rfloor }  \sum_{ l_1 = 0 }^{ l - d - 1 - 2k } p ( l_1) p ( l - d - 1 - 2k - l_1 ) 
\eea

Define $ \cR ( l , d^{ +} , \Delta =1 ) $ to be the number of red nodes with 
$ \Delta =1$ at depths $ d' > d $ : 
\bea 
&& \cR ( l , d^{ +} , \Delta =1 ) = \sum_{ d' = d+1 }^{ l -1 } \cR (l ,  d , \Delta =1  )  \cr 
&& = \sum_{ d' = d +1  }^{ l-1 } \sum_{ k=0}^{ \lfloor { l - d' -1 \over 2  } \rfloor  } 
 \sum_{ l_1 = 0 }^{ l - d' - 1 - 2k }p ( l_1) p ( l - d' - 1 - 2k - l_1 ) 
\eea
Defining a shifted variable 
\bea 
\tilde d = d' - d 
\eea
to get 
\bea 
 \cR ( l , d^{ +} , \Delta =1 )
= \sum_{ \tilde d = 1 }^{ l - 1 -d } \sum_{ k=0}^{ \lfloor { l - \tilde d -  d -1 \over 2  } \rfloor  } 
 \sum_{ l_1 = 0 }^{ l - d - \tilde d  - 1 - 2k }p ( l_1) p ( l - \tilde d - d  - 1 - 2k - l_1 ) 
\eea 
The universal oscillator count $ \cZ_{ \rm univ } ( s ) = \cR_{ \dep } ( s ) $ is, using \eqref{Rdeps2}
\bea 
 \cZ_{ \rm univ } ( s ) = \sum_{ \Delta =1}^{ s } \sum_{ k=0}^{ \lfloor {s -  \Delta \over 2 } \rfloor } 
 \sum_{ l_1=0}^{ s - \Delta - 2k } 
p ( l_1 ) ~  p ( s  - \Delta -  2k  - l_1 ) 
\eea 

Consider  the evaluation of $ \cZ_{ \rm univ  }  ( s ) $ at $s= l - 1 -d   ) $ while renaming $ \Delta \rightarrow \tilde d $ to find 
\bea 
 \cZ_{ \rm univ   } (  l-1  -d  ) = \sum_{ \tilde d  =1}^{ l-1 - d   } \sum_{ k=0}^{ \lfloor {l-1  - d -  \tilde d  \over 2 } \rfloor } 
 \sum_{ l_1=0}^{ l-1  - d -  \tilde d  - 2k } 
p ( l_1 ) ~  p ( l-1   - 2k -  d - \tilde d  - l_1 ) 
\eea 
We observe that 
\bea\label{Redsdpluniv}  
\cZ_{ \rm univ  } (  l-1  -d  ) =  \cR ( l , d^{ +} , \Delta =1 )
\eea 
Setting $d=0$, we conclude that the number of red nodes in the restricted Bratteli diagram with $ \Delta =1$ at all depths $  1 \le d \le (l-1) $ is given by the oscillator partition function 
\bea\label{Reds0pluniv}   
\boxed{ 
 \cR ( l , 0^{ +} , \Delta =1 ) = \cZ_{ \rm univ   } (  l-1   )
 } 
\eea

We will show  that the counting of green nodes in the restricted Bratteli diagrams in the stable range $ m,n \ge 2l -3 $, equivalently the counting of Brauer triples with modified dimensions for 
$ N  = m+n - l $, is given precisely by  $ \cR ( l , 0^{ +} , \Delta =1 )$. This gives a direct link between the counting of Brauer triples with modified dimension and the universal oscillator partition function.

\section{ Counting of green nodes }
\label{sec:greennodes} 

Let us denote the number of green nodes in the restricted Bratteli diagram for $B_N ( m , n ) $, with $ N = ( m +n - l ) $, in the $(m,n)$-stable regime of $ \min ( m , n ) \ge 2l -3 $, as $ \cG ( l  , d ) $. In this section we give a  proof that the number of green nodes at depth $0$, i.e. $ \cG ( l , 0 ) $ is equal to the number of red nodes with $ \Delta =1$ at depths $  1 \le  d \le ( l-1)$. 
\bea 
\cG ( l , d =0 )  =  \cR ( l , 0^{ +} , \Delta =1 ), 
\eea 
Using the relation between red nodes with  $ \Delta = 1 $ at $ 1 \le d \le ( l-1)$  and  the oscillator counting function in \eqref{Reds0pluniv} we conclude that 
\bea\label{GreensUniv}  
\boxed{ 
\cG ( l , d =0 )  = \cZ_{ \rm univ   } (  l-1   )
} 
\eea 
A simple argument generalises this to 
\bea\label{GreensUniv1} 
\cG ( l , d  )  =\cR ( l , 0^{ +} , \Delta =1 ) =  \cZ_{ \rm univ   } (  l-d - 1   )
\eea 
Our strategy for proving \eqref{GreensUniv} is to establish a bijection between the set of  red nodes with $ \Delta =1$ for any $d$ and the set of green nodes at $ d =0$. 

\subsection{ Bijection between the greens and the $ \Delta =1$ reds at higher depths }

Every green node at depth $d=0$ in the RBD for $B_N (  m,n)$ is connected  to some set of reds at higher depth in the diagram. For a green node $g$, let $ d_{ \min } ( g) $ be the minimal value  of $d$, such that the layers at $d$ contains a red node connected to $g$. We refer to such nodes as minimal depth red ancestors of $g$.  Green nodes have $ \Delta \le 0 $ while reds have $ \Delta \ge 1$. The 1-box Bratelli moves which connect a diagram at some depth to another diagram at the next depth can change $ \Delta $ by $1$ at most. It follows that a minimal depth red ancestor of a given green  node must have $ \Delta =1$.
 
Inspection of a few nontrivial  RBDs  shows that 
each green $g$ at $  d=0$ has a unique red ancestor at the minimal depth $ d_{ \min  } ( g ) $. 
We refer to this as the minimal-depth red ancestor of $g$. Consider Figure  
\ref{Gen_mnk4}. The unique first-red-ancestor associated with each green node at $d=0$ is: 
\bea 
&& \gamma_1 \leftrightarrow  \gamma'_2, \cr 
&& \gamma_2 \leftrightarrow  \gamma'_1,  \cr 
&& \gamma_3 \leftrightarrow \gamma'_{ 10 }, \cr 
&& \gamma_4 \leftrightarrow \gamma'_{ 3 },  \cr 
&& \gamma_5 \leftrightarrow \gamma'_{ 9 },  \cr 
&& \gamma_6 \leftrightarrow \gamma'_{ 8  },  \cr 
&& \gamma_7 \leftrightarrow \gamma'_{ 7  },  \cr 
&& \gamma_8 \leftrightarrow \gamma'_{ 5 },  \cr 
&& \gamma_9 \leftrightarrow \gamma'_{ 4}.
\eea
The diagrams appearing above all have $ \Delta =1$, i.e. height $9$. The diagram $ D_{ 6} $ has $ \Delta =2$ and does not appear above. 

Inspection of the RBDs further shows that the path from a green at $ d=0$ to its first-red ancestor proceeds through successive increases of  $ \Delta $ by $1$. We will prove both properties, in generality, in the following and these properties will serve to establish the bijection between greens at $d=0$ and red nodes with $ \Delta =1$ at $ d \ge 1 $.

\subsubsection{ Bijection : The first red ancestor and the fastest descent of $ \Delta $.  } 

Following an arrow in the restricted Bratelli diagram  from a Young  diagram at depth $d$ to one at  depth $d-1$ results in a change of $ \Delta $ by one of  $\{ -1 , 0 , 1 \} $. 
A change of $-1$ results from removing  a box from the first column of $ \gamma_+$. 
A change of $0$ results from removing a box from $ \gamma_+ \setminus c_1 $, i.e. removing a box from  the second or higher column, or from adding a box to $ \gamma_- \setminus c_1$. 
A change of $ +1$ results from adding a box to the first column of $ \gamma_-$. 

We may summarise as follows, using $ \delta^- $ with superscript $ -$ to indicate that this is in the direction of decreasing depth
\bea\label{depthdecpathtypes}  
\delta^{ -} ( \Delta ) = -1  & \iff &  \{  \delta^- ( c_1 ( \gamma_+ ) ) = -1  \} \cr 
\delta^{ -} ( \Delta ) = 0  & \iff &  \{  \delta^- ( [ \gamma_+ \setminus c_1 ]  ) = -1 ~~ \hbox{OR} ~~ 
\delta^- ( [ \gamma_- \setminus c_1 ]  ) = +1 \}  \cr 
\delta^{ -} ( \Delta ) = +1  & \iff  &  \{ \delta^- ( c_1 ( \gamma_- ) ) = +1  \} \cr 
&& 
\eea
It is convenient to assign  names to the  four different possibilities above : 
\bea\label{depthdecpathtypes1}  
A^{- }  & \equiv &  \{  \delta^- ( c_1 ( \gamma_+ ) ) = -1  \} \cr 
B_1^{ -}  & \equiv  &  \{  \delta^- ( [ \gamma_+ \setminus c_1 ]  ) = -1  \}  \cr 
B_2^{ -}  & \equiv  & \{  \delta^- ( [ \gamma_- \setminus c_1 ]  ) = +1 \}  \cr 
C^{ -}  &  \equiv &   \{  \delta^- ( c_1 ( \gamma_- ) ) = +1 \}  \cr 
&& 
\eea 
 
Conversely following an arrow in the restricted Bratelli diagram   (RBD) from a Young  diagram at depth $d$ to one at  depth $d+1$ results in a change of $ \Delta $ by one of  $\{ -1 , 0 , 1 \} $. 
A change of $-1$ results from removing  a box from the first column of $ \gamma_-$. 
A change of $0$ results from removing a box from $ [\gamma_- \setminus c_1]  $, i.e. removing a box from  the second or higher column of $ \gamma_-$, or from adding a box to $ [  \gamma_+ \setminus c_1] $. 
A change of $ +1$ results from adding a box to the first column of $ \gamma_-$. 

We may summarise as follows, using $ \delta^+ $ with superscript $ + $ to indicate that this is in the direction of decreasing depth : 
\bea\label{deltplusmvs}  
\delta^{ +} ( \Delta ) = -1  & \iff &  \delta^+ ( c_1 ( \gamma_- ) ) = -1 \cr 
\delta^{ +} ( \Delta ) = 0  & \iff &  \{  \delta^+ ( [ \gamma_+ \setminus c_1 ]  ) = +1 ~~ \hbox{OR} ~~ 
\delta^+ ( [ \gamma_- \setminus c_1 ]  ) = -1 \}  \cr 
\delta^{ +} ( \Delta ) = +1  & \iff  & \delta^+( c_1 ( \gamma_+ ) ) = +1 \cr 
&&
\eea
It is convenient to give some names to the  four different possibilities above : 
\bea 
A^{+ }  & \equiv &  \{  \delta^+ ( c_1 ( \gamma_- ) ) = -1  \} \cr 
B_1^{ + }  & \equiv  &  \{  \delta^+ ( [ \gamma_+ \setminus c_1 ]  ) = +1   \}  \cr 
B_2^{ + }  & \equiv  & \{  \delta^+  ( [ \gamma_- \setminus c_1 ]  ) = -1 \}  \cr 
C^{ + }  &  \equiv &   \{  \delta^+ ( c_1 ( \gamma_+ ) ) = +1 \}  \cr 
&& 
\eea 

Inspection of the RBDs shows that the path from a green node at $ d=0$ to its first red ancestor 
proceeds through a sequence of $C^+$ moves.  For  for example $\gamma_9$ in Figure \ref{Gen_mnk4}  which involves 
\bea 
\gamma_9 = ( [1^3 ] , [ 1^3 ]  ) \rightarrow  ( [1^4]   , [1^3 ]  ) \rightarrow ( [1^5]  , [1^3] )   \rightarrow ( [1^6]  , [1^3] )  = \gamma'_4 .
\eea 
The next Lemma proves that this is a general property. 

\noindent 
{\bf Lemma 9.1   } The path from a green at $ d=0 $ to its first red ancestor  in the restricted Bratteli diagram always proceeds by a sequence of $ C^+$ moves. 

\begin{itemize} 

\item The first red ancestor of a green at $d=0$  is defined as a red Young diagram at the smallest depth which admits a path to the specified green.

\item Since greens have, by definition,  $ \Delta \le  0$ and reds have $ \Delta \ge 1$ and the links in the TBD have $  | \delta^{ \pm} ( \Delta )  | = 1$, it follows that the first red ancestor has $ \Delta =1$. 

\item Given any diagram the $ C^+$ move, of adding a box to the first column,  is always well-defined and it results in a unique diagram. Thus any green at $ d=0$ has a first  red ancestor with $ \Delta =1$  which is obtained by a succession of $ C^+$ moves. 

\end{itemize} 

This completes the proof of the Lemma \hfill $\blacksquare$. 

\vskip.2cm 

{\bf Lemma 9.2  :  Uniqueness/injectivity } The first red ancestor of a  green node $g$ is unique. There is no other red at  depth  $ d = d_{ \min } ( g) $ which connects to the specified green.

\begin{itemize} 

\item For a specified green with $ \Delta = \Delta_* \le 0$, the red ancestor obtained by a sequence of $ C^+$ moves occurs at $d = | \Delta_* |  +1 $. 

\item Note from \eqref{deltplusmvs} that $C^+$ is the only of the $ \delta^+$  moves linking nodes at depth $d$ to nodes at depth $(d+1)$ which changes $ \Delta $ by $ +1$. 
If  a red ancestor connects to the specified green by $ \delta^+$ moves  which are not all $ C^+$
then at least one of the moves belongs to   $\{  A^+ , B_1^+ , B_2^+ \}$ which have 
$ \delta ( \Delta ) < 1$,  so this red ancestor must occur at a higher depth.

\item Therefore the  path to the first red ancestor gives an injective (one-to-one) map  between greens with at $ d=0$ with $ \Delta_* $ and reds at $ d = \Delta_* +1$ and with $ \Delta =1$.

\end{itemize} 

This completes the proof of the Lemma. \hfill $\blacksquare$. 

\vskip.2cm 

\noindent 
{\bf Lemma 9.3 :  Surjectivity. } We now  show that the map from  the set of green nodes  at $d=0$ to the set of $ \Delta =1$ reds at any depth $  1 \le d \le ( l-1)$ is surjective. 
 Given any red $Y_*$  at some depth $d = d_*$ with $ \Delta =1$, we can find a green at $ d=0$ which connects to $Y^*$ via  a sequence of $ C^+$ moves.  

\vskip.4cm

 First of all recall that by the definition of the RBD, every red diagram at some depth $d_*$ 
 in it connects to at least one green at $ d =0$, hence contributes to the modification of the  dimension of the associated mixed Young diagram. The connecting path can proceed, in general, via a combination of reds and greens  in intermediate depths $ d $ in the range $ 0 < d < d_*$. 
 Proceeding from $ d_*$ to $ d_* -1 $ along a link in the RBD. 

\vskip.1cm 
The hypothesis of surjectivity is that every such $Y_* $ has a path consisting of a sequence of $ A^-$  steps (inverse of $C^+$) leading from $ Y_*$ to a green with at $ d=0$, having $ \Delta = 1 - d_*$.  

\vskip.2cm 

 The $A^-$ step  decreases  $ \Delta  $ by $1$, decreases $d$ by $1$ and increases $ k $ by $1$. It keeps $ ( d + \Delta + 2k ) $ fixed, and as mentioned earlier consists of removing a box from the first column of $ \gamma_+$. 

\vskip.2cm 
To prove that $Y_*$ admits such a link to a green at  $ d_*-1$ with $ \Delta =0$, requires proving that the second column of $ \gamma_+ ( Y_* ) $ is strictly shorter in length than the first column of $ \gamma_+ ( Y_* ) $ : if this does not hold, removing a box from the first column of $ \gamma_+ ( Y_* )$   would not produce a valid Young diagram. 

\vskip.2cm 
Using \eqref{simpEq} we know that 
\bea 
| \gamma_+ \setminus c_1 | = l - d - 2k - \Delta - |\gamma_- \setminus c_1 |
\eea
This means 
\bea\label{upbdgpmin} 
&& | \gamma_+ \setminus c_1 | \le  l - d - 2k - \Delta \cr 
&& \implies | \gamma_+ \setminus c_1 | + d + 2k + \Delta \le l 
\eea
We also have from the definition of $ \Delta $ in equation  \eqref{defDeltxs} that 
\bea 
&& c_1 ( \gamma_+ ) + c_1 ( \gamma_- ) = m + n - l + \Delta  \cr 
&& \implies c_1 ( \gamma_+ ) = m + ( n - c_1 ( \gamma_- ) ) - l + \Delta 
\eea
and at depth $d$ the Young-diagram pairs are irreps of $ B_N ( m , n-d) $ giving 
\bea 
&& c_1 ( \gamma_- ) \le ( n - d )  \cr 
&& \implies ( n - c_1 ( \gamma_- ) ) \ge d 
\eea
The last two equations imply  
\bea 
c_1 ( \gamma_+ )   \ge m + d + \Delta - l 
\eea 
In the stable regime (Proposition \ref{mnstab}), we have  $ m \ge ( 2l -3 ) $. 
\bea\label{lowbdc1gp} 
&& c_1 ( \gamma_+ ) \ge l + d + \Delta - 3  \cr 
&& \implies c_1 ( \gamma_+ )  - d - \Delta +3 \ge l 
\eea
Combining \eqref{upbdgpmin} and \eqref{lowbdc1gp} we have 
\bea\label{c1andbeyond}  
&& c_1 ( \gamma_+ )  - d - \Delta +3   \ge | \gamma_+ \setminus c_1 | + d + 2k + \Delta   \cr 
&& \implies  c_1 ( \gamma_+ ) \ge | \gamma_+ \setminus c_1 | + 2d + 2k + 2 \Delta -3  
\eea 
For $  \Delta = 1 $, 
\bea 
&& c_1 ( \gamma_+ ) \ge | \gamma_+ \setminus c_1 | + 2k + 2 ( d -1) + 1 
\eea 
Further imposing $ d \ge 1 $, we have 
\bea 
 c_1 ( \gamma_+ )  \ge  | \gamma_+ \setminus c_1 |  + 2 k  +1 
\eea  
This means that for any $ Y_*$, the second column in $ \gamma_+$ has to be shorter than the 
 the first column, so it admits the $ A^-$ move. 

\begin{comment} 
Now if  $Y_* $ has $ d =2 , \Delta =1$, then $ A^- Y_* $ has $ k \ge 1 , \Delta =0, d =1$. With these conditions, \eqref{c1andbeyond} implies that $ A^- Y_*$ has 
\bea 
 c_1 ( \gamma_+ )  > | \gamma_+ \setminus c_1 | 
\eea 
and a box can be removed from $ c_1 ( \gamma_+ ( A^- Y_* ) ) )$. 
\end{comment} 
Now we again use the fact that the $A^-$ move decreased $ d, \Delta $ by $1$ while increasing $ k$ by $1$. For $Y^*$ at a general depth  $ d_* $, which has to obey $ d_*  \le ( l-1 )$, the diagram 
$ ( A^-)^{ p } Y_*  $, with $ p \le ( d_*  -1 ) $, obeys 
\bea 
&&  k \ge p   \cr 
&&  \Delta = ( 1 - p ) \cr 
&& d =  ( d_*  - p )  \cr 
&&  2d +  2 \Delta + 2k  = 2 d_* - 2p + 2 - 2 p + 2 k \ge 2 d_* + 2 - 2p  
\eea 
Using  $ d_* \ge ( p + 1 ) $ we have 
\bea 
2d + 2k + 2 \Delta \ge 4 
\eea
Then with \eqref{c1andbeyond}, 
\bea 
c_1 ( \gamma_+ ) \ge | \gamma_+ \setminus c_1 | +1 
\eea  
which means that the entire sequence $ ( A^-)^p Y_* $, for $ 0 \le p \le d_* -1 $  allows the removal of a box from the first column. 

This proves that every  red node $Y_*$ admits a first-ancestor-path for some green at $ d =0$,  completing the proof of Lemma 9.3. \hfill $\blacksquare$.

Together with  the injectivity proof from above, this completes the proof that the first-ancestor-paths give a bijection between the greens at $ d =0$ and the reds with $ \Delta =1$ at $ d \ge  1$. 

 The inspection of the the RBDs also suggests the generalisation of this 
result to the greens at some more general $d$. They admit a bijection to the set of reds at depths $d+1$ and higher, thus leading to the identity in equation \eqref{GreensUniv1}. 
 The result for the greens at depth $ d  $ in the $ B_N  ( m , n ) $ 
diagram follows from the $d=0$ result for $ B ( m , n - d)$.

\section{Discussions  and Outlook. }

We outline some technical generalisations of the present work and discuss broader related future research in connection with the recent literature in gauge-string duality and quantum information theory. 

A general formula for the modified dimensions $ \widehat{ d}_{ m , n, N  } $ is an interesting goal.  With $ N = m+n-l$, 
extending  the results we have derived beyond the range  $l \le 4$ up to $ l = 15 $ say should be  possible by employing more sophisticated computational techniques based on this paper. For the general case $(m,n,N)$ case, finding a good formula or efficient combinatorial rule is a worthwhile objective,  likely to have many applications. Consideration of the $(m,n)$-stable regime of $m,n \ge (2l-3)$ is likely to be a fruitful approach. 

The explicit construction of matrix basis elements $ Q^{ \gamma }_{ IJ  } $ (also called matrix units)  for the semisimple quotient $ \widehat B_N ( m, n ) $ of $ B_N( m,n)$, which is isomorphic to the commutant $   \cA_{m,n}^N$  of the unitary group action in mixed tensor space, is of interest in matrix quantum mechanics (see e.g. \cite{KR,KRT09,KimuraQuarter}) and quantum information theory ( see e.g.~\cite{grinko2024efficientquantumcircuitsportbased,StudzinskiIEEE22,grinko2025phd}). For Brauer triples $ \gamma$ with $ c_1 ( \gamma_+ ) + c_1 ( \gamma_- )  $ close to $ (m+n)$, an evident extension of the present work is the explicit construction of the kernel $ I_N( m,n ) $ of the map $ \rho_{ N  , m , n } : 
B_N ( m,n) \rightarrow \End ( \Vmn  ) $, for $ N = ( m+n -l ) $ with $l $ small. These explicit constructions will shed light on the precise form of the algebraic relations between the non-semisimple algebra $ B_N ( m, n ) $ and the semisimple quotient $\widehat B_N ( m, n )$ . A more ambitious but still plausible goal is to start from  these explicit constructions of the Kernel defined by taking $ N = (m+n -l)$, and then change the Brauer product parameter for these Kernel elements from $N$ to a value of $N'$ in the semisimple regime. This would be a new approach to the efficient construction of matrix units potentially applicable in a regime of large $(m,n)$ but for matrix units labelled  by Brauer triples $ \gamma $ having $ c_1 ( \gamma_+ ) + c_1 ( \gamma_- ) $ close to $(m+n)$. 

Non-trivial checks of the computation of modified dimensions in section \ref{sec:DimExamples}
were performed based on the relation between the Kernel of $ \rho_{ N , m , n} $ 
and $ \rho_{ N , m+n  } $ given by partial transposition of matrices $ X_1 , X_2 , \cdots , X_{ m+n} $. Similar relations  exist if we work with just two matrices instead of $(m+n)$ matrices. In this case there will be identities analogous to \eqref{identityDims} but now involving Littlewood-Richardson coefficients on one side and reduction coefficients from $ B_N ( m,n) $ to 
$ S_m \times S_n $ on the other side. This follows from the fact that orthogonal  bases of 2-matrix invariants can be constructed using matrix units  for 
the sub-algebra of $ B_N ( m,n) $ which commutes with $ S_m \times S_n $  \cite{KR} or 
the sub-algebra of $ \mC ( S_{ m+n} ) $ which commutes with $ S_m \times S_n $ \cite{BCD0801,BDS0805}.  The algebraic description of the these orthogonal bases and related $U(2) $ covariant ones \cite{BHR1,BHR2,QuivCalc}  in terms of permutation centraliser algebras was explained in \cite{MatRam2016} and have been used to develop algebraic eigenvalue-based algorithms for these bases \cite{EigVal}. Extending the calculations of this paper to the 
study of the sub-algebra of $B_N ( m,n)$ which commutes with $S_m \times S_n$, for the non-semisimple regime $ N < (m+n)$ is an interesting project.  The $S_m \times S_n$  symmetry also  appears in quantum computation in the context of  the higher order quantum operations (HOQO) as  parallel configuration between inputs and outputs \cite{taranto2025higherorderquantumoperations}.

The simplicity of the regime $m,n \ge  ( 2l-3) $, with the stable form of the restricted Bratteli diagrams independent of $(m,n)$ and with the relations of the counting of red and green nodes in terms of a tower of harmonic oscillators, has some analogies to other large-parameter simplifications which have been studied in connection with matrix invariants and their associated dual objects, notably giant gravitons, in AdS/CFT. For holomorphic invariants of one complex matrix, the orthogonal bases of matrix invariants labelled by Young diagrams allow a map 
\cite{CJR}  to giant graviton \cite{MST,GMT0008,HHI0008}  configurations. This has been extensively evidenced (see \cite{YJKW2103,BudGai} for  recent discussions  with earlier references). For large Young diagrams having large row (or column)  lengths and large differences between successive rows (or columns), which are interpretable  in terms of well-separated giant gravitons, fluctuations described by 2-matrix invariants lead to a picture of giant graviton oscillators \cite{GigraOsc,DoubCos} which arise from solving the one-loop dilatation operator in this sector. It would be very interesting to investigate if an analogous role for the oscillators uncovered here can be found in terms of fluctuations of brane-anti-brane systems related to matrix invariants of $ Z , \overline { Z } $. The rich physics of the   $ Z , \overline { Z } $ matrix system has been found to include negative specific heat capacities \cite{DOCSR} suggestive  of interpretations in terms of small black holes in AdS \cite{Han2016,Ber2018}, and hidden 2d free fields in the physics of small black holes is a fascinating prospect.

The setting of lower-dimensional gauge-string dualities is also a promising avenue for exploring the deeper physical implications of the oscillator system we have found here. The large $N$ expansion of the dimensions of $U(N)$ irreducible representations arising in the decomposition of $ \Vmn$ are used in developing the  dual string theory picture of 2d Yang Mills theory with $U(N)$ gauge group \cite{GrTa1,GrTa2} (an extensive review including exposition of the important role of Schur-Weyl duality in this expansion is in \cite{CMR}). Identities for these dimensions
 following  from walled  Brauer algebras were used to give a holomorphic reformulation of the large $N$ expansion \cite{KRHol} at the expense of introducing line defects. As mentioned the 
 partition function $ \cZ_{ \rm univ } (x) $  in \eqref{Zuniv}, stripped of the factors  ${1\over (1-x) (1-x^2) } $, has been discussed in the 
 context of 2d Yang Mills theory and the boson-fermion description of its states \cite{Douglas1993}.  The relevant boson-fermion correspondence has been studied in depth (see  for example  \cite{Mandal} \cite{MinPoly} \cite{DasJev} and refs. therein).  
 The dual string theory has a number or proposed worldsheet actions \cite{CMR1,CMR,Horava} and has  seen a revival of interest in recent years \cite{Ofer,JanYM2,ShotaYM2}. An interesting question is whether  there is a stringy interpretation, in the 2d Yang Mills context,  of the oscillator partition function studied here.   The structure of finite dimensional diagrammatic non-semisimple  associative algebras have also been  used to inform indecomposable  representations of the Virasoro algebra arising in logarithmic CFT \cite{GJRSV2013}. This may give another avenue for potential applications of the structures uncovered here to low-dimensional quantum field theory. 

Our work, along with its possible extensions, may also be useful in the context of quantum information science and optimisation problems which can be recast in the form of semidefinite programs (SDP)~\cite{10.1088/978-0-7503-3343-6}.
 
In contrast to the central interest in large $N$ in  high-energy physics,  in quantum information and quantum computing we deal with a large number of involved systems $m+n$ and every system separately often has a small dimension $N$ : for example we  can use qubits ($N=2$) for various quantum information processing tasks~\cite{RevModPhys.81.865}. Now, if our tasks exhibit symmetries induced by the algebra $B_N(m,n)$ it is natural to apply the representation theory to simplify the problem. However, we immediately land in the highly non-semisimple region for the considered algebra, where the description is more involved. In this regime having precise description of the kernel in terms of the operator units acting on the space $\Vmn$ would lead us to relaxing  the complexity of SDP problems~\cite{Grinko2024Lin,Nechita2023,StudzinskiIEEE22, Ebler_2023}. Namely, many  quantum information related SDP problems have matrix constraints which can be reduced to matrix constraints involving irreducible matrix units. By knowing their properties, specifically the dimensions in the non-semisimple regime, we can reduce the complexity of finding solutions by lowering the dimension of the space on which our problem is defined. We elaborate more on this in the context below.

In general, we hope our toolkit with its further extensions will give a chance for better understanding the theory of higher-order quantum operations (HOQO)~\cite{taranto2025higherorderquantumoperations} by exploring underlying symmetries and help to answer on other unsolved  problems. In particular, this involves addressing the open problem of obtaining a single copy of $U^T$, where $U \in SU(N)$, given $k$ calls to the black-box program implementing $U$. In this particular problem, we have to go beyond the parallel strategies and consider adaptive or even indefinite ones to construct optimal quantum combs and calculate their efficiency~\cite{Ebler_2023}. Up to know the scientific community considered quantum combs giving output of the form of $\overline{U}, U^{\dagger}$~\cite{miyazaki17,PhysRevLett.131.120602,10089837}, but even here, for $U^\dagger$, answers are known partially~\cite{yoshida2024a,mo2024parameterizedquantumcombsimpler,chen2024quantumadvantagereversingunknown}. Such protocols can be naturally extended to the situation where, for a given number of calls $U^{\otimes k}$, we want to produce some number of $(U^T)^{\otimes l}, (U^\dagger)^{\otimes l}, \overline{U}^{\otimes l}$, where $l<k$. This kind of situation can occur when we want to apply a simultaneously transformed quantum program on several systems.


Let us elaborate more on this by focusing on potential application in enhancing the efficiency of the quantum unitary programming and the storage and retrieval (SAR) mechanisms~\cite{PhysRevA.81.032324,sedlak18,lewandowska2024} for quantum programs compared to existing implementations by including multicopy quantum teleportation~\cite{grosshans2024multicopyquantumstateteleportation}. From the previous papers we know that the algebra $B_N(m,n)$ can be presented as a chain of of inclusions of ideals. Each such ideal is determined by the number of arcs in its elements~\cite{Cox1}. For example, for the highest ideal, when the number of arcs is maximal (i.e. $n$ if we assume $n<m$) the matrix units in this ideal are $S_m \times S_n$ adapted by the construction. This fact allowed for elegant description of the multi-port-based teleportation and unitary programming when $n=1$~\cite{PhysRevA.85.022330,grosshans2024multicopyquantumstateteleportation} and multi-port based teleportation for any $n>1$~\cite{StudzinskiIEEE22}.  This was possible because these quantum primitives require knowledge only about the highest ideal. To enhance the efficiency of the unitary programming/SAR schemes one needs to go to lower ideals, even with only one arc, and construct there the $S_m \times S_n$ adapted basis~\cite{StudzinskiIEEE22,WBA2024}. Only then we are able to use the full power of symmetries contained in the problem and possibly solve corresponding optimisation problems. Having them solved, one can reconstruct optimal quantum memory state to store quantum information and compute optimal efficiency of its retrieval.

Another interesting detour is strictly connected quantum circuit designing. Namely, we can ask what is the form of the global unitary transformation performing basis change from the computational basis to the respective group adapted irreducible spaces? This problem should be possible to solve in the most general framework assuming that the required Littlewood-Richardson coefficient are known. We are aware that computing of the Littlewood-Richardson coefficients is in general a $\#P$-complete problem~\cite{Narayanan2006,Liu2019}. However, once we assume that they are given for a specific initial values of parameters, we can ask is it possible to represent efficiently corresponding unitary as a quantum circuits, similarly as it was done for unitary transformation to the Schur basis~\cite{bacon2007quantum,Krovi2019efficienthigh,Kirby_2018} or the mixed Schur basis~\cite{nguyen2023mixedschurtransformefficient,fei2023efficientquantumalgorithmportbased,PRXQuantum.5.030354,grinko2024efficientquantumcircuitsportbased}. Presenting such circuits even for small number of $m+n$ and comparing with the Schur transform should be very interesting and instructive. 

Algebraic studies of quantum complexity in connection with projector detection tasks motivated by the matrix description of brane systems in AdS/CFT have been conducted in \cite{ProjDetect}. Extending these  to Brauer algebra projectors and matrix units is an interesting area for future  investigation. Further study of connections between phenomena in high-energy physics and quantum information implied by the  common appearance of walled Brauer algebras in both settings promises to be a highly fruitful area. An area of application could be, for example, to find applications of results in  the complexity of the mixed Schur transforms to the complexity of simulations of high-energy processes.

\section*{Acknowledgments}

It is a pleasure to acknowledge useful conversations with Dimitri Grinko,  Michał Horodecki, Denjoe O' Connor, Adrian Padellaro, Ryo Suzuki.  We extend our special thanks to the authors of \cite{grinko2023gelfandtsetlinbasispartiallytransposed} for sharing the mathematica code used in the paper. We acknowledge useful conversations with ChatGPT (OpenAI)  on the mathematica syntax for specified computational tasks, which helped expedite the development of the code described in Appendix A. SR is supported by the Science and Technology Facilities Council (STFC) Consolidated Grant ST/T000686/1 
``Amplitudes, strings and duality''. SR grateful for a Visiting Professorship at Dublin Institute for Advanced Studies,  held during 2024, when this project was initiated.  SR also gratefully acknowledges a visit to the Perimeter Institute in November 2024: this research was supported in part by Perimeter Institute for Theoretical Physics. Research at Perimeter Institute is supported by the Government of Canada through the Department of Innovation, Science, and Economic Development, and by the Province of Ontario through the Ministry of Colleges and Universities.
The work of MS is  is carried out under IRA Programme, project no. FENG.02.01-IP.05-0006/23, financed by the FENG program 2021-2027, Priority FENG.02, Measure FENG.02.01., with the support of the FNP.

\section*{Data availability}
Manuscript has associated data in a data repository - Restricted-Brattelli-Diagrams.nb code available on arxiv as an ancillary file to the manuscript (see: https://arxiv.org/abs/2509.04234).

\section*{Conflict of interest}
The authors declare no conflict of interest.

\newpage 

\begin{center} 
{ \LARGE \bf Appendices } 
\end{center} 

\begin{appendix}

\section{ Guide to Mathematica code for restricted Bratelli diagrams (RBD)  for $B_N ( m ,n ) $   }\label{appsec:Mtca}  

This section explains the code in the notebook Restricted-Bratteli-Diagrams.nb which is provided as auxiliary material alongside the arXiv submission~\cite{RamgoolamStudzinski2025WBAcode}. The code builds on  a selection of definitions from the mathematica code accompanying \cite{grinko2023gelfandtsetlinbasispartiallytransposed}, which constructs Bratteli diagrams for 
$B_N ( m , n ) $. These Bratteli diagrams are graphs with nodes organised according to levels starting from $0$ to $ (m+n)$. The  nodes of the final level $L$  are associated with mixed Young diagrams of $B_N ( m , n ) $. We define depth $d$ as $ d = L - m - n $. Thus the final layer is depth $d=0$. For the $N < (m+n)$, some of the diagrams are excluded by the finite $N$ constraint 
\eqref{finiteNconst}. These are kept in the RBD but their nodes are colored red. The remaining nodes are coloured green.  The code works as follows :

\begin{enumerate}

\item Identifies all the red nodes. 
Finds the highest depth $d_{ \max} $ which contains a red node. 

\item  The command {\rm BuildSkeltFwd} constructs  a list of lists. The first entry  is the list of  red nodes at $d = d_{ \max}$. 
It turns out from the theoretical considerations we have explained (section \ref{sec:deepest}) that there is a single red node at $ d_{ \max} $. The code does not use this as input. The   second entry  is built to contain the red or green nodes at $ d = d_{\max} -1 $ which are connected to the deepest reds 
at $ d = d_{ \max }  $ by the Bratteli moves of the large $N$ Bratteli diagram, and 
 any other red nodes at $d = d_{ \max} -1 $ are appended. Goes to $ d = d_{\max }  -2  $ and constructs the list of all the red and green nodes connected to the previous list of red and green $ d = d_{\max} -1$,  then adds any ther red nodes at $ d = d_{\max} -1$. 
  Repeats the procedure 
until $d =1$ is reached. The last entry is the list of   green nodes  at $ d=0$ connected to the list of nodes thus constructed at $ d=1$.

\item  The  command {\rm CBSNEW} works with the output constructed above, and starts with the $d=0$ nodes from above. Then steps  to $ d =1$, removes any red nodes unconnected to the green nodes at $d=0$. Next steps up  to $d=2$ and removes any  nodes not connected to the above constructed list at $ d =1$. Iterates reaching all the way to $ d = d_{\max} $. 

\item The command  which produces, using the above steps the restricted Bratteli diagrams 
for $ B_N ( m ,n ) $ with $ N  = m +n -l$ and $ l \ge 2$ is  $ {\rm RestrictedBratteliDiag} [ m, n , N ]  $. 

\item Another command  $ { \rm RestBratDiagSpec } [ m , n , N , a  ] $ specialises the RBD to the nodes which are connected via Bratteli moves to a specific green node at depth $d=0$. This node is  identified by an integer $ a$, which ranges from $1$ to the number of green nodes at depth $0$. 

\end{enumerate} 

The notebook Restricted-Bratteli-Diagrams.nb also contains the command ColoredBratteliDiag which constructs the colored Bratteli diagram for $ B_N ( m , n )$, where the green nodes obey the constraint \eqref{finiteNconst}  while the red nodes do not obey this constraint. Unlike the RBD, the final layer is not restricted to green nodes which acquire a modified dimension. The command is used to produce Figure \ref{Fig0}. 

\section{ Checks of the decomposition of $ \Vmn $ into irreducible of $U(N)$  for small $m,n$  } 

In this section we perform checks of the modified dimensions $ \widehat{d}_{ m,n,N} $ calculated 
in section \ref{sec:DimExamples} by directly verifying, for small values of $m,n$, the identity for dimensions of the vector spaces on the LHS and the RHS of \eqref{eq:mixedSW}.

We recall,  for convenience here the decomposition of mixed tensor space in terms of irreducible representations of $U(N)$ from section \ref{Sec:l=2}. For the stable large $N$ region \eqref{eq:mixedSW1}  gives 
\bea\label{eq:mixedSW1A}
V_N^{\otimes m}\otimes \overline{V}_N^{\otimes n}=\bigoplus_{ \gamma \in {\rm BRT } ( m,n ) } V_\gamma^{U(N)} \otimes V_{\gamma}^{B_N(m,n)}.
\eea  
In the non-semisimple regime $ N < (m+n)$ \eqref{eq:mixedSW} gives 
\bea\label{eq:mixedSWA} 
V_N^{\otimes m}\otimes \overline{V}_N^{\otimes n}=\bigoplus_{ \substack { \gamma \in \widehat{\BRT } ( m,n)   }  } V_{ \gamma }^{U(N)} \otimes V_{\gamma}^{\widehat{B}_N(m,n)},
\eea

In the stable large $N$ regime where \eqref{eq:mixedSW1A} holds, the following identity for dimensions follows 
\bea\label{NstablePowerIds} 
 N^{ m + n } = \sum_{  \gamma \in \BRT (m,n)   }  \Dim V_{ \gamma }^{ U(N) } 
 d_{m,n} ( \gamma ) 
\eea 
where $ Dim ( V_{ \gamma }^{ B_N ( m , n ) } ) = d_{ m,n }  ( \gamma ) $ is independent of $N$. 
Outside the stable regime, for $ N < (m+n)$ ,  where \eqref{eq:mixedSWA} holds, 
\bea 
\sum_{ \substack{  \gamma \in \BRT (m,n ) \\ c_1 ( \gamma_+ ) + c_1 ( \gamma_- ) \le N    }}  \Dim V_{ \gamma }^{ U(N) }  \widehat{d}_{m,n, N } ( \gamma ) 
\eea 
As discussed around \eqref{DisjBRTmnN0} and \eqref{BRThat} of section \ref{Sec:l=2}, each $N < (m+n) $ in the non-semisimple regime defines a partition of $ \BRT (m,n)$ 
\bea\label{DisjBRTmnN}  
&& \BRT (m,n) = \Excl ( m,n , N ) \sqcup \Unmod (  m , n , N )  \sqcup \Mod (  m , n , N )  \cr 
&& \widehat{\BRT} ( m , n ) := \BRT (m,n) \setminus \Excl ( m,n , N ) = \Unmod (  m , n , N )  \sqcup \Mod (  m , n , N ) \cr 
&& 
\eea
The Brauer triples in $\Mod (  m , n , N )   $ appear as green nodes at depth $0$ in  the RBD for $ B_N ( m,n) $.  There is now an identity 
\bea\label{NunstablePowerIds} 
&& N^{ m+n} =  \sum_{ \substack{  \gamma \in \BRT (m,n)  \\ \gamma \in  \Unmod (  m , n, N  )  }  }  \Dim V_{ \gamma }^{ U(N) } \dim_{ m , n }  (  \gamma  ) +  \sum_{ \substack{  \gamma \in \BRT ( m,n)  \\  \gamma \in  \Mod ( N , m , n )  }  }  \Dim V_{ \gamma }^{ U(N) } \widehat{d}_{  m , n, N    }  (  \gamma  ) \cr 
&& 
\eea 
As we have discussed earlier, $ \Mod ( N , m , n ) $ is empty for $ N = m+n -1$. 
We will illustrate these identities \eqref{NstablePowerIds}  and \eqref{NunstablePowerIds} 
by using the well-known formula for $ d_{ m,n} ( \gamma ) $ and the modified dimensions we 
have calculated in section \ref{sec:DimExamples}, for some special cases with $(m,n) = (2,1) $, $ (m,n) = (2,2) $ and $ (m,n) = (3,3)$. 

In the following we will verify the identities \eqref{NstablePowerIds} and \eqref{NunstablePowerIds} explicity for small values of $m,n$. The irrep 
$V_{ \gamma }^{ U(N) } $ has a highest weight $ \lambda $ given by the positive and negative 
row lengths of the mixed Young diagram $ \Gamma ( \gamma , N ) $.
The  dimensions
$ \Dim V_{ \gamma }^{ U(N) } $   are calculated using the Weyl dimension formula, which 
for $ \lambda = \Gamma ( \gamma , N )  = [ \lambda_1 , \lambda_2 , \cdots , \lambda_N ] $ reads as 
\bea 
\Dim V_{ \gamma }^{ U(N) }  = \prod_{ 1 \le i <  j \le N }  \frac{\lambda_i - \lambda_j - i + j }{j-i} 
\eea

\subsection{ The case $m=2,n=1$ } 

\begin{table}[h]
\centering
\begin{tabular}{|c|c|c|c|}
\hline
 $ \gamma $ &  mixed Young diagram $\Gamma ( \gamma , N )  $  & $ \Dim V_{ \gamma }^{ U(N) } $ & $ d_{ m,n} ( \gamma ) $  \\
\hline
 $( 0 , [2] ,[1] )$  & \rule{0pt}{0.7em}$[\,2,0^{\,N-2},-1\,]$ 
 & $\displaystyle \frac{N(N+2)(N-1)}{2}$ 
& 1 \\[0.4em]
$ ( 0 , [1^2] ,[1] )$ & $[\,1,1,0^{\,N-3},-1\,]$ 
& $\displaystyle \frac{N(N+1)(N-2)}{2}$ 
& 1 \\[0.4em]
 $ ( 0 , []  , []  ) $  & $[\,1,0^{\,N-1}\,]$ 
& $\displaystyle N$ 
& 2  \\
\hline
\end{tabular}
\caption{List of irreps of $U(N)$, with highest weight $\Gamma ( \gamma , N ) $  appearing in decomposition of $\Vmn$ for $(m,n)=(2,1)$. The  stable range multiplicities are the 
Brauer irrep dimensions $ d_{ 2,1} ( \gamma )$.}
\label{DecompDatameq2neq1}
\end{table}
Using the data given in the table \ref{DecompDatameq2neq1}, the stable range identity \eqref{NstablePowerIds} is 
\bea 
N^3 =  \frac{N(N+2)(N-1)}{2} +  \frac{N(N+1)(N-2)}{2} + 2 N  
\eea 
Now consider what happens when $ N = 2$, the first value in the unstable range. 
This $ N = m+n - l $ with $ l=1$, for which there are excluded representations but no modified dimensions. The excluded representation is  $ ( 0 , [1^2] ,[1] )$ and the identity is 
\bea 
2^3 =  \biggl[ \frac{N(N+2)(N-1)}{2}  + 2N \biggr ]_{ N =2 } = 4 +4 
\eea 
Next we have $ N=1$ or $ l=2$ for which we have excluded representations $ \gamma = (  0 , [2] ,[1] ) $ and $ \gamma = ( 0 , [1^2] ,[1] )$, and one representation $ \gamma =   ( 0 , [] ,[]  ) $ with dimension modification. The modification as given 
by  \eqref{leq2gam}   and \eqref{leq2delt}
\bea 
2 \rightarrow 2-1 =1 
\eea 
The modified identity \eqref{NunstablePowerIds} then specialises to 
\bea 
1^3 =\biggl[  N \times (  2 - 1 ) \biggr ]_{ N =1  } 
\eea

\subsection{ For the case $ m,n =2$ }

\begin{table}[h]
\centering
\begin{tabular}{|c|c|c|c|}
\hline
 $ \gamma = ( k , \gamma_+  , \gamma_- ) $ &  mixed Young diagram $\Gamma ( \gamma , N )  $  & 
 $ \Dim V_{ \gamma }^{ U(N) } $ & $ d_{ m,n} ( \gamma ) $  \\
\hline
 $ \gamma_1 = ( 0 , [2] , [2] ) $   & $[\,2,0^{\,N-2},-2\,]$ 
& \rule{0pt}{0.7em} $\displaystyle \frac{N^{2}(N-1)(N+3)}{4}$ 
&  $1$  \\[0.4em]
$\gamma_2 = (0 , [2] , [1^2] )$ & $[\,2,0^{\,N-3},-1,-1\,]$ 
& $\displaystyle \frac{(N-2)(N-1)(N+1)(N+2)}{4}$ 
& $1$ \\[0.4em]
$ \gamma_3 = (0 , [1^2] , [2] )$ &  $[\,1,1,0^{\,N-3},-2\,]$ 
& $\displaystyle \frac{(N-2)(N-1)(N+1)(N+2)}{4}$ 
& $1$ \\[0.4em]
$ \gamma_4 = (0, [1^2] , [1^2] )$  & $[\,1,1,0^{\,N-4},-1,-1\,]$ 
& $\displaystyle \frac{N^{2}(N-3)(N+1)}{4}$ 
& $1$  \\[0.4em]
\hline
$ \gamma_5= (1, [1] , [1] )$  & $[\,1,0^{\,N-2},-1\,]$ 
& $\displaystyle N^{2}-1$ 
& $4 $\\[0.4em]
\hline 
$ \gamma_6 = (2, [] , [] )$ &  $[\,0^{\,N}\,]$  
& $\displaystyle 1$ 
& $2$ \\
\hline
\end{tabular}
\caption{ List of irreps of $U(N)$, with highest weight $\Gamma ( \gamma , N ) $  appearing in decomposition of $\Vmn$ for $(m,n)=(2,2)$. The  stable range multiplicities are the 
Brauer irrep dimensions $ d_{ 2,2} ( \gamma )$.   }
\label{Datameq2neq2} 
\end{table}

Using the data in table \ref{Datameq2neq2}, the identity \eqref{NstablePowerIds} now specialises to 
\bea\label{stabid22}  
&& N^4 = \frac{N^{2}(N-1)(N+3)}{4} + \frac{(N-2)(N-1)(N+1)(N+2)}{4} + \frac{(N-2)(N-1)(N+1)(N+2)}{4} 
\cr 
&& + \frac{N^{2}(N-3)(N+1)}{4} + 4 (  N^{2}-1 )  + 2 
\eea 
For $ N =3$, we have $ l = m+n -N =1$ and there is a single excluded irrep  $ \gamma_4$, no Brauer triples with modified dimensions,  while $ \Dim V_{ \gamma_4 }^{ U(N) } $ continued to $N=3$ is zero. It then follows from \eqref{stabid22} that the identity \eqref{NunstablePowerIds} holds.  
For $ N =2$, i.e. $l=2$,  the excluded triples are $ \gamma_2 , \gamma_3 , \gamma_4 $, the surviving reps are 
$ \gamma_1 , \gamma_5 , \gamma_6 $. From the equation \eqref{leq2gam} and \eqref{leq2delt}, $ \gamma_5$ has a modified dimension $ 4 \rightarrow 3$. The identity \eqref{NunstablePowerIds} is thus 
\bea 
2^4 && = \biggl [  \frac{N^{2}(N-1)(N+3)}{4}  +  3 (  N^{2}-1 )  + ( 2 ) \biggr ]_{ N=2} \cr 
&& = 5 + 9 + 2 = 16 
\eea
For $ N=1$, where $ l=3$, the surviving irrep is $  \gamma_7$ and as given by  \eqref{modleq3gams} and \eqref{modleq3delts} it has a modified dimension $ 2 \rightarrow 1 $. 
Thus the identity is $ 1^3 = 1 $, where the RHS is the contribution from $ \gamma_7$. 
Since the $m,n$ are smaller than the  $(m,n)$-stability bound  of $ 2l-3$ here, not all the modified dimensions calculated for $l=3$ are relevant here.

\subsection{ The case $ m=3,n=3$ } 

\begin{table}[h]
\centering
\scriptsize
\setlength{\tabcolsep}{6pt}
\renewcommand{\arraystretch}{1.17}
\begin{tabular}{|c|c|c|c|}
\hline
 $(k, \gamma_+,\gamma_-) \in {\rm BRT} ( 3,3) $&  mixed Young diagram $\Gamma ( \gamma , N )  $   & $\Dim (V_{ \gamma  }^{\tiny{ U(N} )} )$ &   $ d_{ m,n} ( \gamma ) $  \\
\hline
$(0,[3],[3])$ & $( 3,0^{N-2} , -3 )$  &  $\displaystyle \frac{N^{2}(N-1)(N+1)^{2}(N+5)}{36}$ &  $1$ \\
$(0,[3],[2,1])$ & $( 3,0^{N-3} , -1,-2 )$  & $\displaystyle \frac{N^{2}(N-2)(N-1)(N+2)(N+4)}{18}$ &  $2$ \\
$(0,[3],[1,1,1])$ & $ ( 3,0^{N-3} , -1,-2 )$  & $\displaystyle \frac{(N-3)(N-2)(N-1)(N+1)(N+2)(N+3)}{36}$ & $1$ \\
$(0,[2,1],[3])$ & $ ( 2,1,0^{N-3} , -3 )$ & $\displaystyle \frac{N^{2}(N-2)(N-1)(N+2)(N+4)}{18}$ & $2$ \\
$(0,[2,1],[2,1])$ & $ ( 2,1,0^{N-4} , -1,-2  )$ &  $\displaystyle \frac{(N-3)(N-1)^{2}(N+1)^{2}(N+3)}{9}$ & $4$ \\
$(0,[2,1],[1,1,1])$ & $ ( 2,1,0^{N-5} , -1,-1,-1  )$  & $\displaystyle \frac{N^{2}(N-4)(N-2)(N+1)(N+2)}{18}$ & $2$ \\
$(0,[1,1,1],[3])$ & $ ( 1,1,1 ,0^{N-4} , -3  )$  & $\displaystyle \frac{(N-3)(N-2)(N-1)(N+1)(N+2)(N+3)}{36}$ & $1$ \\
$(0,[1,1,1],[2,1])$ & $ ( 1,1,1 ,0^{N-5} , -1,-2  )$   & $\displaystyle \frac{N^{2}(N-4)(N-2)(N+1)(N+2)}{18}$ & $2$ \\
$(0,[1,1,1],[1,1,1])$ & $ ( 1,1,1 ,0^{N-6} , -1,-1,-1 )$  & $\displaystyle \frac{N^{2}(N-5)(N-1)^{2}(N+1)}{36}$ &  $1$ \\
\hline
$(1,[2],[2])$ &  $ ( 2,0^{N-2} , -2  )$   & $\displaystyle \frac{N^{2}(N-1)(N+3)}{4}$ &  $9$ \\
$(1,[2],[1,1])$ & $ ( 2,0^{N-3} , -1,-1  )$   &  $\displaystyle \frac{(N-2)(N-1)(N+1)(N+2)}{4}$ &  $9$ \\
$(1,[1,1],[2])$ & $ ( 1,1,0^{N-3} , -2  )$   &  $\displaystyle \frac{(N-2)(N-1)(N+1)(N+2)}{4}$ &  $9$ \\
$(1,[1,1],[1,1])$ & $ ( 1,1,0^{N-4} , -1,-1  )$  & $\displaystyle \frac{N^{2}(N-3)(N+1)}{4}$ &  $9$ \\
\hline
$(2,[1],[1])$ & $ ( 1 ,0^{N-2} , -1  )$  &  $\displaystyle (N-1)(N+1)=N^{2}-1$ & $18$ \\
\hline
$(2,[],[])$ & $ ( 0^{N}   )$  & $1$ &  $6$ \\
\hline
\end{tabular}
\caption{ List of irreps of $U(N)$, with highest weight $\Gamma ( \gamma , N ) $  appearing in decomposition of $\Vmn$ for $(m,n)=(3,3)$. The irreps are labelled by Brauer representation triples $ \gamma \in {\rm BRT } (3,3)$ and the stable range multiplicities are the 
Brauer irrep dimensions $ d_{ 3,3} ( \gamma )$.   }
\label{tab:Vmndecomp33} 
\end{table}

Using the data in  the table  \ref{tab:Vmndecomp33} we verify that 
\bea 
\sum_{ \gamma \in { \rm BRT} ( 3,3 )  } \Dim_N ( \gamma ) d_{ 3,3 } ( \gamma )  = N^6
\eea
For $ l=1$, i.e. $ N=5$ the identity continues to  work, because $ \gamma_9$ is the only  excluded Brauer triple and the dimension  $\Dim ( V^{ U(N)}_{ \gamma_9 }   $, evaluated for $ N=5$ gives $0$. 

For $l=2$, at $N=4$, $ \gamma_6 , \gamma_8 , \gamma_9 $ are excluded. 
The Brauer triples in \\ $ {\rm BRT } ( 3,3 ) \setminus \Excl ( 3,3, 4  ) $  listed according to the decomposition in  \eqref{DisjBRTmnN} are 
\bea 
&& \Unmod ( 3,3,4) = \{    \gamma_1 , \gamma_2 , \gamma_3 , \gamma_4 , \gamma_7 , \gamma_{ 10 } , 
\gamma_{ 11} , \gamma_{ 12} , \gamma_{ 14} , \gamma_{15} \} \cr
&& \Mod (3,3,4) = \{ \gamma_{13} \} 
\eea 
 Using \eqref{leq2gam} and  \eqref{leq2delt}, 
\bea\label{modif13}  
 \widehat{d}_{ 3,3 ,  4 } ( \gamma_{13} )  =   \widehat{d}_{ 3,3 } ( \gamma_{13} )  -1  = 8  
\eea
The lists of $ \gamma_i  
\in \widehat{\BRT} ( 3, 3 )$ along with the $ \Dim V_{ \gamma_i }^{ U(4)} $ are 
\bea\label{DimU334} 
 \{ \gamma_1  , \gamma_2 , \gamma_3 ,  \gamma_4  , 
\gamma_{7}  , \gamma_{10} , \gamma_{11} , \gamma_{12} ,  \gamma_{13} ,\gamma_{14}  , \gamma_{15}   \} \cr 
 \Dim V_{ \gamma_i }^{ U(4) }  = \{ 300 , 256 , 35 , 256 , 175 , 35 , 84 , 45 , 20 , 15 , 1 \} 
\eea 
The list of $ \widehat{d}_{ 3,3,4} ( \gamma )  $, including the one modification 
\eqref{modif13}  for $ \gamma_{13} $ is 
\bea\label{brtdim334}  
\widehat {d}_{ 3,3} (\gamma_i ) =  \{ 1, 2, 1,  2, 4,  1,  9, 9, 9, 8, 18  , 6 \}
\eea
Using the data in \eqref{DimU334} \eqref{brtdim334} , we verify 
\bea 
 \sum_{ \gamma \in \widehat{ \rm BRT } ( 3,3 )  }  \widehat {d}_{ 3,3} (\gamma ) ~~ \Dim ( V^{ U(4)}_{\gamma  } ) = 4^6 = 4096. 
\eea

For $ l =3$, i.e. $ N = 3$, the excluded Brauer triples are $ \{ \gamma_3 , \gamma_5 , \gamma_6 , \gamma_7 ,  \gamma_8 , \gamma_9 \}  $.  Using \eqref{modleq3gams} and \eqref{modleq3delts} the irreps with modified dimensions are $\{  \gamma_{11} , \gamma_{ 12} , \gamma_{ 14} \} $ and the modifications are $ \delta_{ 11} = \delta_{ 12} = 2, \delta_{ 14} = 1 $. Thus  Brauer triples $ \gamma_i \in \widehat{\BRT} ( m,n ) $, with the associated  dimensions of $ U(3)$ and ${\widehat B}_{ 3,3, 3} $ representations are 
\bea 
&& \{ \gamma_1 , \gamma_2, \gamma_3, \gamma_{ 10} , \gamma_{11} , \gamma_{ 12} , \gamma_{14} , \gamma_{ 15} \} \cr 
&& \Dim ( V_{ \gamma_i  }^{ U(N)}  )  = \{ 64, 35, 35,  27, 10, 10,  8, 1 \} \cr 
&& \widehat {d}_{ 3,3 } ( \gamma_i )  = \{ 1,2, 2, 9 , 7, 7 , 17 , 6 \}   
\eea 
Using the above data 
\bea 
\sum_{ \gamma \in \widehat{\BRT} ( 3,3,3)  } \widehat {d}_{ 3,3,3  } ( \gamma ) \Dim ( V_{ \gamma  }^{ U(3)}  )  = 729 = 3^6 
\eea 

For $ l=4$, i.e $N=2$,  the Brauer  triples in $ \widehat{\BRT} ( 3,3 , 2)$ and associated $U(2)$ dimensions are 
\bea\label{DimU2}  
&& \{ \gamma_1, \gamma_{10}  , \gamma_{14}  , \gamma_{15  } \} \cr 
&& \Dim ( V_{ \gamma_i  }^{ U(2)}  )   = \{  7, 5 , 3 , 1   \} 
\eea 
From the equations  \eqref{leq4modgams} and \eqref{leq4moddelts2} the  triples with modified Brauer dimensions are $ \{  \gamma_{10}  , \gamma_{14}  , \gamma_{15  } \} \} $ and the modifications are  $  ( 4 , 9 , 1 ) $. 
The dimensions of irreps of $ \widehat{B}_{ 3,3, 2 } $ are thus 
\bea\label{DimBhat}  
\widehat {d}_{ 3,3 } ( \gamma_i ) = \{ 1,9-4 = 5 ,18-9 =9 , 6-1 = 5 \} 
\eea
Using the data in \eqref{DimU2} and \eqref{DimBhat} we verify 
\bea 
 \sum_{ \gamma \in \widehat{ \rm BRT } ( 3,3 , 2 )  }  \widehat {d}_{ 3,3,2 } (\gamma ) ~~ \Dim ( V^{ U(2)}_{\gamma  } ) = 7 + 25 + 27 +5 = 64 = 2^6 
\eea 

\section{Auxiliary figures for computing dimensions modifications}

In this appendix, we collect the RBDs in a more convenient form for analysis, for the case of $B_N(m,n)$ with $N=m+n-4$, discussed in Subsection~\ref{subSub}. The collected RBD diagrams are presented in such a way that they display only those connections in the Bratteli diagram which are necessary for computing the dimension modifications $\delta_{m,n,N=m+n-4}(\gamma'_i)$ from \eqref{leq4moddelts2}. We exclude the dimensions $\gamma_1,\gamma_2$, since their modifications can be easily inferred from Figure~\ref{Gen_mnk4}.
\begin{figure}[h!]
	\centering
	\includegraphics[scale=0.4]{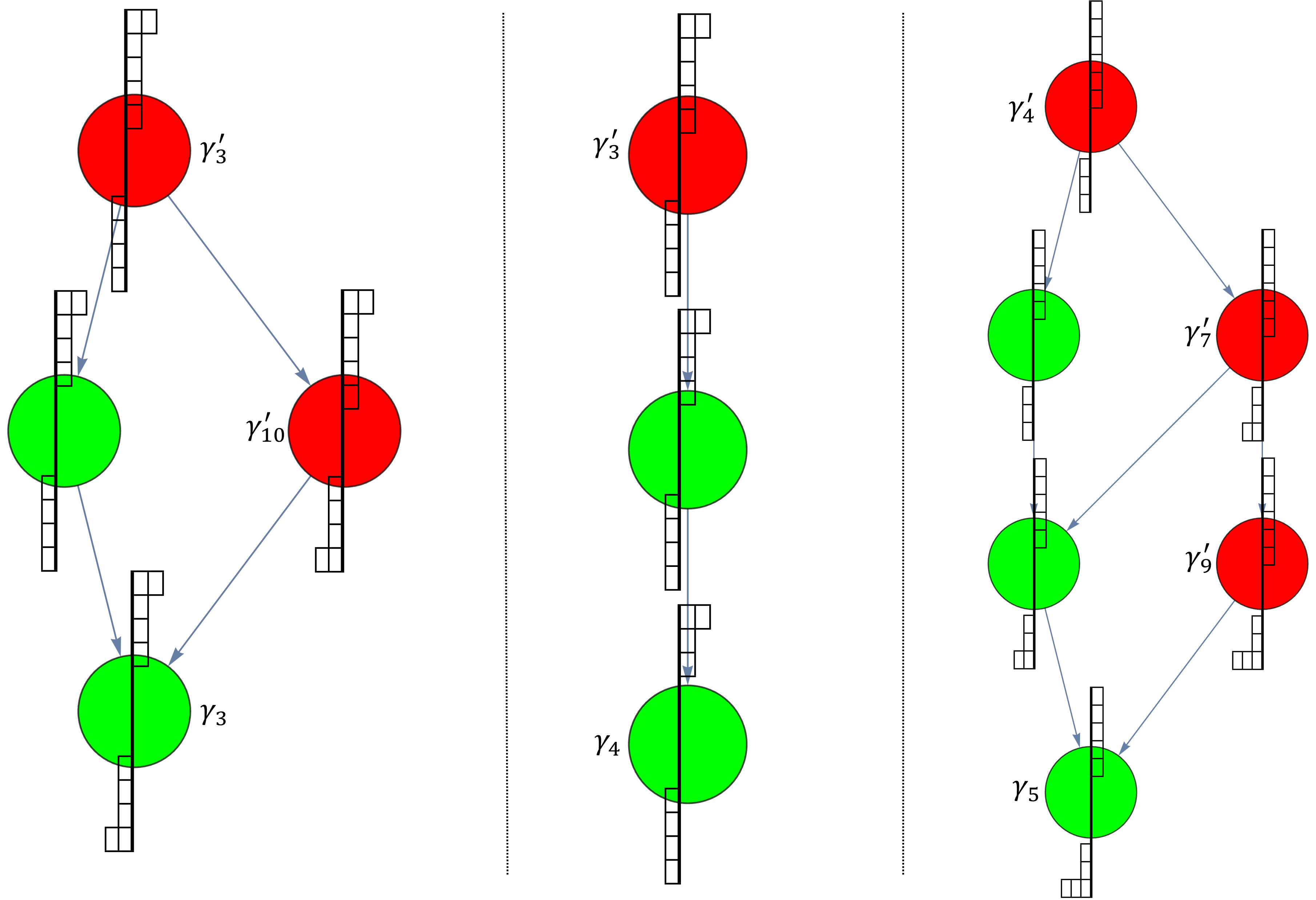}
	\caption{Parts of the RBD from Figure~\ref{Gen_mnk4} for computing dimension modifications $\gamma_{m,n,N=m+n-4}(\gamma_3),\gamma_{m,n,N=m+n-4}(\gamma_4),\gamma_{m,n,N=m+n-4}(\gamma_5)$ counting from the left.} 
	\label{AppFig1}
\end{figure} 

\begin{figure}[h!] 
	\centering
	\includegraphics[scale=0.4]{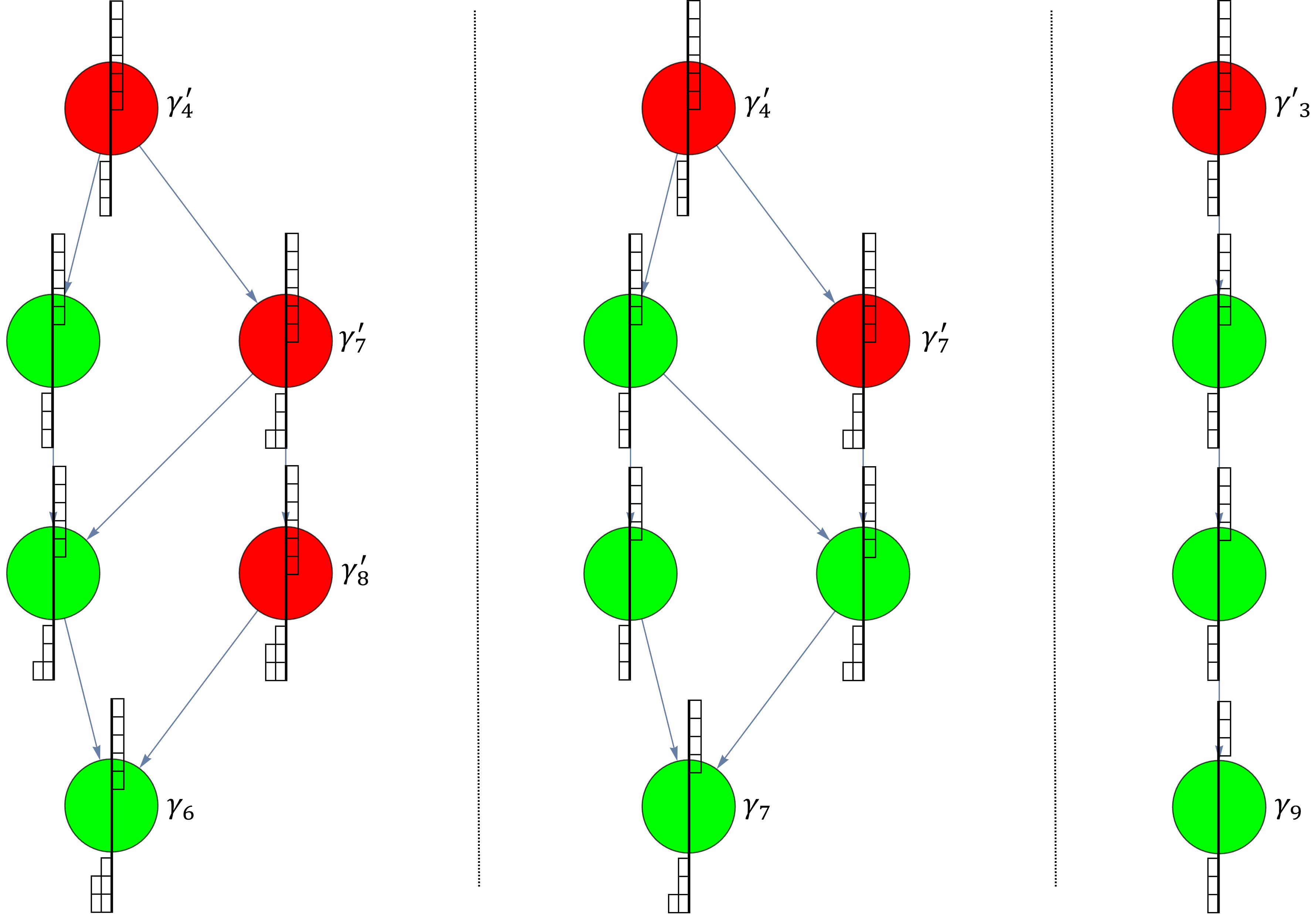}
	\caption{Parts of the RBD from Figure~\ref{Gen_mnk4} for computing dimension modifications $\gamma_{m,n,N=m+n-4}(\gamma_6),\gamma_{m,n,N=m+n-4}(\gamma_7),\gamma_{m,n,N=m+n-4}(\gamma_9)$ counting from the left.} 
	\label{AppFig2}
\end{figure}

\begin{figure}[h!] 
	\centering
	\includegraphics[scale=0.4]{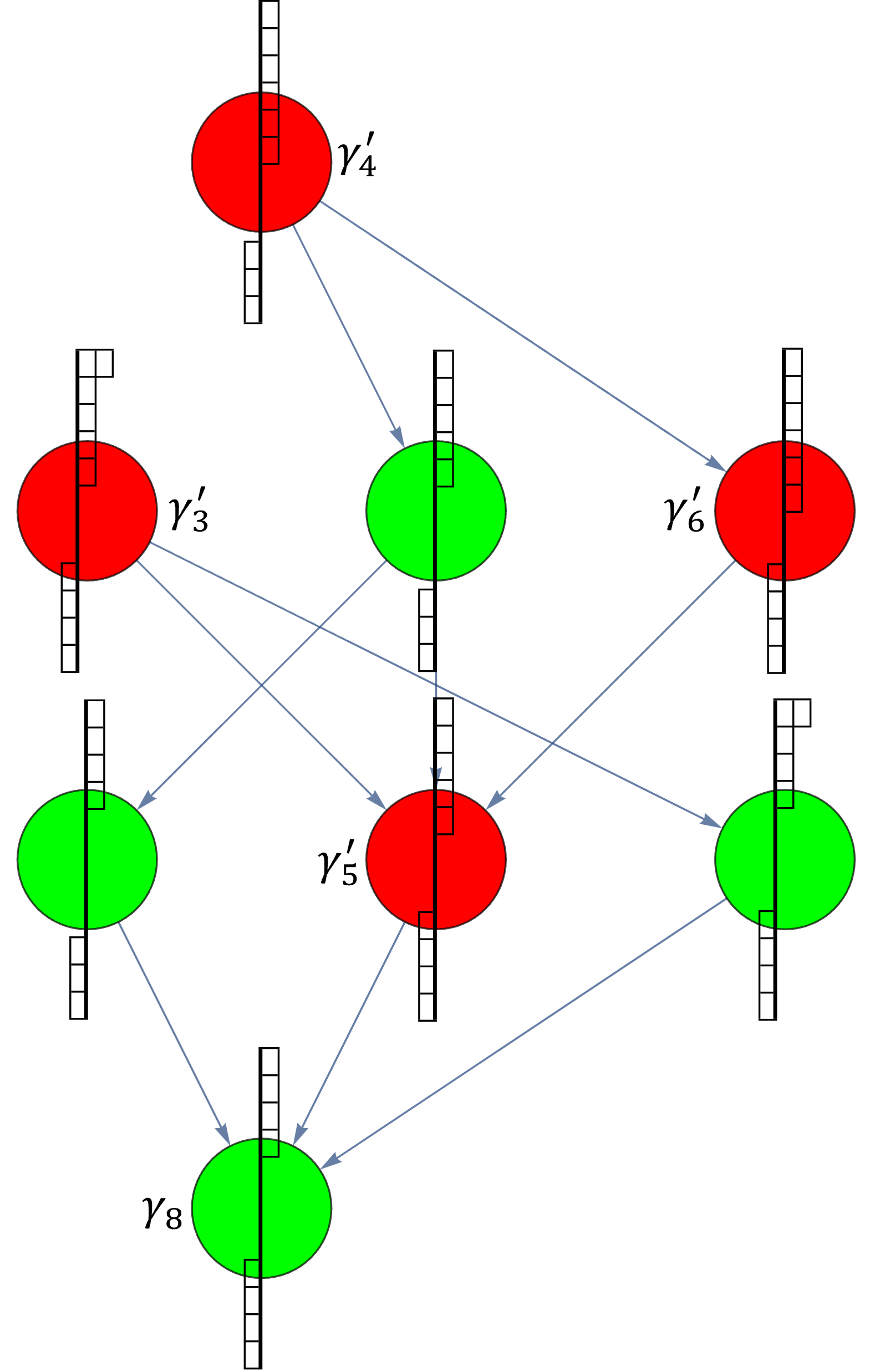}
	\caption{Part of the RBD from Figure~\ref{Gen_mnk4} for computing dimension modification $\gamma_{m,n,N=m+n-4}(\gamma_8)$.} 
	\label{AppFig3}
\end{figure}

\end{appendix}

\newpage

\bibliographystyle{unsrt}
\bibliography{bibliography}

\end{document}